\long\def\comment#1{}
\newfont{\bbb}{msbm10 scaled 700}
\newfont{\bb}{msbm10 scaled 1100}
\newcommand{\CC}{\mbox{\bb C}}
\newcommand{\PP}{\mbox{\bb P}}
\newcommand{\RR}{\mbox{\bb R}}
\newcommand{\ZZ}{\mbox{\bb Z}}
\newcommand{\FF}{\mbox{\bb F}}
\newcommand{\EE}{\mbox{\bb E}}
\newcommand{\av}{{\bf a}}
\newcommand{\bv}{{\bf b}}
\newcommand{\cv}{{\bf c}}
\newcommand{\dv}{{\bf d}}
\newcommand{\hv}{{\bf h}}
\newcommand{\qv}{{\bf q}}
\newcommand{\rv}{{\bf r}}
\newcommand{\tv}{{\bf t}}
\newcommand{\uv}{{\bf u}}
\newcommand{\wv}{{\bf w}}
\newcommand{\vv}{{\bf v}}
\newcommand{\xv}{{\bf x}}
\newcommand{\yv}{{\bf y}}
\newcommand{\zv}{{\bf z}}
\newcommand{\zerov}{{\bf 0}}
\newcommand{\Am}{{\bf A}}
\newcommand{\Bm}{{\bf B}}
\newcommand{\Dm}{{\bf D}}
\newcommand{\Fm}{{\bf F}}
\newcommand{\Gm}{{\bf G}}
\newcommand{\Hm}{{\bf H}}
\newcommand{\Id}{{\bf I}}
\newcommand{\Lm}{{\bf L}}
\newcommand{\Qm}{{\bf Q}}
\newcommand{\Tm}{{\bf T}}
\newcommand{\Um}{{\bf U}}
\newcommand{\Xm}{{\bf X}}
\newcommand{\Ym}{{\bf Y}}
\newcommand{\Zm}{{\bf Z}}
\newcommand{\Ac}{{\cal A}}
\newcommand{\Cc}{{\cal C}}
\newcommand{\Ic}{{\cal I}}
\newcommand{\Lc}{{\cal L}}
\newcommand{\Mc}{{\cal M}}
\newcommand{\Nc}{{\cal N}}
\newcommand{\Sc}{{\cal S}}
\newcommand{\Uc}{{\cal U}}
\newcommand{\Vc}{{\cal V}}
\newcommand{\lambdav}{\hbox{\boldmath$\lambda$}}
\newcommand{\nuv}{\hbox{\boldmath$\nu$}}
\newcommand{\muv}{\hbox{\boldmath$\mu$}}
\newcommand{\zetav}{\hbox{\boldmath$\zeta$}}
\newcommand{\diag}{{\hbox{diag}}}
\renewcommand{\det}{{\hbox{det}}}
\newcommand{\trace}{{\hbox{tr}}}
\newcommand{\SNR}{{\sf SNR}}
\renewcommand{\Re}{{\rm Re}}
\newcommand{\eqdef}{\stackrel{\Delta}{=}}
\newcommand{\herm}{{\sf H}}
\newcommand{\transp}{{\sf T}}
\newtheorem{theorem}{Theorem}
\newtheorem{lemma}{Lemma}
\newtheorem{proof}{Proof}
\newtheorem{remark}{Remark}
\newcommand{\argmin}{\operatornamewithlimits{argmin}}
\begin{document}

\title{Compute-and-Forward Strategies for Cooperative Distributed Antenna Systems}

\author{\authorblockN{Song-Nam~Hong,~\IEEEmembership{Student Member,~IEEE,}
        and~Giuseppe~Caire,~\IEEEmembership{Fellow,~IEEE}}
\authorblockA{Department of Electrical Engineering, University of Southern California, Los Angeles, CA, USA}
\authorblockA{(e-mail: \{songnamh, caire\}$@$usc.edu)}
\thanks{This research was supported in part by the KCC (Korea Communications Commission), Korea, under the R\&D program supervised by the KCA (Korea Communications Agency) (KCA-2011-11921-04001).}}

\maketitle

\newpage

\begin{abstract}
We study a distributed antenna system where $L$ antenna terminals (ATs) are connected to a Central Processor (CP) via digital error-free links
of finite capacity $R_{0}$, and serve $K$ user terminals (UTs).
This model has been widely investigated  both for the uplink (UTs to CP) and for the downlink (CP to UTs),
which are instances of the general multiple-access relay and broadcast relay networks.
We contribute to the subject in the following ways: 1) for the uplink, we apply the ``Compute and Forward'' (CoF) approach and
examine the corresponding system optimization at finite SNR; 2) For the downlink, we propose a novel precoding scheme nicknamed
``Reverse Compute and Forward" (RCoF); 3) In both cases, we present low-complexity versions of CoF and RCoF based on standard
scalar quantization at the receivers, that lead to discrete-input discrete-output symmetric memoryless channel models
for which near-optimal performance can be achieved by standard single-user linear coding; 4) For the case of large $R_0$, we propose a novel
``Integer Forcing Beamforming'' (IFB) scheme that generalizes the popular zero-forcing beamforming and achieves sum rate performance close to
the optimal Gaussian Dirty-Paper Coding.

The proposed uplink and downlink system optimization focuses specifically on the ATs and UTs selection problem. In both cases,
for a given set of transmitters, the goal consists of selecting a subset of the receivers such that the corresponding system matrix has full rank and
the sum rate is maximized.  We present low-complexity ATs and UTs selection schemes and demonstrate, through Monte Carlo
simulation in a realistic environment with  fading and shadowing, that the proposed schemes essentially eliminate the problem
of rank deficiency of the system matrix and greatly mitigate the non-integer
penalty affecting CoF/RCoF at high SNR. Comparison with other state-of-the art information theoretic schemes,
such as ``Quantize reMap and Forward'' for the uplink and ``Compressed Dirty Paper Coding'' for the downlink, show competitive
performance of the proposed approaches with significantly lower complexity.
\end{abstract}

\begin{IEEEkeywords}
Compute and Forward, Reverse Compute and Forward, Lattice Codes, Distributed Antenna Systems, Multicell Cooperation.
\end{IEEEkeywords}

\newpage

\section{Introduction}\label{sec:Intro}

\IEEEPARstart{A } cloud base station is a {\em Distributed Antenna System} (DAS)
formed by a number of simple antenna terminals (ATs) \cite{bell}, spatially distributed over a certain area, and connected to a central processor (CP)
via wired backhaul \cite{Flanagan,Lin,Marict}. Cloud base station architectures differ by the type of processing made at the
ATs and at the CP, and by the type of wired backhaul. At one extreme of this range of possibilities,
the ATs perform just analog filtering and (possibly) frequency conversion, the wired link are analog (e.g., radio over
fiber \cite{Niiho}), and the CP performs demodulation to baseband, A/D and D/A conversion, joint decoding (uplink)
and joint pre-coding (downlink). At the other extreme we have ``small cell'' architectures where the ATs perform encoding/decoding,
the wired links send data packets, and the CP performs high-level functions, such as scheduling, link-layer error control,
and macro-diversity packet selection.

\begin{figure}
\centerline{\includegraphics[width=11cm]{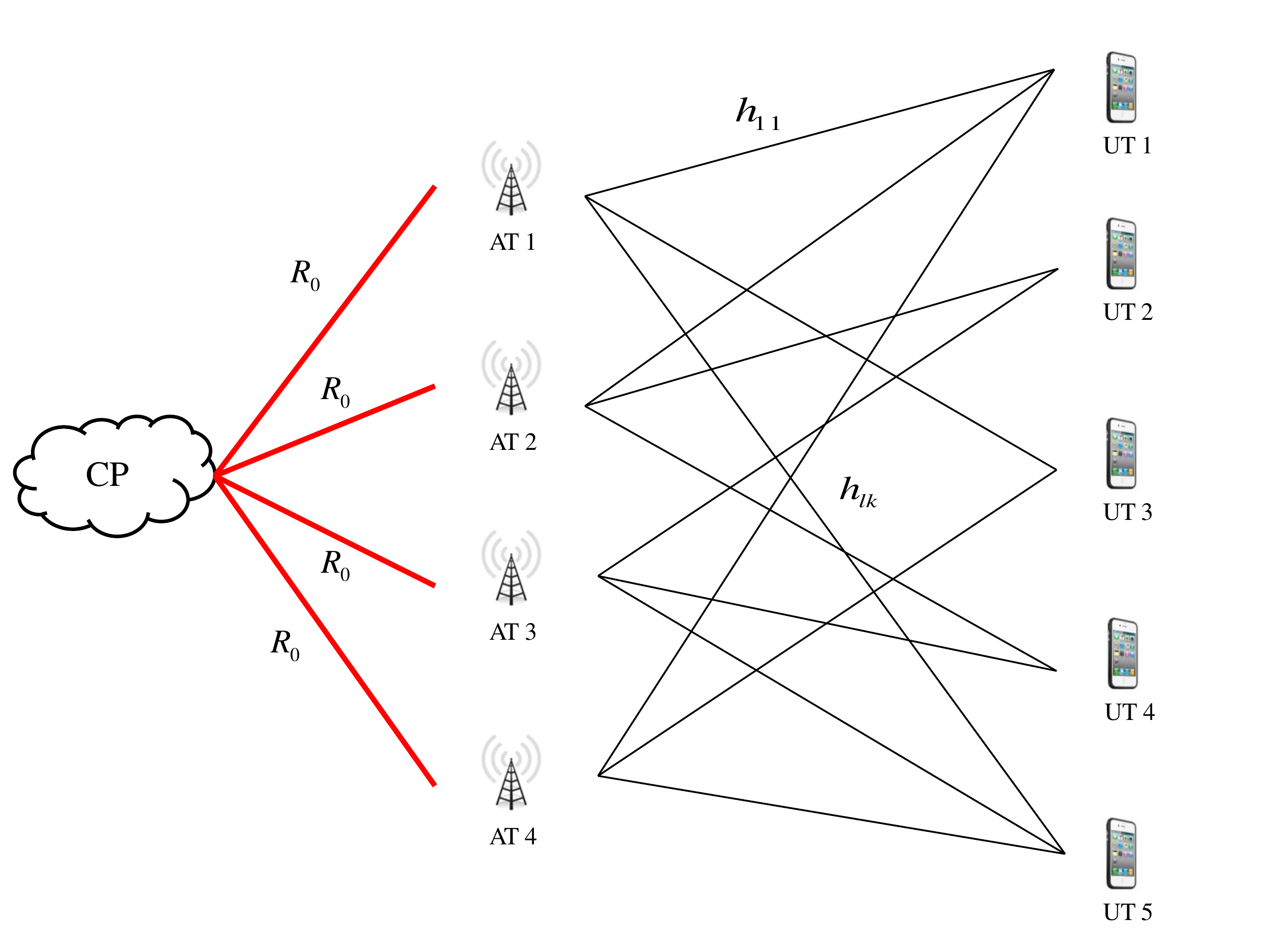}}
\caption{Distributed Antenna System with $5$ UTs and $4$ ATs (e.g., $K=5$ and $L=4$), and digital backhaul links of rate $R_{0}$.}
\label{DASModel}
\end{figure}

In this paper we focus on an intermediate  DAS architecture where the ATs perform partial decoding
(uplink) or precoding (downlink) and the backhaul is formed by digital links of fixed rate $R_0$.
In this case, the DAS uplink is an instance of a multi-source single destination layered relay network where the first layer is formed by the user terminals (UTs),
the second layer is formed by the ATs and the third layer contains just the CP (see Fig.~\ref{DASModel}). The corresponding DAS downlink  is an instance
of a broadcast layered relay network with independent messages.

In our model, analog forwarding from ATs to CP (uplink) or from CP to ATs (downlink) is not possible.
Hence, some form of quantization and forwarding is needed. A general approach to the uplink is based on the
{\em Quantize reMap and Forward} (QMF) paradigm of \cite{Avestimehr} (extended in \cite{Lim} where it is referred to as {\em Noisy Network Coding}).
In this case, the ATs perform vector quantization of their received signal at some rate
$R' \geq R_0$. They map the blocks of $nR'$ quantization bits into binary  words of length $nR_0$ by using
some randomized hashing function (notice that this corresponds to binning if $R' > R_0$), and let the CP perform joint decoding
of all UTs' messages based on the observation of all the (hashed) quantization bits.\footnote{The information-theoretic vector quantization
of \cite{Avestimehr}, \cite{Lim} can be replaced by scalar quantization with a fixed-gap performance degradation \cite{Ozgur}.}
It is known \cite{Avestimehr} that QMF achieves a rate region within a bounded gap from the cut-set outer bound \cite{Cover}, where the bound
depends only on the network size and on $R_0$, but it is independent of the channel coefficients and of the operating SNR.
For the broadcast-relay downlink, a general coding strategy has been proposed in \cite{Kannan} based on a combination of Marton coding
for the general broadcast channel \cite{Marton} and a coding scheme for deterministic linear relay networks, ``lifted'' to the Gaussian case.
Specializing the above general coding schemes to the the DAS considered here, for the uplink
we obtain the scheme based on quantization, binning and joint decoding of \cite{Sanderovich}, and for the downlink
we obtain the {\em Compressed Dirty-Paper Coding} (CDPC) scheme of \cite{Simeone}. From an implementation viewpoint, both QMF and CDPC are
not practical, the former requiring vector quantization at the ATs and {\em joint decoding} of all UT messages  based on the hashed quantization bits at the CP,
and the latter requiring {\em Dirty-Paper Coding} (notoriously difficult to implement in practice) and vector quantization at the CP.

A lower complexity alternative strategy for general relay networks was proposed in \cite{Nazer2011} and goes under the name of
{\em Compute and Forward} (CoF). CoF makes use of lattice codes, such that each relay can reliably decode a linear combination with
integer coefficient of the interfering codewords. Thank to the fact that lattices are modules over the ring of integers,
this linear combination translates directly into a linear combination of the information messages defined over a suitable finite field.
CoF can be immediately used for the DAS uplink. The performance of CoF was examined in \cite{Nazer2009} for the DAS uplink in the case of the
overly simplistic {\em Wyner model} \cite{Wyner}. It was shown that CoF yields competitive performance with respect to QMF for
practically realistic values of SNR.

This paper contributes to the subject in the following ways: 1) for the DAS uplink , we consider the CoF approach and
examine the corresponding system optimization at finite SNR for a general channel model including fading and shadowing (i.e., beyond the
nice and regular structure of the Wyner model); 2) For the downlink, we propose a novel precoding scheme nicknamed
{\em Reverse Compute and Forward} (RCoF); 3) For both uplink and downlink, we present low-complexity versions of CoF and RCoF
based on standard  scalar quantization at the receivers. These schemes are motivated by the observation that the main bottleneck of a digital receiver is the {\em Analog to Digital Conversion} (ADC), which is costly, power-hungry, and does not scale with Moore's law.
Rather the number of bit per second produced by an ADC is roughly a constant that depends on the power consumption \cite{Walden,Singh}.
Therefore, it makes sense to consider the ADC as part of the channel.  The proposed schemes, nicknamed {\em Quantized} CoF (QCoF) and {\em Quantized} RCoF (RQCoF),  lead to discrete-input discrete-output symmetric memoryless channel models naturally matched to standard single-user linear coding.
In fact, QCoF and RQCoF can be easily implemented using $q$-ary Low-Density Parity-Check (LDPC) codes \cite{songnamITW,Tunali,Feng}
with $q = p^2$ and $p$ prime, yielding essentially linear complexity in the code block
length and polynomial complexity in the system size (minimum between number of ATs and UTs).

The two major impairments that deteriorate the performance of DAS with CoF/RCoF are the non-integer penalty (i.e., the residual self-interference due to the fact that
the channel coefficients take on non-integer values in practice) and the rank-deficiency of the resulting system matrix over the
$q$-ary finite field. In fact, the wireless channel is characterized by fading and shadowing.
Hence, the channel matrix from ATs to UTs does not have any particularly nice structure, in contrast to the Wyner model case, where the channel matrix is tri-diagonal \cite{Nazer2009}. Thus, in a realistic setting, the system matrix resulting from CoF/RCoF may be rank deficient.  This is especially relevant when the size $q$
of the finite field is small (e.g., it is constrained by the resolution of the A/D and D/A conversion).
The proposed system optimization counters the above two problems by considering {\em power allocation},
{\em network decomposition} and {\em antenna selection} at the receivers (ATs selection in the uplink and
UTs selection in the downlink).  We show that in most practical cases the AT and UT selection problems can be optimally solved
by a simple greedy algorithm. Numerical results show that, in realistic networks with fading and
shadowing, the proposed optimization algorithms are very effective and essentially eliminate the problem of system matrix rank deficiency, even for small field size $q$.

A final novel contribution of this paper consists of the {\em Integer-Forcing Beamforming} (IFB) downlink scheme, targeted to the case where
$R_{0}$ is large, and therefore the DAS downlink  reduces to the well-known vector
Gaussian broadcast channel. In this case, a common and well-known low-complexity alternative to the capacity-achieving Gaussian DPC scheme
consists of Zero-Forcing Beamforming (ZFB), which achieves the same optimal multiplexing gain, at the cost of some
performance loss at finite SNR. IFB can be regarded both as a generalization of ZFB and
as the  dual of {\em Integer-Forcing Receiver} (IFR), proposed in \cite{Jiening} for the uplink multiuser MIMO case.
We demonstrate that IFB can achieve rates close to the information-theoretic optimal Gaussian  DPC, and can significantly outperform
conventional ZFB. This gain can be explained by the fact that IFB is able to reduce the power penalty of ZFB, due to
non-unitary beamforming.

The paper is organized as follows. In Section \ref{sec:Pre} we define the uplink and downlink DAS system model,
summarize some definitions on lattices and lattice coding, and review CoF.
In Section \ref{sec:DASCoF} we consider the application of CoF to the DAS uplink
and introduce the (novel) concept of {\em network decomposition} to improve the CoF sum rate.
Section \ref{sec:DASRCoF} considers the DAS downlink  and presents the RCoF scheme.
In Section \ref{sec:LowComp} we introduce the low-complexity ``quantized'' versions of CoF and RCoF.
Section \ref{sec:Wyner} focuses on the symmetric Wyner model and presents a simple {\em power allocation} strategy
to alleviate the impact of non-integer penalty.  In the case of a realistic DAS channel model
including fading, shadowing and pathloss,  a low-complexity greedy algorithm for ATs selection (uplink) and UTs
selection (downlink) is presented in Section \ref{sec:sch}. Finally, Section \ref{sec:IFB} considers the case of large backhaul rate
and presents the IFB scheme.  Some concluding remarks are provided in Section \ref{sec:conclusion}.

\section{Preliminaries}\label{sec:Pre}

In this section we provide some basic definitions and results that will be extensively used in the sequel.

\subsection{Distributed Antenna Systems: Channel Model}\label{subsec:Model}

We consider a DAS with $L$ ATs and $K$ UTs, each of which is equipped with a single antenna.
The ATs are connected to the CP  via digital backhaul links of rate $R_{0}$ (see Fig.~\ref{DASModel}).
A block of $n$ channel uses of the discrete-time complex baseband uplink channel is described by
\begin{equation} \label{channel-up}
\underline{\Ym} = \Hm \underline{\Xm} + \underline{\Zm},
\end{equation}
where we use ``underline'' to denote matrices whose horizontal dimension (column index) denotes  ``time'' and vertical dimension (row index) runs
across the antennas (UTs or ATs), the matrices
\[ \underline{\Xm} = \left [ \begin{array}{c} \underline{\xv}_1 \\ \vdots \\ \underline{\xv}_K \end{array} \right ] \;\;\; \mbox{and} \;\;\;
\underline{\Ym} = \left [ \begin{array}{c} \underline{\yv}_1 \\ \vdots \\ \underline{\yv}_L \end{array} \right ] \]
contain, arranged by rows, the UT codewords $\underline{\xv}_k \in \CC^{1 \times n}$ and
the AT channel output vectors $\underline{\yv}_\ell \in \CC^{1 \times n}$, for
$k = 1,\ldots, K$, and $\ell = 1,\ldots, L$, respectively.
The matrix $\underline{\Zm}$ contains i.i.d. Gaussian noise samples $\sim \Cc\Nc(0,1)$, and the matrix
$\Hm = [ \hv_1, \ldots, \hv_L]^\transp \in \CC^{L \times K}$ contains the channel coefficients,
assumed to be constant over the whole block of length $n$ and known to all nodes.

Similarly, a block of $n$ channel uses of the discrete-time complex baseband downlink channel is described by
$\underline{\tilde{\Ym}} = \tilde{\Hm} \underline{\tilde{\Xm}} + \underline{\tilde{\Zm}}$,
where we use ``tilde'' to denote downlink variables,  $\underline{\tilde{\Xm}} \in \CC^{L \times n}$
contains the AT codewords, $\underline{\tilde{\Ym}}, \underline{\tilde{\Zm}} \in \CC^{K \times n}$ contain the channel output and Gaussian noise
at the UT receivers, and $\tilde{\Hm} = [ \tilde{\hv}_1, \ldots, \tilde{\hv}_K]^\transp \in \CC^{K \times L}$  is the downlink channel matrix.

Since ATs and UTs are separated in space and powered independently, we assume a symmetric per-antenna power
constraint for both the uplink and the downlink, given by
$\frac{1}{n} \EE[\|\underline{\xv}_{k}\|^2]  \leq \SNR$ for all $k$ and by
$\frac{1}{n} \EE[\|\underline{\tilde{\xv}}_{\ell}\|^2]  \leq \SNR$ for all $\ell$, respectively.

\subsection{Nested Lattice Codes}\label{subsec:NLC}

Let $\ZZ[j]$ be the ring of Gaussian integers and $p$ be a Gaussian prime.~\footnote{A Gaussian integer is called a {\em Gaussian prime} if it is a prime in $\ZZ[j]$. A Gaussian prime $a+jb$ satisfies exactly one of the following conditions \cite{Stillwell}: 1) $|a|=|b|=1$; 2) one of $a,b$ is zero and the other is a prime number in $\ZZ$ of the form $4n+3$ (with $n$ a nonnegative integer); 3) both of $a,b$ are nonzero and $a^2 + b^2$ is a prime number in $\ZZ$ of the form $4n+1$. In this paper, $p$ is assumed to be a prime number congruent to $3$ modulo $4$, which is an {\em integer} Gaussian prime according to condition 2)..}
Let $\oplus$ denote the addition over $\FF_{p^2}$, and let $g: \FF_{p^2} \rightarrow \CC$ be the natural mapping of
$\FF_{p^2}$ onto $\{a+jb: a,b \in \ZZ_{p}\} \subset \CC$.
We recall the nested lattice code construction given in \cite{Nazer2011}.
Let $\Lambda = \{ \underline{\lambdav} = \underline{\zv} \Tm : \underline{\zv} \in \ZZ^n[j]\}$ be a lattice in $\CC^n$,
with full-rank generator matrix $\Tm \in \CC^{n \times n}$.
Let $\Cc  = \{ \underline{\cv} = \underline{\wv} \Gm : \wv \in \FF_{p^2}^r \}$ denote a linear code over $\FF_{p^2}$ with block length $n$ and dimension $r$, with generator matrix $\Gm$. The lattice $\Lambda_1$ is defined through ``construction A'' (see \cite{Erez2004} and references therein) as
\begin{equation} \label{construction-A}
\Lambda_1 = p^{-1} g(\Cc) \Tm + \Lambda,
\end{equation}
where $g(\Cc)$ is the image of $\Cc$ under the mapping $g$ (applied component-wise).
It follows that $\Lambda \subseteq \Lambda_1 \subseteq p^{-1} \Lambda$ is a chain of nested lattices, such that
$|\Lambda_1/\Lambda| = p^{2r}$ and $|p^{-1} \Lambda/\Lambda_1| = p^{2(n - r)}$.

For a lattice $\Lambda$ and $\underline{\rv} \in \CC^n$,
we define the lattice quantizer $Q_{\Lambda}(\underline{\rv}) = \argmin_{\underline{\lambdav} \in \Lambda}\|\underline{\rv} - \underline{\lambdav} \|^2$,
the Voronoi region $\Vc_\Lambda = \{\underline{\rv} \in \CC^{n}: Q_{\Lambda}(\underline{\rv}) = \underline{\zerov}\}$
and $[\underline{\rv}] \mod \Lambda = \underline{\rv} - Q_{\Lambda}(\underline{\rv})$.
For $\Lambda$ and $\Lambda_1$ given above, we define the lattice code
$\Lc = \Lambda_{1} \cap \Vc_\Lambda$ with rate $R = \frac{1}{n} \log |\Lc| = \frac{2r}{n}\log{p}$.
Construction A provides a {\em natural labeling}
of the codewords of $\Lc$ by the information messages $\underline{\wv} \in \FF^r_{p^2}$.  Notice that the set $p^{-1} g(\Cc)\Tm$ is a {\em system of coset representatives}
of the cosets  of $\Lambda$ in $\Lambda_1$. Hence, the natural labeling function  $f : \FF_2^r \rightarrow \Lc$
is defined by $f(\underline{\wv}) = p^{-1} g(\underline{\wv} \Gm)\Tm \mod \Lambda$.

\subsection{Compute and Forward} \label{subsec:CoF}

We recall here the CoF scheme of \cite{Nazer2011}.
Consider the $K$-user Gaussian multiple access channel (G-MAC) defined by
\begin{equation} \label{GMAC}
\underline{\yv} = \sum_{k = 1}^{K} h_{k} \underline{\xv}_{k} + \underline{\zv},
\end{equation}
where $\hv = [h_{1},\ldots,h_{K}]^\transp$, and the elements of $\underline{\zv}$ are i.i.d. $\sim \Cc\Nc(0,1)$.
All users make use of the same nested lattice codebook $\Lc = \Lambda_1 \cap \Vc_\Lambda$,
where $\Lambda$ has {\em second moment} $\sigma_\Lambda^2 \eqdef \frac{1}{n \mbox{Vol}(\Vc)} \int_{\Vc} \| \underline{\rv} \|^2 d\underline{\rv} = \SNR$.
Each user $k$ encodes its information message $\underline{\wv}_{k} \in \FF_{p^2}^{r}$ into the corresponding codeword $\underline{\tv}_{k} = f(\underline{\wv}_k)$
and produces its channel input according to
\begin{equation}\label{eq:channelinput}
\underline{\xv}_{k} = \left [\underline{\tv}_{k} + \underline{\dv}_{k} \right ] \mod \Lambda,
\end{equation}
where the {\em dithering sequences} $\underline{\dv}_{k}$'s are mutually independent across the users, uniformly distributed
over $\Vc_\Lambda$, and known to the receiver. Notice that, as in many other applications of nested lattice coding and lattice decoding (e.g.,\cite{Zamir,Erez2004,Damen}), random dithering is instrumental for the information theoretic proofs, but
a deterministic dithering sequence that scrambles the input and makes it zero-mean and uniform over the shaping region
can be effectively used in practice, without need of common randomness.  The decoder's goal is to recover a
linear combination $\underline{\vv} = [\sum_{k =1}^{K} a_{k} \underline{\tv}_{k}] \mod \Lambda$ with {\em integer coefficient vector}
$\av = [a_1,\ldots, a_K]^\transp \in \ZZ^K[j]$.
Since $\Lambda_1$ is a $\ZZ[j]$-module (closed under linear combinations with Gaussian integer coefficients),
then $\underline{\vv} \in \Lc$.  Letting $\hat{\underline{\vv}}$ be decoded codeword (for some decoding function which
in general depends on $\hv$ and $\av$), we say that a computation rate $R$ is achievable for this setting if there exists sequences
of lattice codes  $\Lc$ of rate $R$ and increasing block length $n$, such that the decoding error probability satisfies
$\lim_{n \rightarrow \infty} \PP(\hat{\underline{\vv}} \neq \underline{\vv} ) = 0$.

In the scheme of \cite{Nazer2011}, the receiver computes
\begin{eqnarray}
\hat{\underline{\yv}} & = & \left [\alpha \underline{\yv} - \sum_{k=1}^{K}a_{k} \underline{\dv}_{k} \right ] \mod \Lambda \nonumber \\
&=& \left [\underline{\vv} + \underline{\zv}_{\mbox{\tiny{eff}}} (\hv, \av,\alpha) \right ] \mod \Lambda, \label{eq:received}
\end{eqnarray}
where
\begin{equation}
\underline{\zv}_{\mbox{\tiny{eff}}} (\hv, \av, \alpha) = \sum_{k=1}^{K} (\alpha h_{k} - a_{k}) \underline{\xv}_{k} + \alpha \underline{\zv} \label{eq:effnoise}
\end{equation}
denotes the {\em effective noise}, including the non-integer self-interference (due to the fact that $\alpha h_k \notin \ZZ[j]$ in general)
and the additive Gaussian noise term.  The scaling, dither removal and modulo-$\Lambda$ operation in (\ref{eq:received})
is referred to as the {\em CoF receiver mapping} in the following.  By minimizing the variance of $\underline{\zv}_{\mbox{\tiny{eff}}}(\hv,\av,\alpha)$
with respect to $\alpha$, we obtain
\begin{eqnarray}
\sigma^{2} (\hv,\av) &=& \min_{\alpha} \sigma_{z_{\mbox{\tiny{eff}}}}^{2}(\hv,\av,\alpha) \nonumber \\
&=& \SNR\Big(\|\av\|^2 - \frac{\SNR|\hv^{\herm}\av|^2}{1+\SNR\|\hv\|^2}\Big) \nonumber \\
&\stackrel{(a)}{=}& \av^{\herm}(\SNR^{-1}\Id+\hv\hv^{\herm})^{-1}\av\label{eq:effnoise1}
\end{eqnarray}
where $(a)$ follows from the matrix inversion lemma \cite{Boyd}.
Since $\alpha$ is uniquely determined by $\hv$ and $\av$, it will be omitted in the following,
for the sake of notation simplicity.
 From \cite{Nazer2011}, we know that by applying lattice decoding to $\hat{\underline{\yv}}$ given in (\ref{eq:received}) the following computation rate is achievable:
\begin{equation}\label{eq:CoFrate}
R(\hv,\av,\SNR) = \log^{+}\Big(\frac{\SNR}{\av^{\herm}(\SNR^{-1}\Id+\hv\hv^{\herm})^{-1}\av}\Big),
\end{equation}
where $\log^{+}(x) \triangleq \max\{\log(x),0\}$.

The computation rate $R(\hv,\av,\SNR)$ can be maximized by minimizing $\sigma^{2} (\hv,\av)$ with respect to  $\av$.
The quadratic form (\ref{eq:effnoise1}) is positive definite for any $\SNR < \infty$, since the matrix
$(\SNR^{-1}\Id + \hv\hv^{\herm})^{-1}$ has eigenvalues
\begin{equation}\label{eq:eigen}
\lambda_{i} = \left \{
	\begin{array}{ll}
	\SNR/(1+\|\hv\|^2 \SNR) & \hbox{$i = 1$}\\
	\SNR & \hbox{$i> 1$}
	\end{array}
\right.
\end{equation}
By Cholesky decomposition, there exists a lower triangular matrix $\Lm$ such that $\sigma^{2} (\hv,\av)=\|\Lm^{\herm}\av\|^2$.
It follows that the problem of minimizing $\sigma^{2}(\hv,\av)$ over $\av \in \ZZ^{K}[j]$
is equivalent to finding the ''shortest lattice point" of the $L$-dimensional lattice generated by $\Lm^{\herm}$.
This can be efficiently obtained using the complex LLL algorithm \cite{LLL,Napias}\, possibly
followed by Phost or Schnorr-Euchner enumeration (see \cite{Damen}) of the non-zero lattice points in a sphere centered at the origin,
with radius equal to the shortest vector found by complex LLL.
Algorithm $1$ summarizes the procedures used in this paper to find the optimal
integer vector $\av \in \ZZ^{K}[j]$.

\begin{algorithm}\label{alg:lattice}
\caption{ Find the optimal integer coefficients}
\begin{description}
\item[1.] Take $\Fm = \Lm^{\herm}$
\item[2.] Find the reduced basis matrix $\Fm_{\texttt{\textrm{red}}}$, using the (complex) LLL algorithm
\item[3.] Take the column of $\Fm_{\texttt{\textrm{red}}}$ with minimum Euclidean norm, call it $\bv^{\star}$
\item[4.] Let $\rho=\|\bv^{\star}\| + \epsilon$ for some very small $\epsilon > 0$
\item[5.] Use Phost or Schnorr-Euchner enumeration  with $\Fm_{\texttt{\textrm{red}}}$ to find all lattice points in the sphere centered at $0$, with
    radius $\rho$.
\end{description}
Notice that this algorithm will find for sure the point $0$ (discarded), the point $\bv^{\star}$, and possibly some shorter non-zero points.
\end{algorithm}

\begin{figure}
\centerline{\includegraphics[width=11cm]{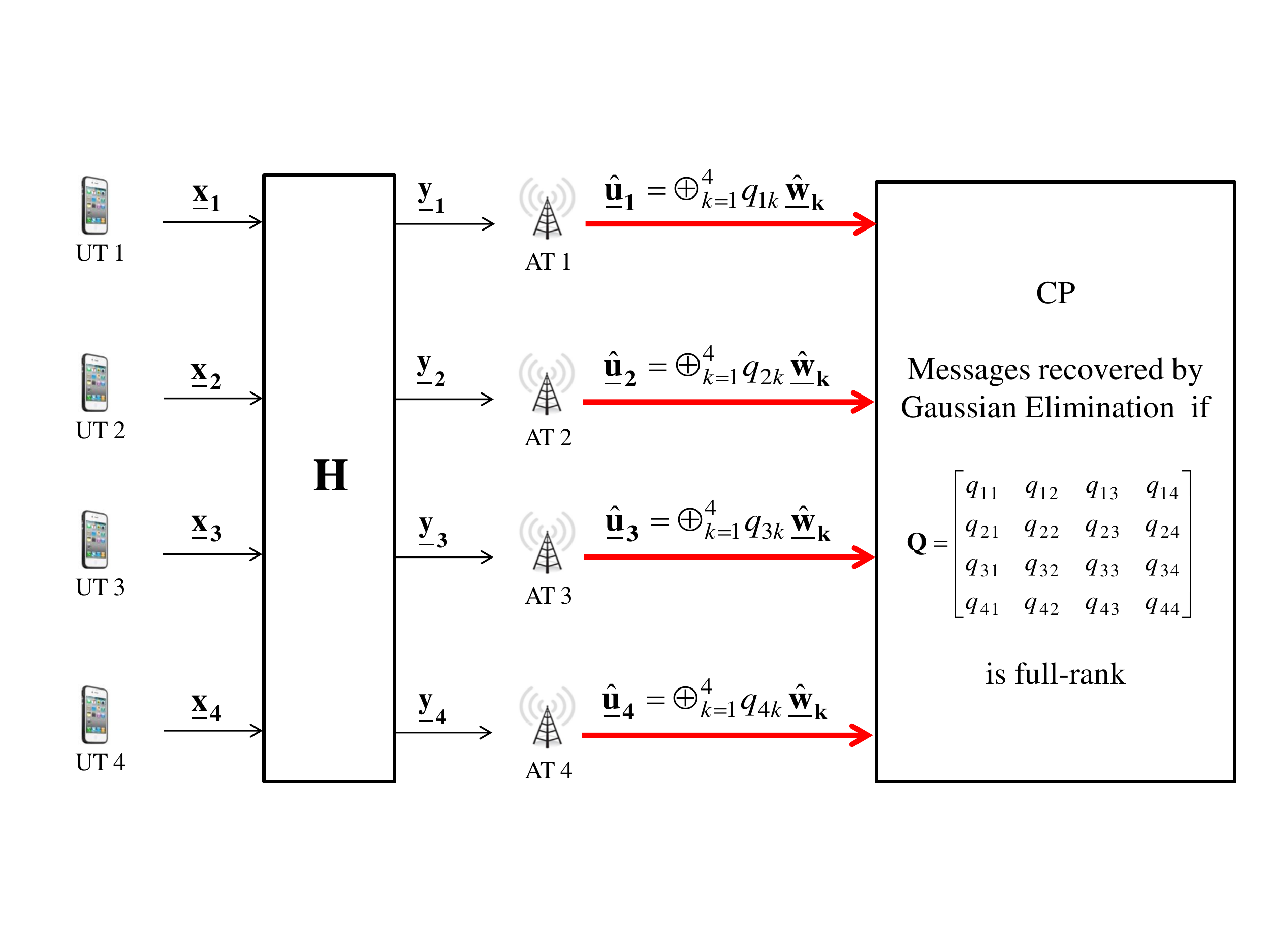}}
\caption{DAS Uplink Architecture using Compute and Forward: $L=4$ and $K=4$.}
\label{CoFDAS}
\end{figure}

\section{Compute and Forward for the DAS Uplink}\label{sec:DASCoF}

In this section we apply CoF to the DAS uplink and further improve its sum rate by introducing the idea of {\em network decomposition}.
The scheme is illustrated in Fig.~\ref{CoFDAS}, where CoF is used at each AT receiver.
For simplicity of exposition, we restrict to consider the same number $K = L$ of UTs and ATs. The notation, however,
applies also to the case of $K < L$ addressed in Section \ref{sec:sch}, when considering AT selection.
The UTs make use of the same lattice code $\Lc$ of rate $R$, and produce their channel input $\underline{\xv}_k$, $k = 1, \ldots, K$,
according to (\ref{eq:channelinput}).  Each AT $\ell$ decodes the codeword linear combination
$\underline{\vv}_{\ell} = \left [\sum_{k=1}^{K} a_{\ell,k}\underline{\tv}_{k} \right ] \mod \Lambda$,
for a target integer vector $\av_{\ell} = (a_{\ell,1},\ldots,a_{\ell,K})^\transp \in \ZZ^{K}[j]$ determined according to
Algorithm 1, independently of the other ATs.   If $R \leq R(\hv_{\ell},\av_{\ell},\SNR)$,  where the latter denotes
the computation rate of the G-MAC formed by the UTs and the $\ell$-th AT,
taking on the form given in (\ref{eq:CoFrate}),  the decoding error probability at AT $\ell$ can be made as small as desired.
Letting $\underline{\uv}_\ell = f^{-1}(\underline{\vv}_\ell)$ denote the information message corresponding to the target decoded codeword
$\underline{\vv}_\ell$, the code linearity over $\FF_{p^2}$ and the $\ZZ[j]$-module structure of $\Lambda_1$ yield
\begin{equation}\label{eq:isomorp}
\underline{\uv}_{\ell} = \bigoplus_{k = 1}^{K} q_{\ell,k} \underline{\wv}_{k},
\end{equation}
where $q_{\ell,k} = g^{-1}([a_{\ell,k}] \mod p \ZZ[j])$. After decoding, each AT $\ell$ forwards the corresponding information
message $\hat{\underline{\uv}}_{\ell}$ to the CP  via wired links of fixed $R_{0}$.
This can be done if $R \leq R_{0}$.  The CP collects all the messages $\hat{\underline{\uv}}_\ell$ for $\ell = 1,\ldots, L$ and forms the system of linear
equations over $\FF_{p^2}$
\begin{equation}
\left [ \begin{array}{c}
\hat{\underline{\uv}}_1 \\ \vdots \\ \hat{\underline{\uv}}_L \end{array} \right ] =
\Qm  \left [ \begin{array}{c}  \hat{\underline{\wv}}_1 \\ \vdots \\ \hat{\underline{\wv}}_K \end{array} \right ],
\end{equation}
where we define $\Am = [\av_1, \ldots, \av_L]^\transp$ and the system matrix
$\Qm = [\qv_{1},\ldots,\qv_{L}]^{\transp} = g^{-1}\left ( [\Am] \mod p\ZZ[j]\right )$.
Provided that $\Qm$ has rank $K$ over $\FF_{p^2}$, the CP obtains the decoded messages $\{\hat{\underline{\wv}}_k\}$
by Gaussian elimination. Assuming this full-rank condition and $R < R(\hv_{\ell},\av_{\ell},\SNR)$ for all $\ell = 1,\ldots, L$,
the error probability $\PP(\hat{\underline{\wv}}_k \neq \underline{\wv}_k \;\; \mbox{for some} \; k)$ can be made arbitrarily small for sufficiently
large $n$.  The resulting achievable rate per user  is given by \cite{Nazer2009}:
\begin{eqnarray}\label{eq:sumrateCoF}
R = \min\{R_{0}, \min_{\ell} \{R(\hv_{\ell},\av_{\ell},\SNR)\} \}.
\end{eqnarray}

\begin{remark}
Since each AT $\ell$ determines  its coefficients vector $\av_{\ell}$ in a decentralized way, by applying
Algorithm 1 independently of the other ATs' channel coefficients, the resulting system matrix $\Qm$
may be rank-deficient. If $K < L$, requiring that all ATs can decode reliably is unnecessarily restrictive: it is sufficient to
select a subset of $K$ ATs which can decode reliably and whose coefficients form a full-rank system matrix.
This selection problem will be addressed in Section \ref{sec:sch}. \hfill $\lozenge$
\end{remark}

The sum rate of CoF-based DAS can be improved by {\em network decomposition} with respect to the system matrix $\Qm$.
Although the elements of $\Hm$ are non-zero,  the corresponding $\Qm$ may include zeros, since some elements of the vectors $\av_{\ell}$ may be
zero modulo $p\ZZ[j]$.  Because of the presence of zero elements, the system matrix $\Qm$ may be put in block diagonal form by column
and row permutations.
If the permuted system matrix has $S$ diagonal blocks,
the corresponding network graph decomposes into $S$ independent subnetworks
and CoF can be applied separately to each subnetwork such that taking the minimum of the computation rates over the
subnetworks is not needed. Hence, the sum rate is given by the sum (over the subnetworks) of the sum rates of each network component.
In turns, the common UT rate of each indecomposable subnetwork takes on the form (\ref{eq:sumrateCoF}).
For given $\Qm$, the disjoint subnetwork components can be found efficiently using depth-first or breadth-first search \cite{Ahuja}.
This also essentially reduces the computation complexity of Gaussian elimination, which is performed independently
for each subnetwork. We assume that, up to a suitable permutation of rows and columns,
$\Qm$ can be put in block diagonal form with diagonal blocks $\Qm(\Ac_{s},\Uc_{s})$ for $s  = 1, \ldots, S$, where we
use the following notation:  for a matrix $\Qm$ with rows index set $[1:L]$ and column index set $[1:K]$,
$\Qm(\Ac, \Uc)$ denotes the submatrix  obtained by selecting the rows in $\Ac \subseteq [1:L]$ and the columns in
$\Uc \subseteq [1:K]$. The following results are immediate:

\begin{lemma}\label{lem:decomp} If $\Qm$ is a full-rank $K \times K$ matrix, the diagonal blocks
$\Qm(\Ac_{s},\Uc_{s})$ are full-rank square matrices for every $s$. \hfill \IEEEQED
\end{lemma}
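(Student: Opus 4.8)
The plan is to reduce everything to the rank-additivity of block-diagonal matrices over the field $\FF_{p^2}$. First I would record the setup established just before the statement: there are permutations of the rows and of the columns of $\Qm$ that bring it to block-diagonal form, where $\{\Ac_s\}_{s=1}^{S}$ partitions the row index set and $\{\Uc_s\}_{s=1}^{S}$ partitions the column index set (each of cardinality $K$, since $\Qm$ is $K \times K$), and the off-diagonal blocks $\Qm(\Ac_s,\Uc_t)$ with $s \neq t$ vanish identically. Because permuting rows and columns amounts to left- and right-multiplication by permutation matrices, it preserves rank, so the permuted matrix still has rank $K$; I may therefore work with the block-diagonal form throughout.

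The core fact I would invoke is that the rank of a block-diagonal matrix equals the sum of the ranks of its diagonal blocks,
\[
\mbox{rank}(\Qm) = \sum_{s=1}^{S} \mbox{rank}\big(\Qm(\Ac_s,\Uc_s)\big) .
\]
One inequality holds because the column spaces of distinct diagonal blocks lie in disjoint coordinate subspaces and so are independent; the reverse is immediate since any linear dependence among the full rows restricts to a dependence inside each block. I would combine this with the trivial term-wise bound $\mbox{rank}(\Qm(\Ac_s,\Uc_s)) \le \min\{|\Ac_s|,|\Uc_s|\}$ and with the counting identities $\sum_s |\Ac_s| = \sum_s |\Uc_s| = K$ coming from the fact that $\{\Ac_s\}$ and $\{\Uc_s\}$ are partitions.

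Putting these together, the full-rank hypothesis gives $\sum_s \mbox{rank}(\Qm(\Ac_s,\Uc_s)) = K = \sum_s |\Ac_s|$, while term-by-term $\mbox{rank}(\Qm(\Ac_s,\Uc_s)) \le |\Ac_s|$. A sum of nonnegative quantities can equal the sum of its term-wise upper bounds only if each term attains its bound, so $\mbox{rank}(\Qm(\Ac_s,\Uc_s)) = |\Ac_s|$ for every $s$; repeating the argument with the column bound yields $\mbox{rank}(\Qm(\Ac_s,\Uc_s)) = |\Uc_s|$ as well. Hence $|\Ac_s| = |\Uc_s|$, so each diagonal block is square, and its rank equals its dimension, i.e.\ it is full-rank.

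The one delicate point, and the step I would be most careful to spell out, is that block-diagonal form does not by itself force the diagonal blocks to be square: it is precisely the full-rank hypothesis that excludes rectangular blocks, since a block with more rows than columns would contribute strictly fewer than $|\Ac_s|$ to the rank sum and thereby force a rank deficiency. Everything else is elementary linear algebra over a field and is insensitive to the field being $\FF_{p^2}$ rather than $\RR$ or $\CC$.
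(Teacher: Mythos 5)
Your proof is correct: the rank-additivity of a block-diagonal matrix combined with the term-wise bounds $\mathrm{rank}(\Qm(\Ac_s,\Uc_s))\le\min\{|\Ac_s|,|\Uc_s|\}$ and the partition identities forces each block to be square and full-rank, and you rightly flag that squareness is a consequence of the full-rank hypothesis rather than of the block-diagonal form alone. The paper states this lemma as ``immediate'' and gives no proof, so your argument is simply the natural fleshing-out of the intended elementary linear-algebra reasoning.
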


\begin{theorem}\label{thm:CoF}
CoF with network decomposition, applied to a DAS uplink with channel matrix $\Hm=[\hv_{1},\ldots,\hv_{K}]^{\transp} \in \CC^{K \times K}$, achieves the sum rate
\begin{eqnarray}
R_{\mbox{\tiny{CoF}}}(\Hm,\Am) &=& \sum_{s=1}^{S} |\Ac_{s}| \min\left \{R_{0},\min \{R(\hv_{k},\av_{k},\SNR): k \in \Ac_{s} \} \right \},
\end{eqnarray}
where $\Am=[\av_{1},\ldots,\av_{K}]^{\transp}$ is the matrix of CoF integer coefficients, and where the system matrix
$\Qm=g^{-1}([\Am] \mod p\ZZ[j])$ has full rank $K$ over $\FF_{p^2}$ and can be put in block diagonal form by rows
and columns permutations, with diagonal blocks  $\Qm(\Ac_{s},\Uc_{s})$ for $s=1,\ldots,S$. \hfill \IEEEQED
\end{theorem}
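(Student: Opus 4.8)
The plan is to reduce the full block-diagonal system to $S$ independent single-component DAS uplinks, invoke the single-network achievability (\ref{eq:sumrateCoF}) on each, and add the resulting rates. First I would permute rows and columns so that $\Qm$ is block diagonal with blocks $\Qm(\Ac_s,\Uc_s)$, $s=1,\ldots,S$; by Lemma~\ref{lem:decomp} each block is a full-rank \emph{square} matrix, so in particular $|\Ac_s|=|\Uc_s|$ and $\Qm(\Ac_s,\Uc_s)$ is invertible over $\FF_{p^2}$. I would then assign to subnetwork $s$ its own nested lattice codebook $\Lc_s=\Lambda_1^{(s)}\cap\Vc_\Lambda$ sharing the common coarse (shaping) lattice $\Lambda$ of second moment $\sigma_\Lambda^2=\SNR$, and set its common per-user rate to $R_s=\min\{R_0,\min\{R(\hv_k,\av_k,\SNR):k\in\Ac_s\}\}$, so that different subnetworks may operate at \emph{different} rates.

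The first substantive step is to show that, although every AT physically hears every UT, the combination decoded at AT $\ell\in\Ac_s$ involves \emph{only} the in-block UTs $\Uc_s$. This rests on the lattice identity $p\Lambda_1^{(s)}\subseteq\Lambda$, which follows from $g(\Cc)\Tm\subseteq\Lambda$ in Construction~A: if $a\equiv 0 \mod p\ZZ[j]$ and $\underline{\tv}\in\Lc_s$, then $a\underline{\tv}\in\Lambda$ and hence $[a\underline{\tv}] \mod \Lambda=\underline{\zerov}$. Consequently, for $\ell\in\Ac_s$ the off-block coefficients (those $a_{\ell,k}$ with $k\notin\Uc_s$, which by the very construction of $\Qm$ are exactly the ones satisfying $a_{\ell,k}\equiv 0 \mod p\ZZ[j]$) drop out of $\underline{\vv}_\ell=[\sum_k a_{\ell,k}\underline{\tv}_k] \mod \Lambda$, leaving $\underline{\vv}_\ell=[\sum_{k\in\Uc_s}a_{\ell,k}\underline{\tv}_k] \mod \Lambda\in\Lc_s$.

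Next I would verify that AT $\ell\in\Ac_s$ decodes this in-block combination reliably whenever $R_s<R(\hv_\ell,\av_\ell,\SNR)$, exactly as in the single-network case. The CoF receiver mapping produces $\hat{\underline{\yv}}_\ell=[\underline{\vv}_\ell+\underline{\zv}_{\mbox{\tiny{eff}}}(\hv_\ell,\av_\ell)] \mod \Lambda$, and the out-of-block UTs enter only through the effective-noise term $\sum_{k\notin\Uc_s}(\alpha h_{\ell,k}-a_{\ell,k})\underline{\xv}_k$; since each $\underline{\xv}_k$ is dithered to be uniform over $\Vc_\Lambda$ with power $\SNR$, the variance (\ref{eq:effnoise1}) and hence $R(\hv_\ell,\av_\ell,\SNR)$ are unchanged regardless of which codebook $\Lc_{s'}$ the out-of-block UTs use. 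Thus each subnetwork is a genuine instance of the single-component DAS uplink analyzed above and (\ref{eq:sumrateCoF}) applies: after the ATs forward $\hat{\underline{\uv}}_\ell$ (feasible since $R_s\le R_0$), the CP solves $[\hat{\underline{\uv}}_\ell:\ell\in\Ac_s]=\Qm(\Ac_s,\Uc_s)[\hat{\underline{\wv}}_k:k\in\Uc_s]$ by Gaussian elimination using the invertibility of $\Qm(\Ac_s,\Uc_s)$, independently for each $s$. A union bound over the finitely many ATs and subnetworks keeps the overall error probability vanishing, and summing the per-subnetwork contributions $|\Uc_s|R_s=|\Ac_s|R_s$ over $s$ yields the claimed $R_{\mbox{\tiny{CoF}}}(\Hm,\Am)$.

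The main obstacle is precisely the decoupling used in the last two steps: because $\Hm$ is fully connected and the per-AT computation rate is a function of the \emph{entire} vectors $\hv_\ell,\av_\ell$, it is not a priori clear that the subnetworks may be assigned independent rates rather than a single global minimum. The resolution hinges on the lattice fact $p\Lambda_1\subseteq\Lambda$ (so that the coefficients vanishing modulo $p$---exactly the off-block zeros of $\Qm$---disappear from the decoded combination) together with the observation that out-of-block codewords affect AT $\ell$ only as power-$\SNR$ additive noise already folded into $R(\hv_\ell,\av_\ell,\SNR)$. Making the effective-noise independence and variance argument rigorous when the interfering UTs use \emph{distinct} nested codebooks is the step I would treat most carefully.
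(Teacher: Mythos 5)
Your proposal is correct and follows essentially the same route as the paper, which treats the theorem as an immediate consequence of the preceding discussion of network decomposition together with Lemma~\ref{lem:decomp}: permute $\Qm$ into block-diagonal form, apply the single-network CoF rate (\ref{eq:sumrateCoF}) to each full-rank square block, and sum. The details you supply — that off-block terms vanish from $\underline{\vv}_\ell$ because $p\Lambda_1\subseteq\Lambda$, and that out-of-block codewords only enter as dithered power-$\SNR$ interference already absorbed into $R(\hv_\ell,\av_\ell,\SNR)$ even when subnetworks use different nested fine lattices over a common $\Lambda$ — are exactly the points the paper leaves implicit, and your treatment of them is sound.
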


\section{Reverse Compute and Forward for the DAS Downlink}\label{sec:DASRCoF}

\begin{figure}
\centerline{\includegraphics[width=11cm]{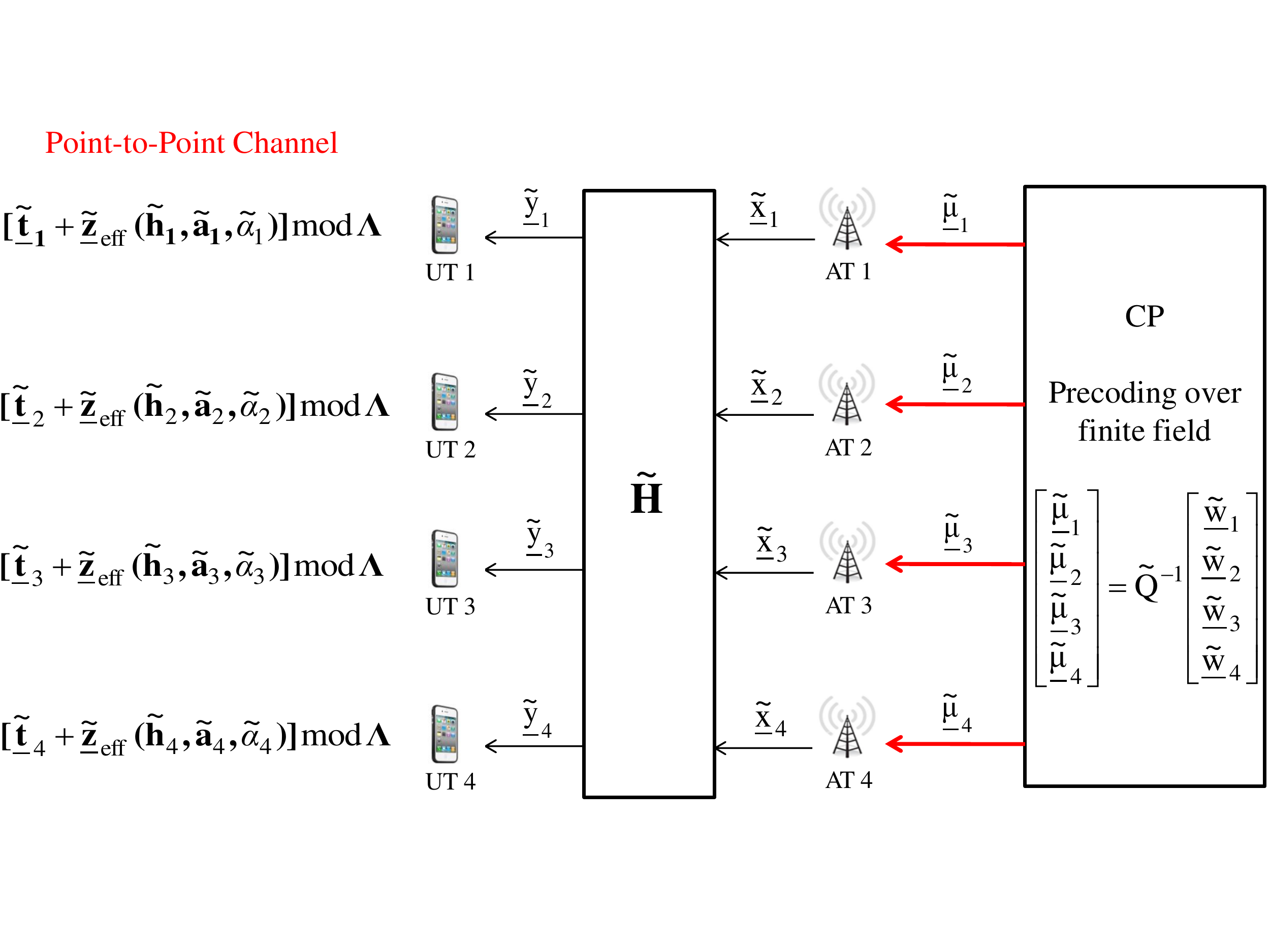}}
\caption{DAS Downlink  Architecture Using Reverse Compute and Forward: $L=4$ and $K=4$.}
\label{RCoFDAS}
\end{figure}

In this section we propose a novel downlink precoding scheme nicknamed ``Reverse" CoF (RCoF).
Again, we restrict to the case $K = L$ although the notation applies to the
case of $K > L$, treated in Section \ref{sec:sch}.
In a DAS downlink, the role of the ATs and UTs can be reversed with respect to the uplink.  Each UT can reliably decode an integer linear
combination of the lattice codewords sent by the ATs.  However, the UTs cannot share the decoded codewords as in the uplink,
since they have no backhaul links. Instead, the ``interference'' in the finite-field domain can be totally eliminated by zero-forcing
precoding (over the finite field) at the CP.  RCoF has a distinctive advantage with respect to its CoF counterpart viewed before:
since each UT sees only its own lattice codeword plus the effective noise, each message is rate-constrained
by the computation rate of its own intended receiver, and not by the minimum of all computation rates across all receivers,
as in the uplink case.  In order to achieve different coding rates while preserving the lattice $\ZZ[j]$-module structure,
we use a {\em family} of nested lattices $\Lambda \subseteq \Lambda_{L} \subseteq \cdots \subseteq \Lambda_{1}$,
obtained by a nested construction  A as described in \cite[Sect. IV.A]{Nazer2011}.
In particular, we let $\Lambda_\ell = p^{-1} g(\Cc_\ell) \Tm + \Lambda$ with $\Lambda = \ZZ^n[j] \Tm$ and
with $\Cc_\ell$ denoting the linear code over $\FF_{p^2}$ generated by the first $r_\ell$ rows of a common generator matrix $\Gm$,
with $r_L \leq r_{L-1} \leq \cdots \leq r_1$.  The corresponding nested lattice codes are given by $\Lc_{\ell} = \Lambda_{\ell} \cap \Vc_\Lambda$,
and have rate $R_\ell = \frac{2r_\ell}{n} \log p$.  We let $\tilde{\Am} = [\tilde{\av}_{1},\ldots,\tilde{\av}_{K}]^{\transp}$,
where  $\tilde{\av}_{k} \in \ZZ^{L}[j]$ denotes the integer coefficients vector used at UT $k$ for the modulo-$\Lambda$ receiver mapping
(see (\ref{eq:received})), and we let $\tilde{\Qm} = g^{-1}( [ \tilde{\Am} ]  \mod p\ZZ[j])$ denote the downlink
system matrix, assumed to have rank $L$.  Then, RCoF scheme proceeds as follows (see Fig.~\ref{RCoFDAS}):
\begin{itemize}
\item The CP sends $L$ independent messages to $L$ UTs (if $K > L$, then a subset of $L$ UTs is selected, as explained in Section \ref{sec:sch}).
We let $k_\ell$ denote the UT destination of the $\ell$-th message, encoded by $\Lc_\ell$ at rate $R_\ell$.
\item The CP forms the messages $\tilde{\underline{\wv}}_\ell \in \FF_{p^2}^{r_1}$ by appending $r_1 - r_\ell$ zeros
to each $\ell$-th information message of $r_\ell$ symbols, so that all messages have the same length $r_1$.
\item The CP produces the precoded messages
\begin{equation}\label{eq:precod}
\left [ \begin{array}{c}
\underline{\tilde{\muv}}_{1} \\ \vdots \\ \underline{\tilde{\muv}}_{L} \end{array} \right ] = \tilde{\Qm}^{-1} \left [ \begin{array}{c}
\underline{\tilde{\wv}}_{1} \\ \vdots \\ \underline{\tilde{\wv}}_{L} \end{array} \right ].
\end{equation}
(notice: if $K > L$ then $\tilde{\Qm}$ is replaced by the $L \times L$ submatrix
$\tilde{\Qm}(\{k_1,\ldots, k_L\}, [1:L])$).
\item  The CP forwards the precoded message $\underline{\tilde{\muv}}_{\ell}$  to  AT $\ell$ for all $\ell = 1,\ldots, L$, via the digital
backhaul link.
\item AT $\ell$ locally produces the lattice codeword $\underline{\nuv}_\ell = f(\underline{\tilde{\muv}}_{\ell}) \in \Lc_1$
(the densest  lattice code)  and transmits the corresponding channel input $\underline{\tilde{\xv}}_{\ell}$ according to
(\ref{eq:channelinput}). Because of linearity, the precoding and the encoding over the finite field commute. Therefore,
we can write $[\underline{\tilde{\nuv}}^\transp_{1},\ldots, \underline{\tilde{\nuv}}^\transp_{L}]^{\transp} = \Bm [\underline{\tilde{\tv}}^\transp_{1},\ldots,
\underline{\tilde{\tv}}^\transp_{L}]^{\transp} \mod \Lambda$, where $\underline{\tilde{\tv}}_{\ell} = f(\underline{\tilde{\wv}}_\ell)$
and $\Bm = g( \tilde{\Qm}^{-1})$.
\item Each UT $k_\ell$ applied the CoF receiver mapping as in (\ref{eq:received}), with
integer coefficients vector $\tilde{\av}_{k_\ell}$ and scaling factor $\alpha_{k_\ell}$, yielding
\begin{eqnarray}
\hat{\underline{\tilde{\yv}}}_{k_\ell} & = & \left [\tilde{\av}_{k_\ell}^{\transp} \left [ \begin{array}{c}
\underline{\tilde{\nuv}}_{1} \\ \vdots \\ \underline{\tilde{\nuv}}_{L} \end{array} \right ] + \underline{\tilde{\zv}}_{\mbox{\tiny{eff}}} (\tilde{\hv}_{k_\ell}, \tilde{\av}_{k_\ell},\alpha_{k_\ell}) \right ] \mod \Lambda \nonumber \\
&=&  \left [ \tilde{\av}_{k_\ell}^{\transp} \Bm \left [\begin{array}{c}
\underline{\tilde{\tv}}_{1} \\ \vdots \\ \underline{\tilde{\tv}}_{L} \end{array} \right ]   +  \underline{\tilde{\zv}}_{\mbox{\tiny{eff}}}(\tilde{\hv}_{k_\ell}, \tilde{\av}_{k_\ell},\alpha_{k_\ell}) \right ] \mod \Lambda \nonumber \\
&\stackrel{(a)}{=}& \left [ \left ( \left [ \tilde{\av}_{k_\ell}^{\transp} \Bm \right ] \mod p\ZZ[j] \right ) \left [ \begin{array}{c}
\underline{\tilde{\tv}}_{1} \\ \vdots \\ \underline{\tilde{\tv}}_{L} \end{array} \right ] + \underline{\tilde{\zv}}_{\mbox{\tiny{eff}}}(\tilde{\hv}_{k_\ell}, \tilde{\av}_{k_\ell},\alpha_{k_\ell}) \right ] \mod \Lambda \nonumber \\
&\stackrel{(b)}{=}& \left [ \underline{\tilde{\tv}}_{\ell} + \underline{\tilde{\zv}}_{\mbox{\tiny{eff}}}(\tilde{\hv}_{k_\ell}, \tilde{\av}_{k_\ell},\alpha_{k_\ell}) \right ] \mod \Lambda,
\label{minchia}
\end{eqnarray}
where (a) is due to the fact that $[p \; \underline{\tv}] \mod \Lambda = \underline{\zerov}$ for any codeword
$\underline{\tv} \in \Lambda_\ell$,  and (b) follows from the following result:
\end{itemize}

\begin{lemma}\label{lem:inverse}
Let $\tilde{\Qm} = g^{-1}([\tilde{\Am}] \mod p\ZZ[j])$.
Assuming $\tilde{\Qm}$ invertible over $\FF_{p^2}$,
if $\Bm = g(\tilde{\Qm}^{-1})$, then:
\begin{equation}
[\tilde{\Am} \Bm] \mod p\ZZ[j] = \Id.
\end{equation}
\end{lemma}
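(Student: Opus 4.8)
The plan is to reduce the claimed matrix identity to a single algebraic fact: the labeling map $g$ is, after reduction modulo $p\ZZ[j]$, a ring isomorphism between $\FF_{p^2}$ and the quotient ring $\ZZ[j]/p\ZZ[j]$. First I would record why this holds. Since $p \equiv 3 \bmod 4$ is an (integer) Gaussian prime, $p\ZZ[j]$ is a maximal ideal and $\ZZ[j]/p\ZZ[j]$ is a field of order $p^2$. Writing $\ZZ[j] = \ZZ[x]/(x^2+1)$ gives $\ZZ[j]/p\ZZ[j] \cong \FF_p[x]/(x^2+1) = \FF_{p^2}$, the last step because $x^2+1$ is irreducible over $\FF_p$ precisely when $-1$ is a nonsquare, i.e. when $p \equiv 3 \bmod 4$. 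Under this identification $g$ is exactly the canonical system of coset representatives $\{a+jb : a,b \in \ZZ_p\}$, so that reduction mod $p\ZZ[j]$ composed with $g$ is a ring isomorphism. Concretely this means $g(x \oplus y) \equiv g(x) + g(y)$ and $g(xy) \equiv g(x)\, g(y) \pmod{p\ZZ[j]}$ for all $x,y \in \FF_{p^2}$, with $g$ sending the identity of $\FF_{p^2}$ to $1$ and $0$ to $0$.

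Then I would argue entrywise. Let $\tilde{q}_{k\ell}$ and $(\tilde{\Qm}^{-1})_{\ell m}$ denote the entries of $\tilde{\Qm}$ and $\tilde{\Qm}^{-1}$ over $\FF_{p^2}$. By the definition $\tilde{\Qm} = g^{-1}([\tilde{\Am}] \bmod p\ZZ[j])$ we have $\tilde{a}_{k\ell} \equiv g(\tilde{q}_{k\ell}) \pmod{p\ZZ[j]}$, and by the definition $\Bm = g(\tilde{\Qm}^{-1})$ we have $\Bm_{\ell m} = g((\tilde{\Qm}^{-1})_{\ell m})$. The $(k,m)$ entry of $\tilde{\Am}\Bm$ is $\sum_\ell \tilde{a}_{k\ell}\, g((\tilde{\Qm}^{-1})_{\ell m})$. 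Because reduction modulo $p\ZZ[j]$ is itself a ring homomorphism on $\ZZ[j]$, I can reduce the whole sum term by term, replace $\tilde{a}_{k\ell}$ by $g(\tilde{q}_{k\ell})$, and then invoke the multiplicativity and additivity of $g$ modulo $p\ZZ[j]$ to collapse the sum into $g\big(\sum_\ell \tilde{q}_{k\ell} (\tilde{\Qm}^{-1})_{\ell m}\big) = g\big((\tilde{\Qm}\tilde{\Qm}^{-1})_{km}\big)$, all congruences being modulo $p\ZZ[j]$.

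Finally, since $\tilde{\Qm}$ is invertible over $\FF_{p^2}$ by hypothesis, $\tilde{\Qm}\tilde{\Qm}^{-1}$ is the identity matrix over $\FF_{p^2}$, so the argument of $g$ equals $\delta_{km}$ times the field identity. Using $g(1_{\FF_{p^2}}) = 1$ and $g(0)=0$ then yields $[\tilde{\Am}\Bm]_{km} \equiv \delta_{km} \pmod{p\ZZ[j]}$, that is $[\tilde{\Am}\Bm] \bmod p\ZZ[j] = \Id$, as claimed. The only genuinely substantive step is the first one: verifying that $g$ transports the field arithmetic of $\FF_{p^2}$ to Gaussian-integer arithmetic modulo $p\ZZ[j]$. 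The hypothesis $p \equiv 3 \bmod 4$ is exactly what makes $x^2+1$ irreducible over $\FF_p$ and hence $\ZZ[j]/p\ZZ[j]$ a field isomorphic to $\FF_{p^2}$; everything after that is bookkeeping resting on the observation that both $g$ and reduction mod $p\ZZ[j]$ respect sums and products.
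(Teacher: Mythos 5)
Your proof is correct and follows essentially the same route as the paper's: reduce $\tilde{\Am}$ modulo $p\ZZ[j]$ to $g(\tilde{\Qm})$, use the fact that $g$ composed with reduction mod $p\ZZ[j]$ respects products, and conclude from $\tilde{\Qm}\tilde{\Qm}^{-1} = \Id$ over $\FF_{p^2}$. The only difference is that you explicitly justify the key ring-isomorphism property $\ZZ[j]/p\ZZ[j] \cong \FF_{p^2}$ (via irreducibility of $x^2+1$ over $\FF_p$ for $p \equiv 3 \bmod 4$), which the paper's chain of equalities uses implicitly in the step $[g(\tilde{\Qm})g(\tilde{\Qm}^{-1})] \bmod p\ZZ[j] = [g(\tilde{\Qm}\tilde{\Qm}^{-1})] \bmod p\ZZ[j]$.
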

\begin{proof} Using $[\tilde{\Am}] \mod p\ZZ[j] = g(\tilde{\Qm})$, we have:
\begin{eqnarray}
[\tilde{\Am}\Bm] \mod p\ZZ[j]&=&[ ( [\tilde{\Am}] \mod p\ZZ[j] ) \; \Bm] \mod p\ZZ[j]\\
&=& [g(\tilde{\Qm})g(\tilde{\Qm}^{-1})] \mod p\ZZ[j]\\
&=& [g(\tilde{\Qm}\tilde{\Qm}^{-1})] \mod p\ZZ[j] \\
& = & \Id.
\end{eqnarray}
\end{proof}

From (\ref{minchia}) we have that RCoF induces a point-to-point channel at each desired UT $k_\ell$,
where the the integer-valued interference is eliminated by precoding,  and the remaining effective noise is due to the non-integer residual interference
and to the channel Gaussian noise. The scaling coefficient $\alpha_{k_\ell}$  and the integer vector $\tilde{\av}_{k_\ell}$ are optimized independently by
each UT using (\ref{eq:effnoise1}) and Algorithm 1. It follows that  the desired message $\underline{\tilde{\wv}}_{\ell}$ can be recovered with arbitrarily small
probability  of error if $R_{\ell} \leq R(\tilde{\hv}_{k_\ell},\tilde{\av}_{k_\ell},\SNR)$, where the latter takes on the form given in (\ref{eq:CoFrate}).
Including the fact that the precoded messages can be sent from the CP to the ATs if  $R_1 \leq R_0$, we arrive at:

\begin{theorem}\label{thm:RCoF}
RCoF applied to a DAS downlink  with channel matrix $\tilde{\Hm}=[\tilde{\hv}_{1},\ldots,\tilde{\hv}_{L}]^{\transp} \in \CC^{L \times L}$ achieves the sum rate
\begin{eqnarray*}
R_{\mbox{\tiny{RCoF}}}(\tilde{\Hm},\tilde{\Am}) & = & \sum_{\ell = 1}^{L} \min\{R_{0}, R(\tilde{\hv}_{\ell},\tilde{\av}_{\ell},\SNR)\}.
\end{eqnarray*}
\begin{flushright}$\blacksquare$\end{flushright}
\end{theorem}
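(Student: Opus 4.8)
The plan is to read off the per-user achievable rate directly from the chain of equalities culminating in (\ref{minchia}), which already reduces the downlink to $L$ parallel single-user lattice channels, and then to reconcile this with the backhaul rate constraint. First I would make explicit why steps (a) and (b) of (\ref{minchia}) hold: step (a) uses $[p\,\underline{\tv}] \mod \Lambda = \underline{\zerov}$ for every $\underline{\tv} \in \Lambda_\ell$ to replace $\tilde{\av}_{k_\ell}^{\transp}\Bm$ by its reduction $[\tilde{\av}_{k_\ell}^{\transp}\Bm] \mod p\ZZ[j]$, while step (b) invokes Lemma \ref{lem:inverse}, which gives $[\tilde{\Am}\Bm] \mod p\ZZ[j] = \Id$, so that the $k_\ell$-th row of the reduced matrix is the $\ell$-th standard basis vector and the only surviving codeword term is $\underline{\tilde{\tv}}_\ell$. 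The conclusion of this first part is that, after the CoF receiver mapping, UT $k_\ell$ observes exactly $[\underline{\tilde{\tv}}_\ell + \underline{\tilde{\zv}}_{\mbox{\tiny{eff}}}(\tilde{\hv}_{k_\ell},\tilde{\av}_{k_\ell},\alpha_{k_\ell})] \mod \Lambda$, a point-to-point nested-lattice channel free of integer interference.

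Next I would apply the single-user CoF achievability. The observation at UT $k_\ell$ is formally identical to the effective G-MAC channel (\ref{eq:received}) with coefficient vector $\tilde{\av}_{k_\ell}$ and the minimizing scale $\alpha_{k_\ell}$, so by (\ref{eq:CoFrate}) lattice decoding recovers $\underline{\tilde{\tv}}_\ell$, hence $\underline{\tilde{\wv}}_\ell$, with vanishing error probability as $n \rightarrow \infty$ provided $R_\ell \leq R(\tilde{\hv}_{k_\ell},\tilde{\av}_{k_\ell},\SNR)$. The decisive structural point — which I would emphasize as the qualitative gain over the uplink Theorem \ref{thm:CoF} — is that each UT optimizes $\tilde{\av}_{k_\ell}$ and $\alpha_{k_\ell}$ locally via (\ref{eq:effnoise1}) and Algorithm 1, so these $L$ decoding events are each governed by their own computation rate rather than by a common minimum.

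The remaining step is the backhaul accounting, where I expect the only real subtlety to lie. Every precoded block $\underline{\tilde{\muv}}_\ell \in \FF_{p^2}^{r_1}$ is mapped into the densest code $\Lc_1$, so each CP-to-AT link must convey $r_1$ symbols, i.e. carry rate $R_1 = \frac{2r_1}{n}\log p$; thus the scheme needs $R_1 \leq R_0$ on \emph{every} link, even though individual messages carry only $r_\ell \leq r_1$ information symbols. I would close the argument by setting $R_\ell = \min\{R_0, R(\tilde{\hv}_{k_\ell},\tilde{\av}_{k_\ell},\SNR)\}$ and choosing the dimensions $r_\ell$ accordingly: since $R_\ell \leq R_0$ for all $\ell$, ordering the nested family so that $\Lambda_1$ carries the largest rate yields $R_1 = \max_\ell R_\ell \leq R_0$, so the zero-padding-plus-precoding forwarding respects every link budget. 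Summing the per-user rates over $\ell$ (after relabeling the selected UTs so that $k_\ell = \ell$) then gives $R_{\mbox{\tiny{RCoF}}}(\tilde{\Hm},\tilde{\Am}) = \sum_{\ell=1}^{L} \min\{R_0, R(\tilde{\hv}_\ell,\tilde{\av}_\ell,\SNR)\}$. The main obstacle is therefore not the per-user rate, which falls out of (\ref{minchia}) and Lemma \ref{lem:inverse}, but confirming that appending zeros and precoding by $\tilde{\Qm}^{-1}$ over $\FF_{p^2}$ genuinely preserves the $\ZZ[j]$-module structure so that each $\underline{\nuv}_\ell$ lands in $\Lc_1$, and that no single backhaul link exceeds $R_0$ despite sharing the densest-lattice rate $R_1$.
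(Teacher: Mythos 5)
Your proposal is correct and follows essentially the same route as the paper: the reduction to $L$ parallel point-to-point lattice channels via the chain of equalities ending in (\ref{minchia}) together with Lemma \ref{lem:inverse}, single-user CoF achievability at each UT with locally optimized $\tilde{\av}_{k_\ell}$ and $\alpha_{k_\ell}$, and the backhaul constraint $R_1 \leq R_0$ on every link since each precoded block has length $r_1$. Your added observation that choosing $R_\ell = \min\{R_0, R(\tilde{\hv}_{k_\ell},\tilde{\av}_{k_\ell},\SNR)\}$ automatically gives $R_1 = \max_\ell R_\ell \leq R_0$ is exactly the implicit step the paper compresses into one sentence.
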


\begin{remark} When the channel matrix $\tilde{\Hm}$ has the property that each row $\ell$ is a permutation of the first row (e.g., in the case $\tilde{\Hm}$
is circulant, as in the Wyner model \cite{Wyner}), each UT has the same computation rate and hence a single lattice code
$\Lc = \Lc_1 = \cdots = \Lc_L$ is sufficient.
\hfill $\lozenge$
\end{remark}

\section{Low-Complexity Schemes}\label{sec:LowComp}

This section considers low-complexity versions of the schemes of
Sections \ref{sec:DASCoF} and \ref{sec:DASRCoF}, using one-dimensional lattices and scalar quantization.
Our approach is suited to the practically relevant case where the receivers are equipped with ADCs of fixed finite resolution,
such that scalar quantization is included as an unavoidable part of the channel model.
In this case, CoF and RCoF, as well as QMF and CDPC, are not possible since lattice quantization requires to have access to the
unquantized (soft) signal samples.

The quantized versions of CoF and RCoF follow as a special cases,
by choosing the generator matrix of the shaping lattice $\Lambda$ to be $\Tm = \tau \Id$, with
$\tau = \sqrt{6 \SNR}$  in order to satisfy the per-antenna power constraint with equality.
The resulting lattice code is $\Lc = \Lambda_1 \cap \Vc_\Lambda$ with
$\Lambda = \tau \ZZ^n[j]$ and $\Lambda_1 = (\tau/p) g(\Cc) + \Lambda$,
for a linear code $\Cc$ over $\FF_{p^2}$  of rate $R = \frac{2r}{n} \log p$.
Furthermore, we introduce a scalar quantization stage as part of each receiver.
This is defined by the function $Q_{(\tau/p) \ZZ[j]}(\cdot)$, applied component-wise.
Since $\Lambda$ is the $n$-dimensional complex cubic lattice, also the modulo-$\Lambda$
operations in CoF/RCoF are performed component-wise. Hence, we can restrict to a symbol-by-symbol channel model instead of considering
$n$-vectors as before.

Consider the same G-MAC setting of Section \ref{subsec:CoF}.  Given the information message
$\underline{\wv}_k \in \FF_{p^2}^r$, encoder $k$ produces the codeword $\underline{\cv}_k = \underline{\wv}_k \Gm$ and the corresponding lattice codeword
$\underline{\tv}_k = f(\underline{\wv}_k) =  (\tau/p) g(\underline{\cv}_k) \mod \Lambda$.
The $i$-th component of its channel input $\underline{\xv}_k$ is given by
\begin{equation}  \label{transmit-quant}
x_{k,i} = \left [ t_{k,i} + d_{k,i} \right ] \mod \tau \ZZ[j],
\end{equation}
where the dithering samples $d_{k,i}$ are i.i.d. across users and time dimensions, and uniformly distributed over the square region
$[0, \tau) + j[0,\tau)$.  The received signal is given by (\ref{GMAC}).
The receiver selects the {\em integer coefficients vector} $\av = (a_{1},\ldots,a_{K})^{\transp} \in \ZZ^{K}[j]$
and produces the sequence $\underline{\uv}  \in \FF_{p^2}^n$ with components
\begin{eqnarray}\label{eq:demo}
u_{i} & = &   g^{-1} \left( \frac{p}{\tau} \left ( \left[ Q_{(\tau/p) \ZZ[j]} \left ( \alpha y_{i}  -  \sum_{k = 1}^{K} a_k d_{k,i} \right) \right] \mod \tau \ZZ[j] \right ) \right)\\
&=&  g^{-1} \left( \left[ Q_{\ZZ[j]} \left ( \frac{p}{\tau} \left ( \sum_{\ell=1}^{K} a_k t_{k,i} +  \xi_{i}(\hv,\av,\alpha) \right ) \right) \right ] \mod p \ZZ[j] \right),
\end{eqnarray}
for $i=1,\ldots,n$, where
\begin{equation} \label{eff-noise-quant}
\xi_{i}(\hv,\av,\alpha) =  \sum_{k = 1}^{K}(\alpha h_{k} - a_{k}) x_{k,i} + \alpha z_{i}.
\end{equation}
Since $\frac{p}{\tau} t_{k,i} \in \ZZ[j]$ by construction,
and using the obvious identity $Q_{\ZZ[j]}(v + \xi) = v + Q_{\ZZ[j]}(\xi)$ with $v \in \ZZ[j]$ and $\xi \in \CC$,
we arrive at
\begin{equation}\label{eq:FFmodel}
\underline{\uv} = \Big( \bigoplus_{k = 1}^{K} q_{k} \underline{\cv}_{k} \Big) \oplus \underline{\zetav}(\hv,\av,\alpha),
\end{equation}
where $q_{k} = g^{-1}([a_{k}] \mod p\ZZ[j])$ and where the components
of the discrete additive noise  $\underline{\zetav}(\hv,\av,\alpha)$ are given by  $\zeta_i(\hv,\av,\alpha) = g^{-1}( [Q_{\ZZ[j]}((p/\tau) \xi_i(\hv,\av,\alpha))] \mod p\ZZ[j])$.
This shows that the concatenation of the lattice encoders, the G-MAC and the receiver mapping (\ref{eq:demo}) reduces to
an equivalent discrete linear additive-noise finite-field MAC (FF-MAC) given by (\ref{eq:FFmodel}).

\begin{figure}
\centerline{\includegraphics[width=11cm]{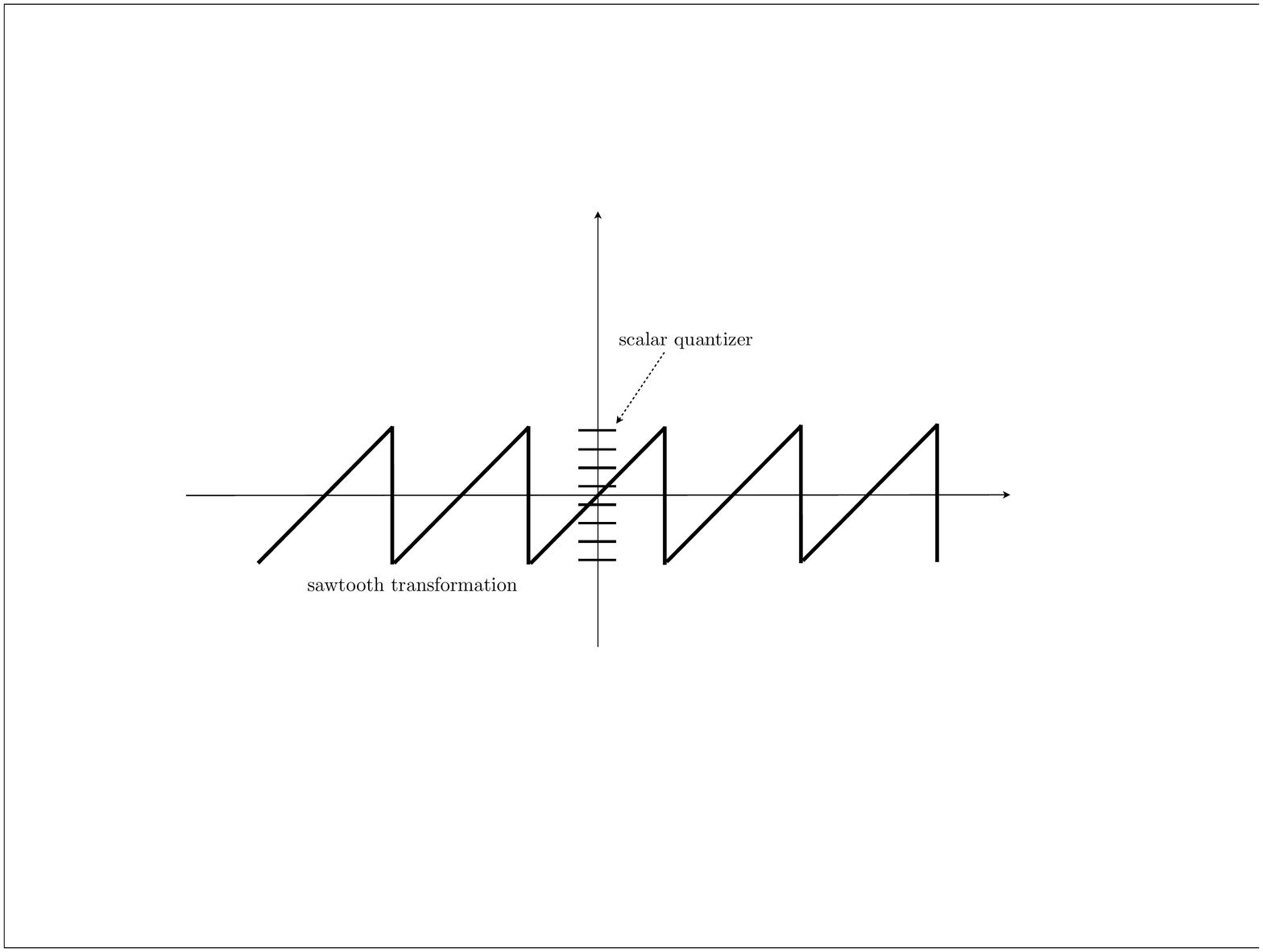}}
\caption{Implementation of the modulo $\Lambda$ operation (analog component-wise sawtooth transformation) followed by the scalar quantization
function $Q_{(\tau/p) \ZZ[j]}(\cdot)$ function.}
\label{receiver-structure}
\end{figure}


\begin{remark}
Notice that $\underline{\uv}$ is obtained from the channel output $\underline{\yv}$ by component-wise analog operations
(scaling by $\alpha$ and translation by $\sum_{k = 1}^{K} a_k \underline{\dv}_k$), scalar quantization and modulo $\Lambda$ reduction.
In fact, the scalar quantization and the modulo lattice operations commute, i.e., the modulo operation can be performed
directly on the analog signals by wrapping the complex plane into the Voronoi region of $\tau \ZZ[j]$, and then the scalar quantizer
$Q_{(\tau/p)\ZZ[j]}(\cdot)$ can be applied to the wrapped samples. This corresponds to the analog sawtooth transformation, followed by scalar quantization, applied
to the real and imaginary parts of the complex baseband signal,
as shown in Fig.~\ref{receiver-structure}.
\hfill $\lozenge$
\end{remark}

The marginal pmf of $\zeta_{i}(\hv,\av,\alpha)$ can be calculated numerically, and it is well approximated by assuming
$(p/\tau) \xi_{i}(\hv,\av,\alpha) \sim \Cc\Nc(0,\sigma_{\xi}^2)$. In Appendix~\ref{app:GA}, we obtain an accurate and easy way
to calculate the pmf of the effective noise component $\zeta_{i}(\hv,\av,\alpha)$ based on such Gaussian approximation.
The optimal choice of $\av$ and $\alpha$ for
the discrete channel (\ref{eq:FFmodel}) consists of minimizing the entropy of the discrete additive noise $H(\zeta_{i}(\hv,\av,\alpha))$.
However, this does not lead to a tractable numerical method. Instead, we resort to the minimization of the
unquantized effective noise variance $\sigma_{\xi}^2$, which leads to the same expression (\ref{eq:effnoise1})
and integer search of Algorithm 1. We assume that $\alpha$ and $\av$ are determined in this way, independently, by each receiver,
and  omit $\alpha$ from the notation.

In the following, we will present coding schemes for the induced FF-MAC in (\ref{eq:FFmodel}) and for the corresponding
Finite-Field Broadcast Channel (FF-BC) resulting from the downlink, by exchanging the roles of ATs and UTs.
We follow the notation used in Sections \ref{sec:DASCoF} and \ref{sec:DASRCoF} and let
$\Qm = g^{-1}([\Am] \mod p\ZZ[j])$ and  $\tilde{\Qm} = g^{-1}([\tilde{\Am}] \mod p\ZZ[j])$
denote the system matrix for the uplink and for the downlink, respectively.

\subsection{QCoF and LQF for the DAS Uplink}\label{subsec:QCoF}

In this section we present two schemes referred to as Quantized CoF (QCoF) and Lattice Quantize and Forward (LQF), which differ by the processing at the ATs.
QCoF is a low-complexity quantized version of CoF. The quantized channel output at AT $\ell$ is given by
\begin{equation}\label{eq:FFmodel1}
\underline{\uv}_{\ell} = \underline{\vv}_{\ell} \oplus \underline{\zetav}(\hv_{\ell},\av_{\ell}),
\end{equation}
where, by linearity,  $\underline{\vv}_{\ell} =\bigoplus_{k=1}^{K} q_{\ell,k} \underline{\cv}_{k}$ is a codeword of $\Cc$.
This is a point-to-point channel with discrete additive noise over $\FF_{p^2}$.
AT $\ell$ can successfully decode $\underline{\vv}_{\ell}$ if $R \leq 2\log{p} - H(\zeta(\hv_{\ell},\av_\ell))$. This is an immediate consequence of the well-known
fact that linear codes achieve the capacity of symmetric discrete memoryless channels\cite{Dobrushin}.
If $R \leq R_{0}$, each AT $\ell$ can forward the decoded message linear finite-field combination to the CP, so that the original UT messages
can be obtained by Gaussian elimination (see Section \ref{sec:DASCoF}). With the same notation of Theorem \ref{thm:CoF}, including network decomposition
which applies verbatim here, we have:

\begin{theorem}\label{thm:QCoF}
QCoF with {\em network decomposition}, applied to
a DAS uplink with channel matrix $\Hm=[\hv_{1},\ldots,\hv_{K}]^{\transp}\in \CC^{K \times K}$, achieves the sum rate
\begin{equation}
R_{\mbox{\tiny{QCoF}}}(\Hm,\Am) = \sum_{s=1}^{S} |\Ac_s| \min\left \{R_{0}, \min \{2\log{p} - H(\zeta(\hv_{k},\av_{k})): k \in \Ac_{s}\} \right \},
\end{equation}
\hfill \IEEEQED
\end{theorem}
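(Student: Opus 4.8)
The plan is to mirror the proof of Theorem~\ref{thm:CoF}, replacing the Gaussian computation rate by the finite-field additive-noise channel rate. First I would invoke the single-AT reduction (\ref{eq:FFmodel1}): at AT $\ell$ the cascade of lattice encoders, the G-MAC, the receiver mapping and the scalar quantizer collapses to the point-to-point channel $\underline{\uv}_{\ell} = \underline{\vv}_{\ell} \oplus \underline{\zetav}(\hv_{\ell},\av_{\ell})$ over $\FF_{p^2}$. The crucial observation is that $\underline{\vv}_{\ell} = \bigoplus_{k=1}^{K} q_{\ell,k} \underline{\cv}_{k} = \big(\bigoplus_{k} q_{\ell,k} \underline{\wv}_{k}\big)\Gm$ is itself a codeword of the single linear code $\Cc$, because every UT employs the same generator matrix $\Gm$ and the combination is $\FF_{p^2}$-linear. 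Hence recovering $\underline{\vv}_{\ell}$, and with it the message combination $\underline{\uv}_{\ell}$ through the natural labeling, is an ordinary single-user decoding problem.

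Next I would establish the per-AT rate constraint. The dithering sequences $d_{k,i}$ render the channel inputs uniform over the shaping region and independent of the messages, so the effective noise $\underline{\zetav}(\hv_{\ell},\av_{\ell})$ is independent of $\underline{\vv}_{\ell}$ and its components are i.i.d.\ across the block; consequently (\ref{eq:FFmodel1}) is a symmetric, memoryless additive-noise channel over the abelian group $(\FF_{p^2},\oplus)$ whose capacity is attained by the uniform input and equals $\log|\FF_{p^2}| - H(\zeta(\hv_{\ell},\av_{\ell})) = 2\log p - H(\zeta(\hv_{\ell},\av_{\ell}))$. By the cited fact that linear codes achieve the capacity of such channels, AT $\ell$ decodes $\underline{\vv}_{\ell}$ with vanishing error probability whenever $R \leq 2\log p - H(\zeta(\hv_{\ell},\av_{\ell}))$; forwarding over the backhaul additionally requires $R \leq R_{0}$.

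I would then import the network-decomposition argument verbatim from Theorem~\ref{thm:CoF}. Since $\Qm$ has full rank $K$, Lemma~\ref{lem:decomp} guarantees that each diagonal block $\Qm(\Ac_{s},\Uc_{s})$ is a full-rank square matrix, so $|\Ac_{s}| = |\Uc_{s}|$, and the ATs indexed by $\Ac_{s}$ receive combinations involving only the UTs in $\Uc_{s}$. Within subnetwork $s$ the common UT rate therefore need only satisfy the constraints of the ATs in $\Ac_{s}$, giving $R^{(s)} = \min\{R_{0}, \min\{2\log p - H(\zeta(\hv_{k},\av_{k})): k \in \Ac_{s}\}\}$. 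Once all $|\Ac_{s}|$ ATs decode correctly, the CP inverts $\Qm(\Ac_{s},\Uc_{s})$ by Gaussian elimination to recover the $|\Uc_{s}|$ intended messages. Summing the contribution $|\Uc_{s}|R^{(s)} = |\Ac_{s}|R^{(s)}$ over the $S$ subnetworks yields the stated sum rate.

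The bulk of the work is borrowed from earlier results, so there is little genuinely new analysis; the point that needs care is the justification that (\ref{eq:FFmodel1}) is truly a symmetric memoryless channel with input-independent noise --- which rests entirely on the dithering construction --- and the legitimacy of the per-subnetwork decoupling, which is precisely what Lemma~\ref{lem:decomp} together with the block-diagonal structure of $\Qm$ supplies. The capacity-achieving property of linear codes over symmetric finite-field channels is invoked as the cited external result and requires no separate argument here.
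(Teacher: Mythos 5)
Your proposal is correct and follows essentially the same route as the paper: reduction to the point-to-point additive-noise channel over $\FF_{p^2}$ in (\ref{eq:FFmodel1}), the capacity $2\log p - H(\zeta(\hv_\ell,\av_\ell))$ achieved by linear codes over symmetric discrete memoryless channels, the backhaul constraint $R\leq R_0$ with Gaussian elimination at the CP, and the network-decomposition accounting imported verbatim from Theorem~\ref{thm:CoF} via Lemma~\ref{lem:decomp}. Your added remarks on dithering making the noise input-independent and memoryless only make explicit what the paper leaves implicit.
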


Next, we consider the LQF scheme, which may provide an attractive alternative in the case $2\log{p} \leq R_{0}$, i.e., when $R_0$ is large and
a small value of $p$ is imposed by the ADC complexity and/or power consumption constraints.
In LQF, the UTs encode their information messages by using independently generated, not nested, random linear codes  $\{\Cc_{k}\}$ over $\FF_{q}$, in order to allow for different coding rates $\{R_{k}\}$. In this case, the fine lattice for UT $k$ is $\Lambda_{k} = (\tau/p)g(\Cc_{k}) + \tau\ZZ^{n}[j]$ and the symbol by symbol quantization maps the channel into an additive MAC channel over $\FF_q$, with discrete additive noise. Hence, independently generated random linear codes are optimal for this channel (this is easily seen form the fact that the channel is additive over the finite field).
In LQF, the ATs forwards its quantized channel observations directly to the CP without
local decoding. Hence, LQF can be seen as a special case of QMF without binning.
From (\ref{eq:FFmodel1}), the CP  sees a FF-MAC with $L$-dimensional output:
\begin{equation}\label{eq:FF-MAC}
\left[
  \begin{array}{c}
    \underline{\uv}_{1} \\
    \vdots \\
    \underline{\uv}_{L} \\
  \end{array}
\right] = \Qm \left[
                \begin{array}{c}
                  \underline{\cv}_{1} \\
                  \vdots \\
                  \underline{\cv}_{K} \\
                \end{array}
              \right] \oplus \left[
                \begin{array}{c}
                  \underline{\zetav}(\hv_{1},\av_{1}) \\
                  \vdots \\
                  \underline{\zetav}(\hv_{L},\av_{L}) \\
                \end{array}
              \right].
\end{equation}
The following result provides an achievable sum rate of LQF subject to the constraint $2 \log p \leq R_0$.
\begin{theorem}\label{lem:LQF}
Consider the FF-MAC, defined by $\Qm \in \FF_{p^2}^{K \times K}$ as in (\ref{eq:FF-MAC}).  If $\Qm$ has rank $K$, the following sum rate is achievable
by linear coding
\begin{equation} \label{sucacazzi}
R_{\mbox{\tiny{FF-MAC}}} = 2 K \log{p} - \sum_{k=1}^{K} H(\zeta(\hv_{k},\av_{k})).
\end{equation}
\end{theorem}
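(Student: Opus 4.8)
The plan is to recognize the finite-field MAC (\ref{eq:FF-MAC}) as a single point-to-point \emph{symmetric additive} channel over $\FF_{p^2}$ as seen by the concatenation of the $K$ users' linear codes, and then to reuse the same ingredient invoked for Theorem \ref{thm:QCoF}, namely that linear codes achieve the capacity of symmetric discrete memoryless channels \cite{Dobrushin}. The hypothesis $2\log p \le R_0$ is what lets each AT $\ell$ forward its quantized observation $\underline{\uv}_\ell$ (which carries exactly $2\log p$ bits per symbol) losslessly to the CP, so that the CP indeed observes the channel (\ref{eq:FF-MAC}) with all $K$ outputs available.

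First I would fix i.i.d.\ uniform input symbols on $\FF_{p^2}$ and compute the output entropy. Writing the per-symbol relation as $\underline{\uv}_i = \Qm \underline{\cv}_i \oplus \zetav_i$, the rank-$K$ hypothesis makes $\underline{\cv}_i \mapsto \Qm\underline{\cv}_i$ a bijection of $\FF_{p^2}^K$; hence $\Qm\underline{\cv}_i$ is uniform, and adding the independent noise $\zetav_i$ over the group $\FF_{p^2}^K$ leaves it uniform. Therefore the stacked output has entropy $2K\log p$ per channel use. Since, conditioned on all inputs, the output is a deterministic function of the additive noise, the conditional output entropy equals $H(\zetav_i)$, and the sum-rate mutual information with uniform inputs equals $2K\log p - H(\zetav_i)$.

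Next I would address achievability. The concatenation of the $K$ independently drawn codes $\{\Cc_k\}$ is itself a linear code over $\FF_{p^2}$, with a block-diagonal generator matrix acting on the stacked message; from the CP's viewpoint (\ref{eq:FF-MAC}) is thus a single point-to-point channel that is additive, hence symmetric, over $\FF_{p^2}$. By the Dobrushin-type argument already used for Theorem \ref{thm:QCoF}, random linear codes achieve any sum rate below $2K\log p - H(\zetav_i)$, and the full rank of $\Qm$ is precisely what guarantees that distinct stacked inputs produce distinct noiseless outputs, so that Gaussian elimination recovers all $\{\underline{\wv}_k\}$ once the combined codeword is decoded. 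Finally, by subadditivity of entropy $H(\zetav_i)\le\sum_{\ell=1}^{K}H(\zeta(\hv_\ell,\av_\ell))$, so the claimed rate $2K\log p - \sum_{k=1}^{K}H(\zeta(\hv_k,\av_k))$ is no larger than $2K\log p - H(\zetav_i)$ and is therefore achievable.

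I expect the only real subtlety to be justifying the achievability step honestly: the effective noises $\zeta(\hv_\ell,\av_\ell)$ are in fact \emph{correlated} across $\ell$, since they share the dithered transmit symbols $x_{k,i}$, so the true MAC sum capacity $2K\log p - H(\zetav_i)$ is in general strictly larger than the stated expression. Replacing the joint noise entropy by the marginal sum $\sum_\ell H(\zeta(\hv_\ell,\av_\ell))$ trades a small rate loss for a quantity computable directly from the per-antenna noise pmf's of Appendix \ref{app:GA}, and the subadditivity step is exactly what legitimizes this conservative but convenient bound.
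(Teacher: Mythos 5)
Your proposal is correct and follows essentially the same route as the paper's Appendix B proof: uniform i.i.d.\ inputs, output entropy $2K\log p$ from the full rank of $\Qm$, conditional output entropy equal to the joint noise entropy $H(\zetav)$, and the final subadditivity step $H(\zetav)\le\sum_{k}H(\zeta(\hv_k,\av_k))$ that trades the (correlated) joint noise entropy for the sum of marginals. The only real difference is packaging: you treat the stacked system as a single point-to-point additive (hence symmetric) channel over $\FF_{p^2}^K$ with a block-diagonal linear code and invoke the Dobrushin argument, whereas the paper goes through the MAC capacity region and its polymatroid structure (leaving the linear-coding claim to the main text); your version, if anything, supports the ``by linear coding'' wording of the statement more directly, and your closing observation about the correlation of the effective noises across ATs matches the paper's reasoning exactly.
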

\begin{IEEEproof} See Appendix \ref{proof:FF-MAC}.
\end{IEEEproof}

The relative merit of QCoF and LQF depends on $R_{0}$, $p$, and on the actual realization of the channel matrix $\Hm$.
In symmetric channel cases (i.e., Wyner model \cite{Wyner}), where the AT have the same computation rate,
QCoF beats LQF by making $p$ sufficiently large.
On the other hand, if the modulation order $p$ is predetermined as in a conventional wireless communication
system, and this is relatively small with respect to $R_{0}$, LQF outperforms QCoF by breaking the limitation
of the minimum computation rate over the ATs.

\subsection{RQCoF for the DAS Downlink}\label{subsec:RQCoF}

Exchanging the roles of AT s and UTs and using (\ref{eq:FFmodel}), the DAS downlink with quantization at the receivers
is turned into the FF-BC
\begin{equation}\label{eq:FF-BC}
\left[
  \begin{array}{c}
    \underline{\tilde{\uv}}_{1} \\
    \vdots \\
    \underline{\tilde{\uv}}_{K} \\
  \end{array}
\right] = \tilde{\Qm}
\left[
                \begin{array}{c}
                  \underline{\tilde{\cv}}_{1} \\
                  \vdots \\
                  \underline{\tilde{\cv}}_{L} \\
                \end{array}
              \right] \oplus \left[
                \begin{array}{c}
                  \underline{\zetav}(\tilde{\hv}_{1},\tilde{\av}_{1}) \\
                  \vdots \\
                  \underline{\zetav}(\tilde{\hv}_{K},\tilde{\av}_{K}) \\
                \end{array}
              \right].
\end{equation}
The following result yields that simple matrix inversion over $\FF_{p^2}$ can achieve the capacity of this FF-BC.
Intuitively, this is because there is no additional power cost with Zero-Forcing Beamforming (ZFB) in the finite-field domain
(unlike ZFB in the complex domain).

\begin{theorem}\label{lem:sumRQCoF}
Consider the FF-BC in (\ref{eq:FF-BC}) for $K = L$. If $\tilde{\Qm}$ has rank $L$, the sum capacity is
\begin{equation} \label{straminchia}
C_{\mbox{\tiny{FF-BC}}} = 2L\log{p} - \sum_{\ell=1}^{L} H(\zeta(\tilde{\hv}_{\ell},\tilde{\av}_{\ell})).
\end{equation}
and it can be achieved by linear coding.
\end{theorem}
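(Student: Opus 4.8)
The plan is to establish the two halves of the capacity statement separately: achievability by finite-field zero-forcing combined with linear point-to-point coding, and a matching converse by a Fano/cut-set argument.

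For achievability I would reuse the RQCoF precoding directly. Since $K=L$ and $\tilde{\Qm}$ has rank $L$, it is invertible over $\FF_{p^2}$, so the CP precodes the $L$ per-receiver messages through $\tilde{\Qm}^{-1}$. Because the finite-field precoding and the linear encoding commute, Lemma \ref{lem:inverse} gives $[\tilde{\Am}\Bm]\bmod p\ZZ[j]=\Id$, which cancels all cross terms in (\ref{eq:FF-BC}), so that receiver $\ell$ observes a decoupled point-to-point channel carrying only its own intended codeword corrupted by the additive noise $\underline{\zetav}(\tilde{\hv}_{\ell},\tilde{\av}_{\ell})$. Each such channel is a symmetric additive-noise DMC over $\FF_{p^2}$, whose capacity $2\log p-H(\zeta(\tilde{\hv}_{\ell},\tilde{\av}_{\ell}))$ is attained by linear codes, exactly as argued for QCoF in Section \ref{subsec:QCoF}. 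As $\tilde{\Qm}^{-1}$ is a bijection, a product of per-receiver linear codes maps to a linear precoded input, so summing the $L$ decoupled rates shows the target sum rate is achievable by linear coding.

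For the converse I would show no scheme can beat the sum of these point-to-point rates. Writing $W_{\ell}$ for the message of receiver $\ell$, $\underline{\tilde{\uv}}_{\ell}$ for its length-$n$ output, and stacking the inputs as $\underline{\tilde{\Cm}}=[\underline{\tilde{\cv}}_{1}^{\transp},\ldots,\underline{\tilde{\cv}}_{L}^{\transp}]^{\transp}$, Fano's inequality gives $n\sum_{\ell}R_{\ell}\le\sum_{\ell}I(W_{\ell};\underline{\tilde{\uv}}_{\ell})+n\epsilon_{n}$. The joint messages determine $\underline{\tilde{\Cm}}$, and $\underline{\tilde{\uv}}_{\ell}$ is conditionally independent of $W_{\ell}$ given $\underline{\tilde{\Cm}}$; hence the Markov chain $W_{\ell}\to\underline{\tilde{\Cm}}\to\underline{\tilde{\uv}}_{\ell}$ and the data-processing inequality yield $I(W_{\ell};\underline{\tilde{\uv}}_{\ell})\le I(\underline{\tilde{\Cm}};\underline{\tilde{\uv}}_{\ell})=H(\underline{\tilde{\uv}}_{\ell})-H(\underline{\tilde{\uv}}_{\ell}\mid\underline{\tilde{\Cm}})$. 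Conditioned on the input, $(\tilde{\Qm}\,\underline{\tilde{\Cm}})_{\ell}$ is deterministic, so the residual uncertainty equals that of the additive noise, and since $\zeta(\tilde{\hv}_{\ell},\tilde{\av}_{\ell})$ is memoryless this equals $nH(\zeta(\tilde{\hv}_{\ell},\tilde{\av}_{\ell}))$; finally $H(\underline{\tilde{\uv}}_{\ell})\le n\log|\FF_{p^2}|=2n\log p$ because each output symbol lies in $\FF_{p^2}$. Summing, dividing by $n$, and letting $n\to\infty$ gives $\sum_{\ell}R_{\ell}\le2L\log p-\sum_{\ell}H(\zeta(\tilde{\hv}_{\ell},\tilde{\av}_{\ell}))$, matching the achievable sum rate.

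I expect the main subtlety to lie in the converse's per-receiver decoupling: a priori the sum capacity of a broadcast channel need not equal the sum of the individual point-to-point bounds, and here it does only because the output-alphabet bound $H(\underline{\tilde{\uv}}_{\ell})\le2n\log p$ holds for every conceivable scheme while the noise sequences are independent across both time and receivers. The two facts that must be pinned down are (i) the memorylessness of $\zeta(\tilde{\hv}_{\ell},\tilde{\av}_{\ell})$, which follows from the symbol-by-symbol structure of the quantized receiver map driven by i.i.d.\ dithers and i.i.d.\ Gaussian noise, and (ii) that finite-field zero-forcing through $\tilde{\Qm}^{-1}$, unlike zero-forcing in the complex Gaussian broadcast channel, incurs no power penalty, which is precisely why achievability meets the converse with equality.
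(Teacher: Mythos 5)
Your proposal is correct and follows essentially the same route as the paper: achievability by inverting $\tilde{\Qm}$ over $\FF_{p^2}$ to decouple the broadcast channel into $L$ parallel additive-noise finite-field channels, and a converse obtained by summing the per-receiver single-user bounds $R_\ell \le \max_{P_{\xv}} I(\xv;y_\ell) = 2\log p - H(\zeta(\tilde{\hv}_\ell,\tilde{\av}_\ell))$, which you derive explicitly via Fano and data processing where the paper simply cites this as a trivial outer bound. One minor remark: independence of the noise across receivers is not actually needed anywhere (only the per-receiver marginals enter both directions), as the paper notes for achievability.
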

\begin{IEEEproof} See Appendix \ref{proof:FF-BC}.
\end{IEEEproof}
Motivated by Theorem \ref{lem:sumRQCoF}, we present the RQCoF scheme
using finite-field matrix inversion precoding at the CP. As for RCoF, we use $L$ nested linear codes
$\Cc_{L} \subseteq \cdots \subseteq \Cc_{1}$ where $\Cc_{\ell}$ has rate $R_{\ell}=\frac{2r_{\ell}}{n}\log{p}$ and let
$k_\ell$ denote the UT destination of the $\ell$-th message, encoded by $\Cc_\ell$.
The CP precodes the zero-padded information messages  $\{\underline{\tilde{\wv}}_{\ell}:\ell=1,\ldots,L\}$
as in (\ref{eq:precod}) and sends the precoded message $\underline{\tilde{\muv}}_{\ell}$  to  AT $\ell$ for all $\ell = 1,\ldots, L$, via the digital
backhaul link. AT $\ell$ generates the codeword $\underline{\tilde{\cv}}_\ell = \underline{\tilde{\muv}}_{\ell} \Gm \in \Cc_1$, and the corresponding
transmitted signal $\underline{\tilde{\xv}}_\ell$ according to (\ref{transmit-quant}), with $\underline{\tilde{\tv}}_\ell = f(\underline{\tilde{\muv}}_{\ell})$
Each UT $k_\ell$ produces its quantized output according to the scalar mapping (\ref{eq:demo}) and obtains:
\begin{eqnarray}
\underline{\tilde{\uv}}_{k_\ell} & = &
\left ( \tilde{\qv}_{k_\ell}^\transp
\left[
                \begin{array}{c}
                  \underline{\tilde{\cv}}_{1} \\
                  \vdots \\
                  \underline{\tilde{\cv}}_{L} \\
                \end{array}
              \right] \right ) \oplus  \underline{\zetav}(\tilde{\hv}_{k_\ell},\tilde{\av}_{k_\ell}) \nonumber \\
& = & \left ( \tilde{\qv}_{k_\ell}^\transp
\tilde{\Qm}^{-1} \left[
                \begin{array}{c}
                  \underline{\tilde{\wv}}_1\Gm \\
                  \vdots \\
                  \underline{\tilde{\wv}}_L \Gm \\
                \end{array}
              \right] \right ) \oplus  \underline{\zetav}(\tilde{\hv}_{k_\ell},\tilde{\av}_{k_\ell}) \nonumber \\
& = &
               \underline{\tilde{\vv}}_\ell \oplus  \underline{\zetav}(\tilde{\hv}_{k_\ell},\tilde{\av}_{k_\ell}) \label{precoded-ffbc}
\end{eqnarray}
where $\underline{\tilde{\vv}}_\ell = \underline{\tilde{\wv}}_\ell \Gm$ is a codeword of $\Cc_\ell$.
Thus, UT $k_\ell$ can recover its desired message if $R_{\ell} \leq 2 \log{p} -  H(\zeta(\tilde{\hv}_{k_\ell},\tilde{\av}_{k_\ell}))$. Summarizing, we have:

\begin{theorem}\label{thm:eRQCoF}
RQCoF applied to a DAS downlink  with channel matrix $\tilde{\Hm}=[\tilde{\hv}_{1},\ldots,\tilde{\hv}_{L}]^{\transp}\in \CC^{L \times L}$
achieves the sum rate
\begin{eqnarray}
R_{\mbox{\tiny{RQCoF}}}(\tilde{\Hm},\tilde{\Am}) &=& \sum_{\ell=1}^{L} \min\left \{R_{0},  2 \log{p} - H(\zeta(\tilde{\hv_{\ell}},\tilde{\av_{\ell}})) \right \}.
\end{eqnarray}
\hfill \IEEEQED
\end{theorem}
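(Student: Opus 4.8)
The plan is to show that RQCoF's achievability reduces to the single-user coding result already established, combined with the precoding identity from Lemma~\ref{lem:inverse} applied in the quantized (finite-field) setting. First I would observe that the derivation in (\ref{precoded-ffbc}) has already done the bulk of the work: it establishes that after finite-field zero-forcing precoding at the CP and the scalar CoF receiver mapping at each UT, the effective channel seen by UT $k_\ell$ collapses to the clean point-to-point finite-field additive-noise channel $\underline{\tilde{\uv}}_{k_\ell} = \underline{\tilde{\vv}}_\ell \oplus \underline{\zetav}(\tilde{\hv}_{k_\ell},\tilde{\av}_{k_\ell})$, where $\underline{\tilde{\vv}}_\ell = \underline{\tilde{\wv}}_\ell \Gm$ is a codeword of $\Cc_\ell$ and the integer-valued interference from the other ATs' codewords has been annihilated. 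The key algebraic step underlying this collapse is $[\tilde{\qv}_{k_\ell}^\transp \tilde{\Qm}^{-1}] = \ev_\ell^\transp$ over $\FF_{p^2}$, which is exactly the row-wise statement of Lemma~\ref{lem:inverse} (that $[\tilde{\Am}\Bm] \mod p\ZZ[j] = \Id$), so I would invoke that lemma verbatim rather than re-prove the commutation of precoding and encoding.

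Next I would handle the rate for each individual user. The resulting per-user channel is a symmetric discrete memoryless channel over $\FF_{p^2}$ with additive noise $\zeta(\tilde{\hv}_{k_\ell},\tilde{\av}_{k_\ell})$, so by the standard fact that linear codes achieve capacity over additive symmetric finite-field channels (cited via \cite{Dobrushin}, and already used for Theorem~\ref{thm:QCoF}), UT $k_\ell$ can reliably decode $\underline{\tilde{\wv}}_\ell$ provided $R_\ell \leq 2\log p - H(\zeta(\tilde{\hv}_{k_\ell},\tilde{\av}_{k_\ell}))$. Crucially, because zero-forcing in the finite field carries no power penalty (the precoding matrix $\Bm = g(\tilde{\Qm}^{-1})$ acts only in the finite-field layer and the transmitted lattice signals still satisfy the per-antenna power constraint through the modulo operation in (\ref{transmit-quant})), each user's rate is governed by its \emph{own} effective-noise entropy and not by a minimum over all receivers — this is the structural advantage over the uplink, inherited directly from the RCoF analysis.

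Finally I would incorporate the backhaul constraint. Since the CP must forward each precoded message $\underline{\tilde{\muv}}_\ell$ to AT $\ell$ over a digital link of rate $R_0$, and all messages are padded to the common length $r_1$ so that the densest code $\Cc_1$ determines the backhaul load, the rate of message $\ell$ is additionally capped by $R_0$. Taking the minimum and summing the per-user rates over $\ell = 1,\ldots,L$ yields the claimed expression $R_{\mbox{\tiny{RQCoF}}}(\tilde{\Hm},\tilde{\Am}) = \sum_{\ell=1}^{L} \min\{R_0, 2\log p - H(\zeta(\tilde{\hv}_\ell,\tilde{\av}_\ell))\}$. I do not anticipate a deep obstacle here: the genuinely substantive content — the finite-field ZFB sum-capacity characterization — is Theorem~\ref{lem:sumRQCoF}, proved separately in Appendix~\ref{proof:FF-BC}, and the present theorem is essentially its operational instantiation. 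The one point requiring mild care is confirming that the nested structure $\Cc_L \subseteq \cdots \subseteq \Cc_1$ with a common generator $\Gm$ is compatible with per-user rates $R_\ell = \frac{2r_\ell}{n}\log p$ while still permitting the single matrix inversion $\tilde{\Qm}^{-1}$ to act uniformly on all codewords; this follows because zero-padding embeds every $\underline{\tilde{\wv}}_\ell$ into $\FF_{p^2}^{r_1}$ so that the precoding and encoding commute exactly as in the RCoF derivation.
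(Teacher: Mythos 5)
Your proposal is correct and follows essentially the same route as the paper: the derivation in (\ref{precoded-ffbc}) reduces each user's channel to a point-to-point additive-noise channel over $\FF_{p^2}$, linear codes achieve its capacity $2\log p - H(\zeta(\tilde{\hv}_{k_\ell},\tilde{\av}_{k_\ell}))$, and the backhaul link caps each message at $R_0$. The only cosmetic difference is that in the quantized setting the cancellation $\tilde{\qv}_{k_\ell}^\transp\tilde{\Qm}^{-1}=\ev_\ell^\transp$ is plain finite-field linear algebra, so Lemma~\ref{lem:inverse} (which handles the lift to $\ZZ[j]$ needed for unquantized RCoF) need not be invoked.
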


\section{Comparison with Known Schemes on the Wyner Model} \label{sec:Wyner}

In order to obtain clean performance comparisons with other state-of-the art information theoretic coding strategies,
we consider the symmetric Wyner model \cite{Wyner},
which has been used in several other works for its simplicity and analytic tractability.
In particular, we consider comparisons with Quantize reMap and Forward (QMF) and Decode and Forward (DF) for
the DAS uplink, and Compressed Dirty Paper Coding (CDPC) and Compressed Zero-Forcing Beamforming (CZFB)
for the DAS downlink.

In the symmetric Wyner model with $L$ ATs and $L$ UTs, the received signal at the $\ell$-th receiver (AT for the uplink or UT for the downlink)  is given by
\begin{equation}
\underline{\yv}_{\ell} = \underline{\xv}_{\ell} + \gamma(\underline{\xv}_{\ell-1} + \underline{\xv}_{\ell+1}) + \underline{\zv}_{\ell}
\end{equation}
where $\gamma \in (0,1]$ quantifies the strength of inter-cell interference and $\underline{\zv}_{\ell}$ has i.i.d. components
$\sim \Cc\Nc(0,1)$.

\subsection{Review of some Classical Coding Strategies}  \label{sec:coding-strategies}

\subsubsection{QMF}\label{subsec:CF}
Each AT performs vector quantization of its received signal at some rate
$R' \geq R_{0}$ and maps the blocks of $nR'$ quantization bits into binary words of length $nR_{0}$  by using a hashing function (binning).
The CP performs the joint decoding of all UTs' messages based on the observation of all the (hashed) quantization bits.
Using random coding with Gaussian codes and random binning, \cite{Sanderovich} proves the following achievable rate of QMF:
\begin{eqnarray}
R_{\tiny{\mbox{QMF}}} &=&  \max_{0 \leq r} \min_{\Sc \subset [1:L]} \Big\{|\Sc|(R_{0}-r)+\log\det\left (\Id+\SNR(1-2^{-r})\Hm(\Sc^{c},[1:L])\Hm(\Sc^{c},[1:L])^{\herm} \right ) \Big\}. \nonumber \\
& &  \label{eq:sumrateQMF}
\end{eqnarray}
As $R_{0} \rightarrow \infty$,  $R_{\tiny{\mbox{QMF}}}$ tends to the sum rate of the underlying multi-antenna G-MAC channel with $L$ users and one $L$-antennas receiver. For  $\SNR \rightarrow \infty$ and fixed $R_0$, then $R_{\tiny{\mbox{QMF}}} \rightarrow L R_{0}$ \cite{Sanderovich,Nazer2009}.
While for a general channel matrix computing (\ref{eq:sumrateQMF}) is non-trivial, a remarkable result of \cite{Sanderovich} is that
for the Wyner model in the limit of $L \rightarrow \infty$ the QMF rate per user can be simplified to
\begin{equation}  \label{eq:QMFsimple}
R_{\tiny{\mbox{QMF, per-user}}} = F(r^{*}),
\end{equation}
where
\begin{equation*}
F(r) = \int_{0}^{1}\log \left (1+\SNR\left (1-2^{-r} \right )\left (1+2 \gamma \cos (2\pi\theta) \right )^{2} \right )d\theta.
\end{equation*}
and where $r^*$ is the solution of the equation $F(r) = R_{0} - r$.
A simplified version of QMF does not use binning, and simply forwards to the CP the quantization bits collected at the ATs. We refer to this scheme as
{\em Quantize and Forward} (QF), without the re-mapping. In this case, the quantization rate is $R' = R_0$.
From \cite{sanderovich2009},  the achievable sum rate of QF is given by
\begin{eqnarray}
R_{\tiny{\mbox{QF}}} &=& \log\det \left (\Id + \SNR \; \Dm\Hm\Hm^{\herm} \right ),\label{eq:sumrateQMFs}
\end{eqnarray}
where $\Dm = \diag(1/(1+D_{\ell}) : \ell = 1,\ldots, L)$ and $D_{\ell} = (1+\SNR \| \hv_\ell\|^2)/(2^{R_{0}}-1)$ denotes the variance
of quantization noise at AT $\ell$.

\subsubsection{DF}\label{subsec:DF}
In the Wyner model, each AT $\ell$ sees the three-inputs G-MAC formed by UTs $\ell-1$, $\ell$ and $\ell+1$. Imposing either to treat interference as noise, or
to decode all messages at each AT, yields \cite{Sanderovich}:
\begin{eqnarray*}
R_{1} &=& \log\Big(1+\frac{\SNR}{1+2\gamma^2 \SNR}\Big)\\
R_{2} &=& \min\Big\{\frac{1}{2}\log(1+2 \gamma^2 \SNR), \frac{1}{3}\log(1+(1+2\gamma^2)\SNR) \Big\}\\
R_{\tiny{\mbox{sum}}} &=& L \times \min\{\max(R_{1},R_{2}),R_{0}\}.
\end{eqnarray*}
This scheme has no joint-processing gain. However, when $R_{0}$ is sufficiently small compared to the rates achievable over the
wireless channel, or when $\gamma$ is very small, this scheme can be optimal \cite{Sanderovich,Nazer2009}.
In fact, DF is what is implemented today in a network of small cells, where each AT operates as a stand-alone base station,
and the decoded packets are sent to a common node that may use packet selection macro-diversity, in the case some of the base stations fail
to decode. Therefore, it is useful to compare with DF to quantify the potential gains of other schemes with respect to current technology.

\subsubsection{CDPC}\label{subsec:CDPC}
We focus now on the DAS downlink.  In CDPC the CP performs joint DPC under per-antenna power
constraint and sends the compressed (or quantized) DPC codewords to the corresponding
ATs via wired links. As a consequence, the ATs also transmit quantization noise.
Let $\underline{\tilde{\vv}}_{\ell}$ be the DPC-encoded signal to be transmitted by AT $\ell$
and let $\dot{\underline{\vv}}_{\ell}$ denote its quantized version. Define
$\sigma_\ell^2 = \frac{1}{n} \EE[\|\underline{\tilde{\vv}}_\ell\|^2]$ and
$\dot{\sigma}_\ell^2 = \frac{1}{n} \EE[\|\underline{\dot{\vv}}_\ell\|^2]$.
From the standard rate distortion theory, an achievable quantization distortion $D_\ell$ is given by
\begin{eqnarray*}
R(D_{\ell}) & \eqdef & \min_{P_{\hat{V}_{\ell} | V_{\ell}}: \EE[\|V_{\ell}-\hat{V}_{\ell}\|^2] \leq D_{\ell}} \; I(V_{\ell};\hat{V_{\ell}}) \\
& \leq &  I(V_{\ell};\dot{V}_{\ell}) = \log(1 + \sigma_\ell^2/D_{\ell}),
\end{eqnarray*}
where the upper bound follows from the choice
$\dot{V}_{\ell} = V_{\ell} + \dot{Z}_{\ell}$ with $\dot{Z}_{\ell} \sim \Cc\Nc(0,D_{\ell})$ and $V_\ell$ with variance $\sigma_\ell^2$.
Letting $R_{0} = \log(1+\sigma_\ell^2 / D_{\ell})$ and solving for $D_\ell$ we obtain
\begin{equation}\label{eq:quant}
D_{\ell} = \frac{\sigma_\ell^2}{2^{R_{0}}-1}.
\end{equation}
Using the fact that $\dot{\sigma}_\ell^2 = \sigma_\ell^2 + D_\ell$,
the per-antenna power constraint $\dot{\sigma}_\ell^2 \leq \SNR$ imposed at each AT $\ell$ yields
\begin{equation}
\sigma_\ell^2 \leq \SNR \frac{2^{R_{0}}-1}{2^{R_{0}}} \;\;\; \mbox{for} \; \ell = 1,\ldots, L. \label{eq:DPCpower}
\end{equation}
Using (\ref{eq:DPCpower}) in (\ref{eq:quant}), we obtain $D_{\ell} = \SNR \; 2^{- R_{0}}$ for $\ell = 1,...,L$.
At the $\ell$-th UT receiver, the variance of the effective noise is given by
\begin{equation}\label{eq:DPCvar}
\tilde{\sigma}_{\ell}^{2} = 1 + \| \tilde{\hv}_{\ell} \|^2 \SNR \; 2^{- R_{0}}.
\end{equation}
Then, an achievable sum rate of CDPC is equal to the sum capacity of the resulting vector BC
with the above modifications (i.e., per-antenna power constraint and noise variance).
This can be computed using the efficient algorithm given in \cite{Hoon}, based on Lagrangian duality.
Further, the closed form rate-expression was provided in \cite{Simeone} for the so-called {\em soft-handoff} Wyner model,
a simplified variant  of the Wyner model where each receiver has only one interfering signal from its left  neighboring cell.
While CDPC is expected to be near optimal for large $R_{0}$, it is generally suboptimal at finite (possibly small) $R_{0}$.

\subsubsection{CZFB}\label{subsec:CZFB}
CP performs precoding with the inverse channel matrix $\Bm = \tilde{\Hm}^{-1}$ and sends the compressed ZFB signals to the corresponding ATs via wired links.
As in CDPC, the ATs forward also quantization noise, such that the variance of effective noise at the $\ell$-th UT is given again by (\ref{eq:DPCvar}).
The transmit power constraint (\ref{eq:DPCpower}) holds verbatim. Because of the non-unitary precoding, the {\em useful} signal power
is given by $\SNR \frac{2^{R_{0}}-1}{2^{R_{0}} \|\bv_\ell\|^2}$ where $\bv_\ell^\transp$ is the $\ell$-th row of the precoding matrix $\Bm$.
It follows that CZFB achieves the sum rate
\begin{eqnarray}
R_{\mbox{\tiny CZFB}} &=& \sum_{\ell=1}^{L}\log\left(1+\frac{\SNR / \|\bv_{\ell}\|^2}{1+(1+\|\tilde{\hv}\|^2\SNR)/(2^{R_{0}}-1)}\right).
\end{eqnarray}

\subsection{Numerical Results}

\begin{figure}
\centerline{\includegraphics[width=11cm]{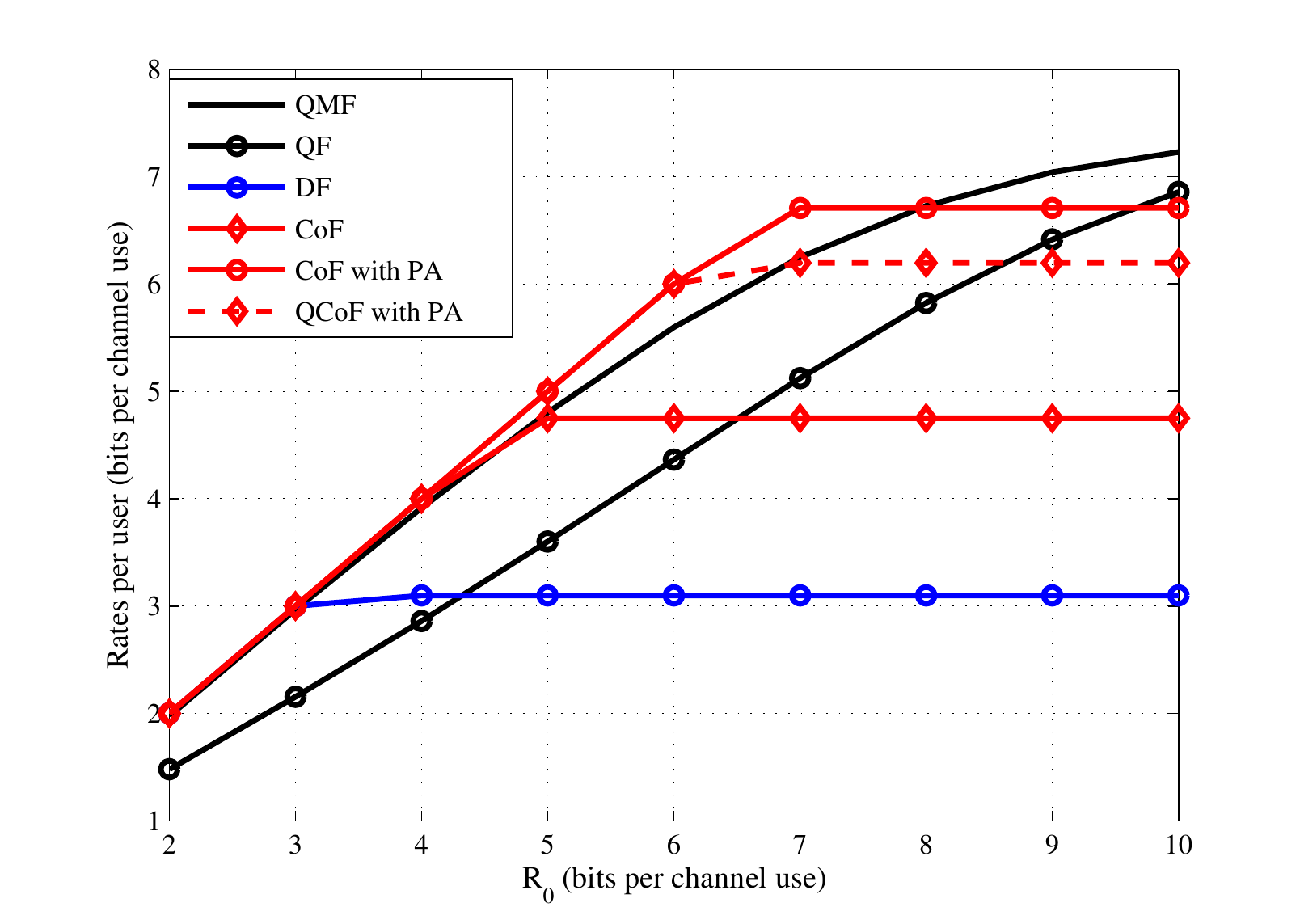}}
\caption{$\SNR = 25$dB and $L=\infty$. Achievable rates per user as a function of $R_{0}$, for the DAS uplink in the Wyner model case with inter-cell
interference parameter $\gamma=0.7$.}
\label{simulation_ULWyner}
\end{figure}

\begin{figure}
\centerline{\includegraphics[width=11cm]{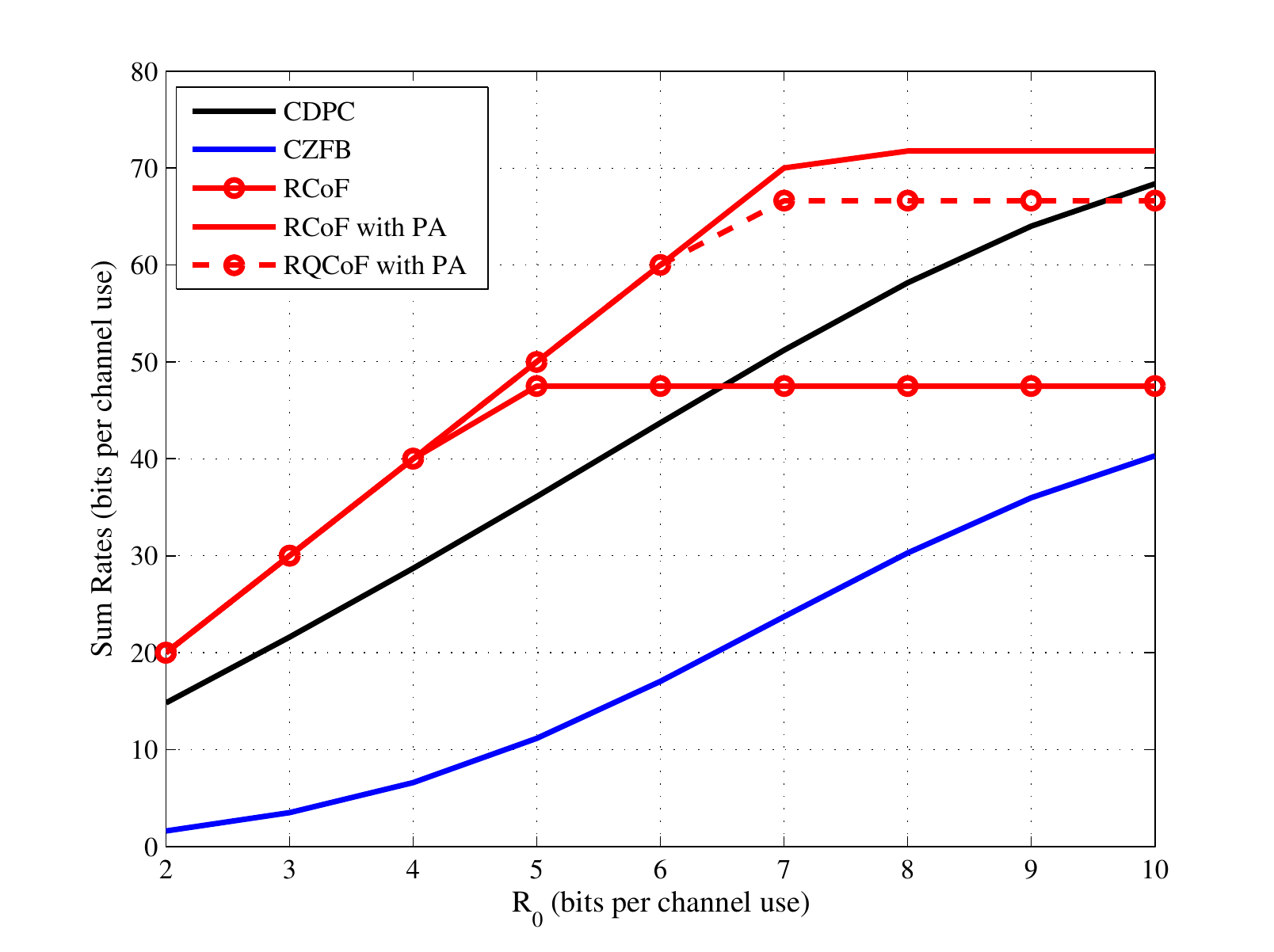}}
\caption{$\SNR = 25$dB and $L=10$. Achievable sum rates as a function of $R_{0}$, for the DAS downlink in the  Wyner model case with
inter-cell interference parameter $\gamma=0.7$.}
\label{simulation_DLWyner}
\end{figure}

Thanks to the banded structure of the Wyner model channel matrix,
the resulting system matrix of CoF (resp.,  RCoF)
is guaranteed to have rank $L$ although every AT (reps., UT) determines its integer coefficients vector in a distributed way.
In addition, the non-integer penalty which may be relevant for specific values of $\gamma$ can be mitigated by using a
{\em power allocation} strategy, in order to create more favorable channel coefficients for the integer conversion at each receivers.
In \cite{Nazer2009} a further improved strategy is proposed based on superposition coding, where the user messages are split into
two layers, and one layer is treated as noise while the other is treated by CoF. Here we focus on simple power allocation, since it is simpler,
practical, and captures a significant fraction of the gains achieved with superposition coding. The power allocation strategy
works as follows: odd-numbered UTs (resp.,  ATs) transmit at power $\beta P$ and even-numbered UTs (reps., ATs) transmit at power $(2-\beta)P$,
for $\beta \in [0,1]$. The role of odd- and even-numbered UTs (or ATs) is alternately reversed in successive time slots,
such that each UT (resp., AT) satisfies its individual power constraint on average.
Accordingly, the effective coefficients of the channel for odd-numbered and even-numbered relays are
$\hv_{o} = [\gamma \sqrt{2-\beta}, \sqrt{\beta}, \gamma\sqrt{2-\beta}]$
and $\hv_{e}=[\gamma \sqrt{\beta}, \sqrt{2-\beta}, \gamma\sqrt{\beta}]$. For given $\gamma$, the parameter $\beta \in [0,1]$ can be optimized
to make the effective channels better suited for the integer approximation in the CoF receiver mapping.
We have two computation rates, $R(\hv_{o},\av_{o})$ and $R(\hv_{e},\av_{e})$, at the odd and even numbered receivers.
The achievable symmetric rate of CoF (or RCoF) with power allocation is given by $\min\{R_{0},R(\hv_{o},\av_{o},\SNR),R(\hv_{e},\av_{e},\SNR)\}$.
Notice that the odd- and even-numbered relays can optimize their own equation coefficients independently,
but the optimization with respect to $\beta$ is common to all, and the computation rate is the minimum computation rate over all the relays,
since the same lattice code $\Lc$ is used across all users.
In Fig.~\ref{simulation_ULWyner}, we show the performance of various relaying strategies for the DAS uplink with $\SNR=25$ dB, as a function of
backhaul rate $R_{0}$. $L = \infty$ is assumed in order to use the simple rate expression of QMF in (\ref{eq:QMFsimple}).
Fig.~\ref{simulation_ULWyner} shows that the power allocation strategy significantly reduces the integer approximation penalty and almost
achieves the cut-set bound outer bound (i.e., capacity) for $R_{0} \leq 7$ bits.
Not surprisingly,  QCoF with $p=251$ only pays the shaping penalty with respect to CoF,
i.e., it approaches the performance of the corresponding high-dimensional scheme within $\approx 0.5$ bit per complex dimension.

We observe a similar trend for the downlink schemes, shown in Fig.~\ref{simulation_DLWyner}.
In this case, the achievable sum rate of RCoF with power allocation is given by
\begin{equation}
R_{\mbox{\tiny sum}} = \frac{1}{2}(\min\{R_{0},R(\hv_{o},\av_{o},\SNR)\} + \min\{R_{0},R(\hv_{e},\av_{e},\SNR)\}),
\end{equation}
where the average, sainted of the minimum, between odd and even numbered UTs is due to the fact that in RCoF
we can use two different lattice codes and therefore the rates aren't constrained to be all equal.
RCoF outperforms CDPC for $R_{0} \leq 6.5$ bits per channel use.

It is remarkable to observe that  the fully practical and easily implementable quantized schemes
QCoF and RQCoF can outperform other conventional practical schemes such as DF and CZFB, respectively.
These results show that CoF and RCoF are good candidate for DAS uplink and downlink, respectively, in particular
in the regime of small to moderate $R_{0}$ and high SNR.
This regime is relevant for small cell networks with limited backhaul cooperation,
where the backhaul becomes the system bottleneck.
Further, we observed that the proposed schemes can be significantly improved by mitigating the impact of the non-integer penalty.
In this model, power allocation is effective due to the system symmetric structure.
However, it is not clear how to extend the power allocation approach in the general case of a wireless network
whose channel matrix is the result of fading, shadowing and pathloss, and therefore it does not enjoy any special easily parameterized structure.
In the next section we address the case of a general wireless network with random channel coefficients, and show that
{\em multiuser diversity} (i.e., AT/UT selection) can greatly improve the performance of the
basic schemes.

\section{Antenna and User Selection}\label{sec:sch}

Since the proposed schemes require an equal number of ATs and UTs active at each given time, in a general
DAS with $K$ UTs and $L$ ATs the system must select which terminals are active in every scheduling slot.
We define the ``active" set of UTs $\Uc \subseteq [1:K]$ as the subset of UTs that are actually scheduled for transmission (resp., reception)
on the current uplink (resp., downlink) slot, comprising $n$ channel uses.  Similarly, the ``active" set of ATs $\Ac \subseteq [1:L]$ is defined as the
subset of ATs that are used for reception (resp., transmission) on the current uplink (resp., downlink) slot.

\subsection{Antenna Selection for the DAS Uplink}

We assume that the active set of UTs is fixed a priori. Without loss of generality, we can fix $\Uc = [1:K]$ and assume $K  < L$. Our goal is to select
a subset $\Ac \subset [1:L]$ of ATs of cardinality $K$.  Recall that every AT chooses the integer combination coefficients, and therefore its vector $\qv_{\ell}$,
using Algorithm 1 in order to maximize its own computation rate $R_\ell = R(\hv_\ell, \av_\ell,\SNR)$. The CP knows $\{\qv_\ell, R_\ell : \ell \in [1:L]\}$.
The CP aims at maximizing the sum rate such that the resulting system matrix is full-rank, by selecting a subset of ATs for the given UT active set $\Uc$.

\subsubsection{AT selection for CoF (or QCoF)}\label{subsubsec:schCoF}

From Theorem \ref{thm:CoF}, the AT selection problem consists of finding $\Ac$ solution of:
\begin{eqnarray}
\max_{\Ac \subset [1:L]} && \sum_{s=1}^{S(\Ac)} |\Ac_{s}| \min\{R_{0},\min \{R_{\ell}: \ell \in \Ac_{s}\}\}\label{opt:obj}\\
\mbox{subject to} &&\mbox{Rank}(\Qm(\Ac,\Uc)) = |\Uc|, \label{opt:const}
\end{eqnarray}
where $S(\Ac)$ indicates the number of disjoint subnetworks with respect to $\Qm(\Ac,\Uc)$.
This problem has no particularly nice structure and the optimal solution is obtained, in general, by exhaustive search
over all $|\Uc| \times |\Uc|$ submatrices of $\Qm([1:L],\Uc)$.
Yet, we notice that if an optimal solution $\Ac^{\star}$ does not decompose
(i.e., $S(\Ac^{\star})=1$), the simple greedy Algorithm 2 given below
finds it (see Lemma \ref{lem:greedy}).
Namely, there exists a low-complexity algorithm to find an optimal AT selection for dense networks whose system matrix $\Qm([1:L],\Uc)$ cannot be decomposed
in block-diagonal form.

In general, we may have several disjoint subnetworks, each of which does not decompose further, even when removing
some ATs.  Then, we can perform antenna selection by using Algorithm 2 on each subnetwork component.
If the optimum solution of each subnetwork component does not involve further network decomposition,
by Lemma \ref{lem:greedy} we are guaranteed to arrive at an optimal global solution.
This generally suboptimal (but efficient) approach can be summarized as
\begin{itemize}
\item For given $\Qm = \Qm([1:L],\Uc)$, perform network decomposition using
depth-first or breadth-first search \cite{Ahuja}, yielding disjoint subnetworks $\Qm(\Ac_{s},\Uc_{s})$ for $s=1,\ldots,S$.
\item For each subnetwork $\Qm(\Ac_{s},\Uc_{s})$, run Algorithm $2$ and find a good selection $\Ac_{s}^{\star} \subset \Ac_{s}$ with
$|\Ac_{s}^{\star}|=|\Uc_{s}|$.
\item Finally, obtain the set of active ATs, $\Ac^{\star} = \cup_{s=1}^{S} \Ac_{s}^{\star}$, such that $|\Ac^{\star}| = |\Uc|$.
\end{itemize}

\begin{algorithm} \label{alg:greedy}
\caption{The Greedy Algorithm}
\textbf{Input}: ($\Qm$, $\{w_{\ell}: \ell=1,\ldots,m\})$ where $\Qm$ is a full-rank $m \times n$ matrix with $m > n$

\textbf{Output}: $\Sc \subset [1:m]$ with $|\Sc| = n$

\begin{enumerate}
\item Sort $[1:m]$ such that $w_{1} \geq w_{2} \geq \cdots \geq w_{m}$
\item Initially, $\ell=1$ and $\Sc = \emptyset$
\item If $\hbox{Rank}(\Qm(\Sc \cup \{\ell\},[1:n])) > \hbox{Rank}(\Qm(\Sc,[1:n]))$,\\
then $\Sc \leftarrow \Sc \cup \{\ell\}$
\item  Set $\ell = \ell+1$
\item  Repeat 3)-4) until $|\Sc| = n$
\end{enumerate}
\end{algorithm}

We have

\begin{lemma}\label{lem:greedy}
If $\mbox{Rank}(\Qm) = n$, Algorithm $2$ finds a solution to the problem
\begin{eqnarray}
\max_{\Sc \subset [1:m]} && \min \{w_{\ell} : \ell \in \Sc\} \label{opt:CoF1}\\
\mbox{subject to}&& \mbox{Rank}(\Qm(\Sc,[1:n])) = n. \label{opt:CoF2}
\end{eqnarray}
\end{lemma}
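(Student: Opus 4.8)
The plan is to show that the greedy algorithm produces a feasible set (full rank) achieving the max-min objective, by exploiting the matroid structure of linear independence among the rows of $\Qm$. The key observation is that the feasibility constraint $\mbox{Rank}(\Qm(\Sc,[1:n])) = n$ together with $|\Sc| = n$ means exactly that $\Sc$ indexes a \emph{basis} of the row space of $\Qm$ (a maximal independent set in the linear matroid defined by the rows $\qv_1,\ldots,\qv_m$ over $\FF_{p^2}$). The greedy algorithm is precisely the standard matroid-greedy procedure run on weights $w_\ell$: it sorts the elements in decreasing weight order and adds each element whose inclusion strictly increases the rank (i.e., keeps the set independent). Since $\mbox{Rank}(\Qm)=n$, the matroid has rank $n$, so the greedy loop terminates with $|\Sc|=n$ and $\Sc$ a basis, guaranteeing feasibility.

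First I would verify feasibility: by the matroid exchange property, any maximal independent set built by adding rank-increasing elements reaches cardinality equal to the matroid rank, namely $n$. Hence step 5) terminates with a full-rank selection, so constraint (\ref{opt:CoF2}) holds. Second, I would establish optimality of the max-min objective. The objective (\ref{opt:CoF1}) is $\min\{w_\ell : \ell \in \Sc\}$, and we want to \emph{maximize} this minimum over all feasible bases. The crucial structural fact is monotone: a feasible $\Sc$ exists with $\min_{\ell \in \Sc} w_\ell \geq t$ if and only if the restriction of the matroid to the high-weight elements $\{\ell : w_\ell \geq t\}$ has full rank $n$. Equivalently, the optimal value equals the largest threshold $t$ such that $\{\ell : w_\ell \geq t\}$ still spans the whole row space.

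The central step is to argue that the greedy algorithm, processing elements in decreasing $w_\ell$ order, automatically attains this optimal threshold. The point is that when greedy adds its \emph{last} element to complete the basis, that element has the smallest weight in $\Sc$ and thus determines the objective value. Because elements are considered in decreasing weight order and added whenever they increase rank, the final selected element is the one of largest weight whose inclusion is forced to complete a spanning set — equivalently, removing all strictly-smaller-weight elements from consideration would leave the high-weight prefix rank-deficient. This is exactly the matroid-theoretic characterization that greedy finds the lexicographically-optimal (hence max-min optimal) basis; the bottleneck element selected by greedy coincides with the optimal threshold $t^\star$. Formally I would argue by contradiction: suppose some feasible $\Sc'$ has $\min_{\ell \in \Sc'} w_\ell > \min_{\ell \in \Sc} w_\ell$ where $\Sc$ is greedy's output. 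Then all of $\Sc'$ lies in the high-weight prefix that greedy already processed, and $\Sc'$ is independent and spanning there, so the prefix has full rank; but then greedy would have completed the basis before reaching its actual bottleneck element, a contradiction with how greedy selects rank-increasing elements in order.

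\textbf{The main obstacle} will be making the max-min optimality argument clean rather than merely the feasibility part, since feasibility is immediate from the matroid rank property. The subtle point is correctly handling the possibility that greedy \emph{skips} some high-weight elements (those not increasing rank) and that the true optimum might require a different independent subset of the high-weight elements than the one greedy happens to select. The resolution is to invoke the fundamental theorem that for any weighting, the greedy algorithm on a matroid returns a maximum-weight basis, and to note that the max-min objective is determined solely by which weight threshold can still span — a quantity depending only on the matroid closure of weight-superlevel sets, not on the particular independent set chosen. I would therefore phrase the optimality proof entirely in terms of the superlevel sets $\{\ell : w_\ell \geq t\}$ and their ranks, sidestepping any dependence on the specific greedy choices among tied or redundant high-weight elements.
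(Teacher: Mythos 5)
Your proof is correct and takes essentially the same route as the paper's: after sorting the rows by decreasing weight, the optimal value is characterized as the largest threshold (equivalently, the shortest prefix) whose weight-superlevel set has rank $n$, and the greedy selection attains exactly that bottleneck index. The matroid-theoretic framing you add is a harmless elaboration of the paper's one-line reduction to ``find the minimum row index $\ell^{\dagger}$ such that the prefix has rank $n$.''
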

\begin{IEEEproof}
Let $\hat{\Qm}$ be the row-permuted matrix of $\Qm$ according to the decreasing ordering of the weights  $w_{\ell}$.
The problem is then reduced to finding the minimum row index $\ell^{\dagger}$
such that $\hat{\Qm}([1:\ell^{\dagger}],[1:n])$ has rank $n$. This is precisely what  Algorithm 2 does.
\end{IEEEproof}

An immediate corollary of Lemma \ref{lem:greedy} is that, if one disregards network decomposition,
then Algorithm 2 finds the maximum computation rate over the AT selection. In fact, it is sufficient to use Algorithm 2 with
$m = L$, $n = K$, and input $\Qm = \Qm([1:L],\Uc)$ and $w_\ell = \min\{R_0,R_\ell\}$ for $\ell = 1,\ldots, L$.

\subsubsection{AT selection for LQF}\label{subsubsec:schLQF}

From Theorem \ref{lem:LQF}, the AT selection problem consists of finding $\Ac$ solution of:
\begin{eqnarray}
\max_{\Ac \subset [1:L]} && \sum_{\ell \in \Ac} \min\{R_0,R_{\ell}\} \label{opt:objLQF}\\
\mbox{subject to} &&\mbox{Rank}(\Qm(\Ac,\Uc)) = |\Uc|,\label{opt:constLQF}
\end{eqnarray}
where we let $R_\ell = 2\log p - H(\zeta(\hv_\ell,\av_\ell))$ (see Section \ref{subsec:QCoF}).
This problem consists of the maximization of linear function subject to a matroid constraint,
where the matroid $\Mc = (\Omega,\Ic)$ is defined by the ground set $\Omega = [1:L]$ and by the collection of independent sets
$\Ic = \{\Ac \subseteq \Omega: \Qm(\Ac,\Uc) \mbox{ has linearly independent rows}\}$.
Rado and Edmonds \cite{Rado, Edmonds} proved that a greedy algorithm finds an optimal solution. In this case, such algorithm coincides with
Algorithm $2$ with input $ \Qm = \Qm([1:L],\Uc)$ and  $w_\ell = \min\{R_0,R_{\ell}\}$.

\subsection{User Selection for the DAS Downlink}

In this case we assume that the set of ATs $\Ac = [1:L]$ is fixed and $K > L$. Hence, we wish to select a subset $\Uc \subset [1:K]$ of cardinality $L$
such that the resulting system matrix has rank $L$ and the DAS downlink sum rate is maximized.
The CP has knowledge of the downlink system matrix $\tilde{\Qm}([1:K],\Ac) = [\tilde{\qv}_{1},\ldots,\tilde{\qv}_{K}]^{\transp}$ and the set of individual
user computation rates, $\tilde{R}_k = R(\tilde{\hv}_k,\tilde{\av}_k,\SNR)$ for RCoF, or
$\tilde{R}_k =  2\log p - H(\zeta(\tilde{\hv}_k,\tilde{\av}_k))$ for RQCoF (see Theorem \ref{lem:sumRQCoF}).
The UT selection problem consists of finding $\Uc$ solution of:
\begin{eqnarray}
\max_{\Uc \subset [1:K]} && \sum_{k \in \Uc} \min\{R_{0}, \tilde{R}_{k}\}\\
\mbox{subject to} &&\mbox{Rank}(\tilde{\Qm}(\Uc,\Ac)) = |\Ac|.
\end{eqnarray}
As noticed before, this can be regarded as the maximization of linear function over matroid constraint.
Therefore, Algorithm 2  with input $\Qm = \tilde{\Qm}([1:K],\Ac)$ and $w_k = \min\{R_{0},\tilde{R}_{k}\}$ provides an optimal solution.

\subsection{Comparison on the Bernoulli-Gaussian Model}\label{sec:BG}

\begin{figure}
\centerline{\includegraphics[width=11cm]{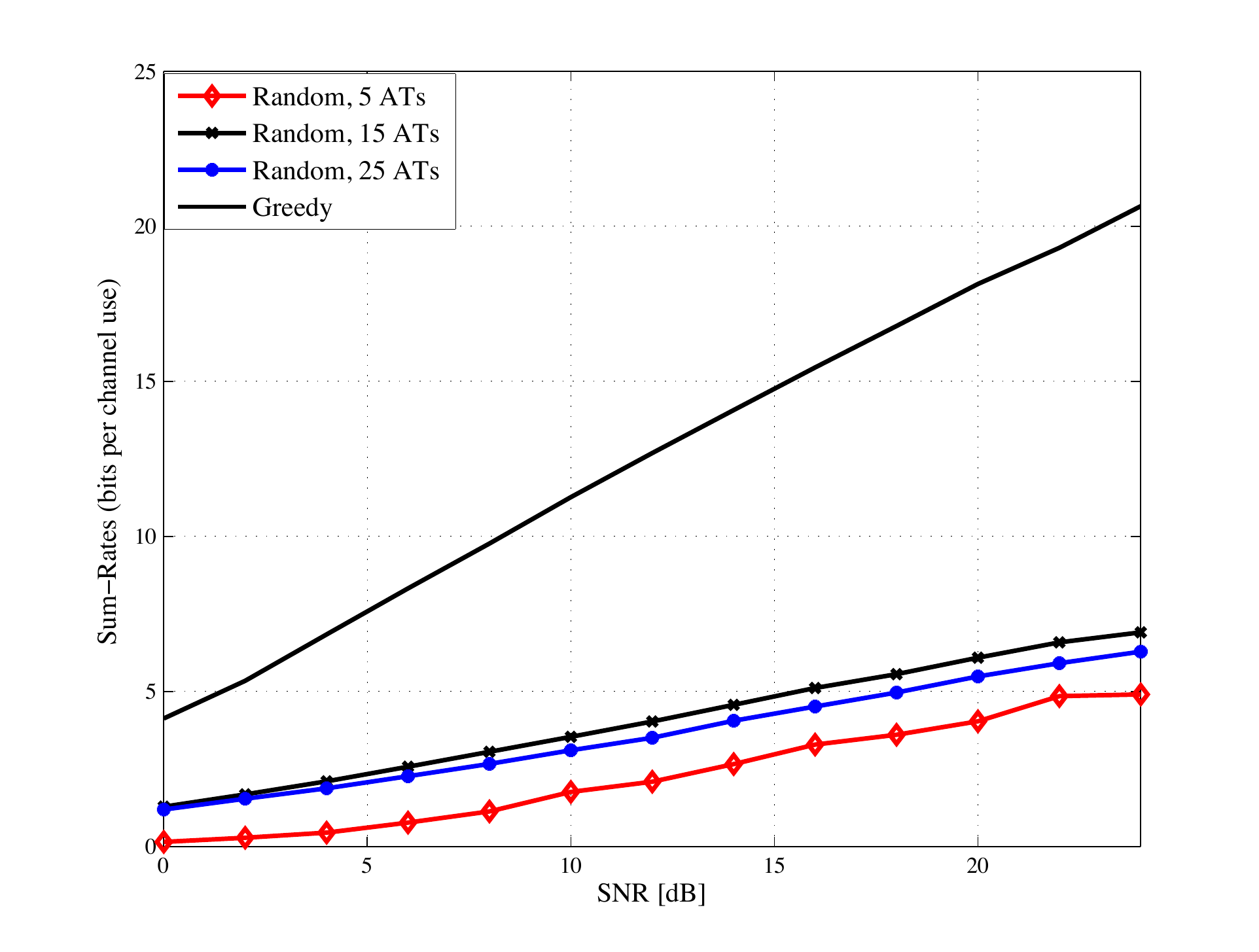}}
\caption{DAS uplink with $K = 5$, $L = 25$ and $R_0 = 6$ bit/channel use: average sum rate vs. SNR
on the Bernoulli-Gaussian model with $q = 0.5$.}
\label{simulation_UL}
\end{figure}

\begin{figure}
\centerline{\includegraphics[width=11cm]{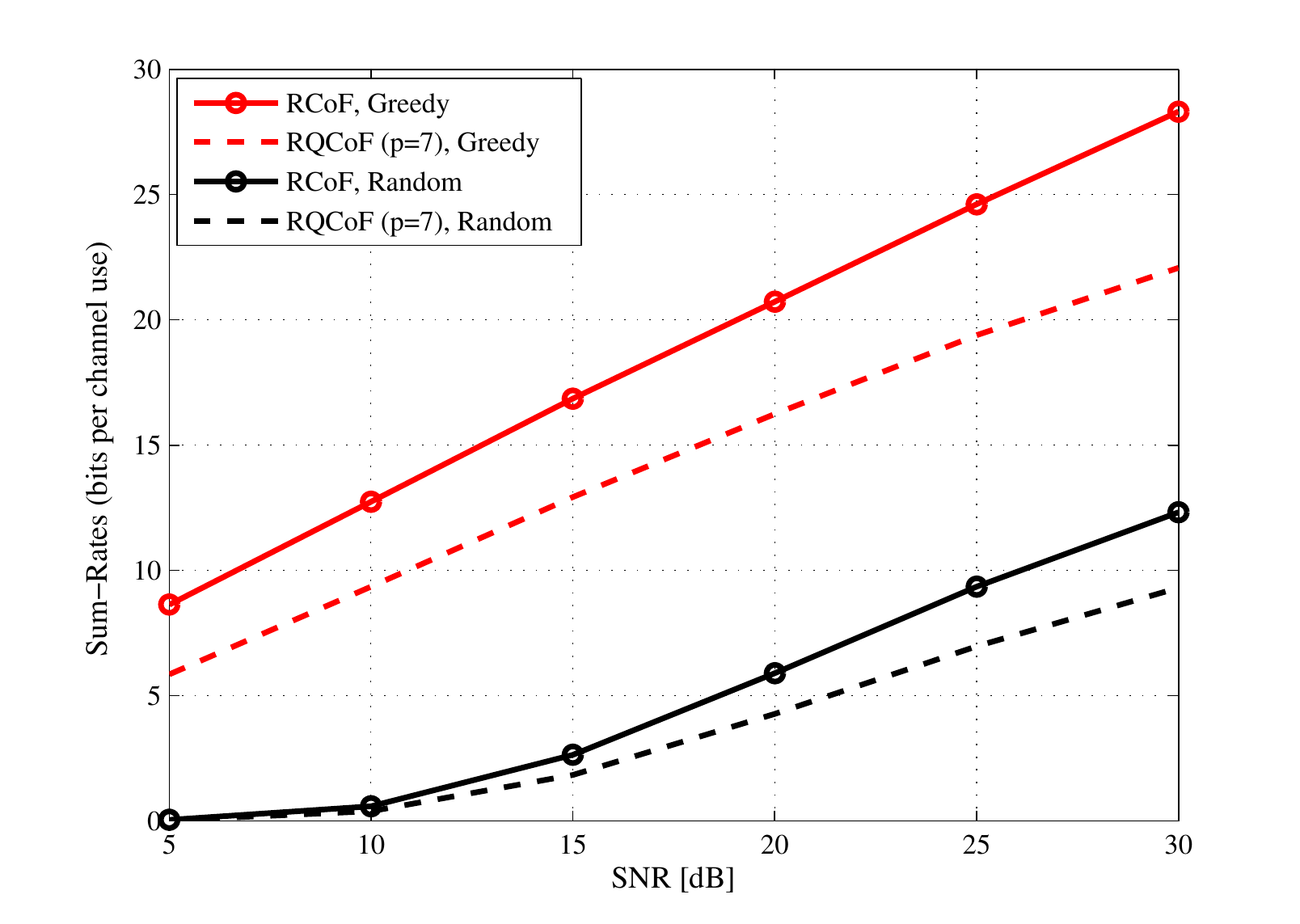}}
\caption{DAS downlink with $K = 25$, $L = 5$ and $R_0 = 6$ bit/channel use: average sum rate vs. SNR
on the Bernoulli-Gaussian model with $q = 0.5$.}
\label{simulation_DLsch}
\end{figure}

We consider a DAS with channel matrix with i.i.d. elements $[\Hm]_{\ell,k} = h_{\ell,k} \gamma_{\ell,k}$, where
$h_{\ell,k} \sim \Cc\Nc(0,1)$ and $\gamma_{\ell,k}$ is a Bernoulli random variable with $\PP(\gamma_{\ell,k} = 1) = q$.
This model captures the presence of Rayleigh fading and some extreme form of path-blocking
shadowing, and it is appropriate for a DAS deployed in buildings, or dense urban environments where the ATs
are not mounted on tall towers, in contrast to conventional macro-cellular systems. For the downlink results, we assume a channel matrix $\tilde{\Hm}$ with the same statistics.

We compute the {\em ergodic} sum rates by Monte Carlo averaging with respect to the
channel matrix.  If the resulting system matrix, after AT (resp., UT) selection is rank deficient,
then the achieved instantaneous sum rate is zero, for that specific realization. Hence, rank deficiency can be
regarded as a sort of ``information outage'' event.
With the path gain coefficients and noise variance normalization adopted here,
the SNR coincides with the individual nodes power constraint.

Fig.~\ref{simulation_UL} shows the average sum rate for a DAS uplink with $K = 5$ UTs, $L = 25$ ATs and channel blocking probability
$q = 0.5$. This result clearly show that the proposed ``greedy'' AT selection scheme yields a large improvement over random selection of a fixed number of ATs, and essentially eliminates the problem of system matrix rank deficiency, provided that $L \gg K$.
The curves denoted as ``random selection'' indicate the case where a fixed number $L' < L$ of ATs  is randomly and uniformly selected,
independent of channel realizations. For $L'=25$ the DAS uses all the available ATs all the time, yet its performance is much worse than selecting $5$ ATs out of $25$ according to the proposed selection scheme. Fig.~\ref{simulation_DLsch} shows a similar trend for the DAS downlink.
Here, random selection indicates that $5$ UTs are chosen at random out of the $25$ UTs.
We notice that the sum rate vs. SNR curves for both greedy and random UT selection have the same slope, indicating that the rank-deficiency problem
is not significant in both cases. However, greedy selection achieves a very evident {\em multiuser diversity} gain over random selection. This is not
only due to selecting channel vectors with large gains, as in conventional multiuser diversity, but also to the fact that the greedy selection is able to
choose channels that are adapted to the RCoF strategy, i.e., whose coefficients are well approximated by integers (up to a common scaling factor).
It is also interesting to notice that RQCoF with greedy selection does not suffer from the rank-deficiency of the system matrix even for
$p$ as small as 7, in the example. This is indicated by the fact that the sum rate gap between RQCoF and RCoF is essentially equal to the
shaping loss (0.5 bits per user).

\begin{figure}
\centerline{\includegraphics[width=11cm]{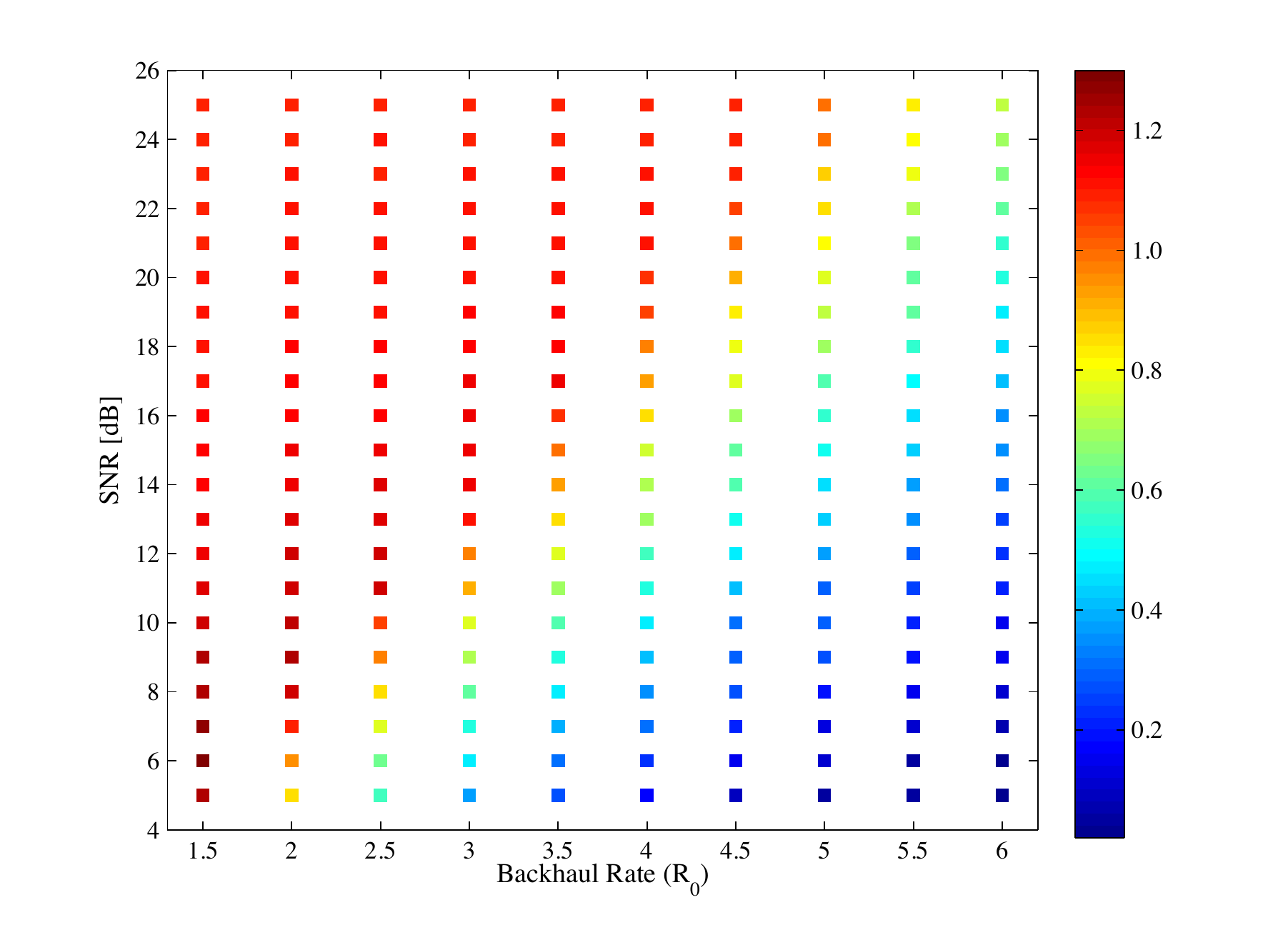}}
\caption{DAS uplink with $K = 5$ and $L = 50$,
Bernoulli-Gaussian model with $q = 0.5$:  Colors represent the relative gain of CoF versus QF (e.g., ratio of sum rates $R_{\mbox{\tiny{CoF}}}/R_{\mbox{\tiny{QF}}}$).}
\label{simulation_uplinkcol}
\end{figure}

\begin{figure}
\centerline{\includegraphics[width=11cm]{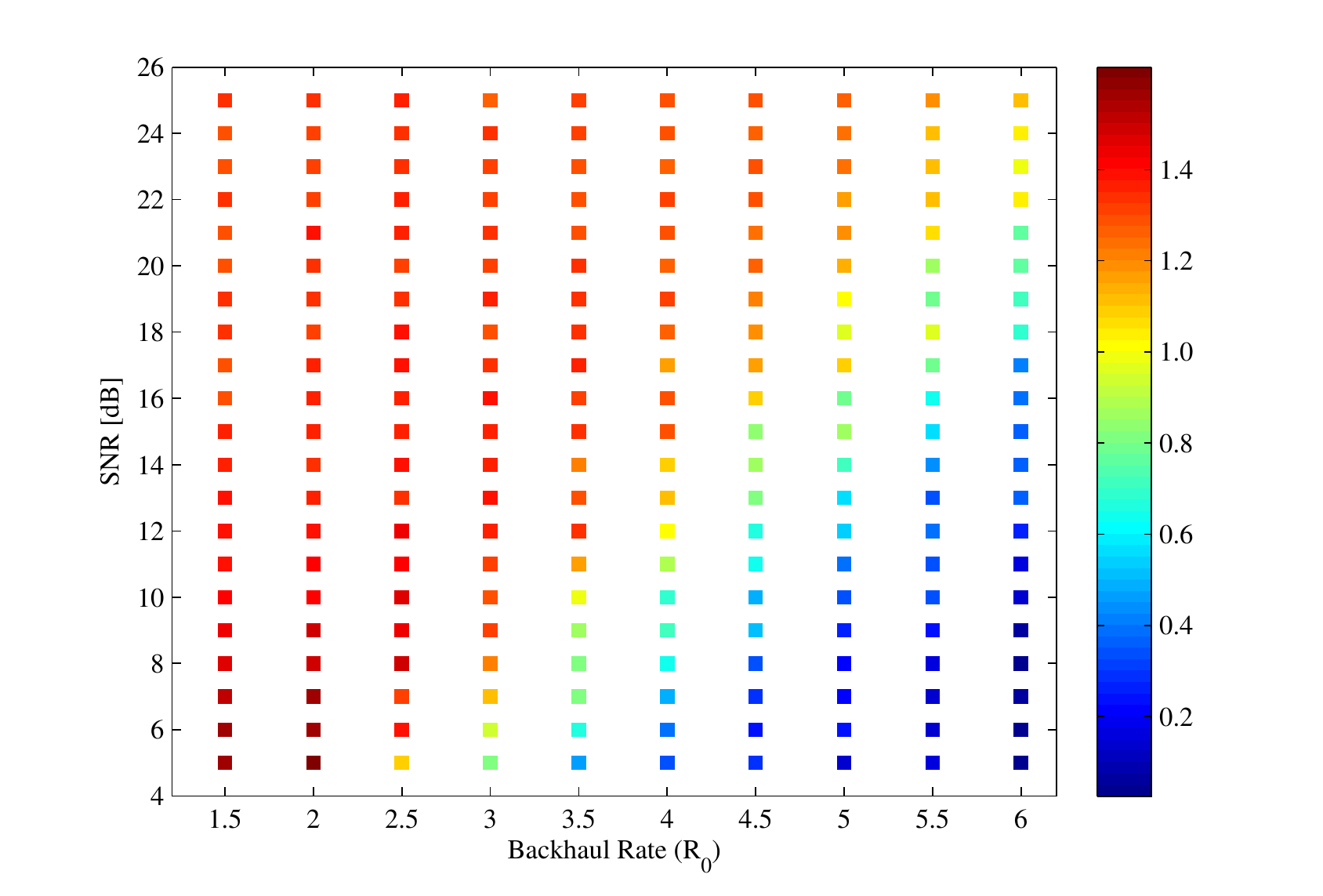}}
\caption{DAS downlink with $K = 50$ and $L = 5$, Bernoulli-Gaussian model with $q = 0.5$:  Colors represent the relative gain of RCoF versus CDPC (e.g., ratio of sum rates $R_{\mbox{\tiny{RCoF}}}/R_{\mbox{\tiny{CDPC}}}$).}
\label{simulation_downlinkcol}
\end{figure}

We compared the proposed schemes with QF (uplink) and CDPC (downlink) over the Bernoulli-Gaussian model.
Recall that QF is a special case of QMF without binning, whose achievable sum rate is given in (\ref{eq:sumrateQMFs}).
In QF, more observations (i.e., more active ATs) generally improve the sum rate and thus AT selection is not needed
for the sake of maximizing the sum rate. Yet, for a fair comparison with the same total backhaul capacity, we considered
a greedy search that selects $L' = K < L$ active ATs, by maximizing at each step the
achievable sum rate. From Fig.~\ref{simulation_uplinkcol}, we observe that CoF outperforms QF when $R_{0}$ is small relatively to the
channel SNR. In this regime, the quantization noise dominates with respect to the non-integer penalty.
Instead, when $R_{0}$ increases, eventually QF outperforms CoF.
Fig.~\ref{simulation_downlinkcol} presents a comparison between RCoF and CDPC, leading to similar conclusions
for the DAS downlink.

Next, we examine the performance of the proposed low-complexity schemes QCoF, LQF, and RQCoF, by focusing on a small cell network scenario,
where ATs and UTs are close to each other. This yields  reflected by consider a fixed and relatively large SNR value ($\SNR = 25$ dB in our simulation), and comparing performances versus $R_0$, which becomes the main system bottleneck.
Fig.~\ref{simulation_LQFR0} shows that QCoF and LQF are competitive with respect to the performance of QF,
with significantly lower decoding  complexity. Furthermore, an additional remarkable feature of the lattice-based schemes
is that they can substantially reduce the channel state information overhead.
When QCoF (or LQF) with $p=7$ is used, each AT $\ell$ only requires $2K\log(7) \approx 28$ (with $K=5$)
bits of feedback per scheduling slot in order to forward the integer combination coefficients
(i.e., $\qv_{\ell}=(q_{\ell,1},...,q_{\ell,K})$) to the CP.

In Fig.~\ref{simulation_RQCoFR0}, RQCoF with $p=17$ can achieve the same spectral efficiency
of CDPC for $R_{0} \leq 5$ bits  and outperforms CZFB in the range of $R_{0} \leq 6$ bits.
For CZFB, we made use of the standard greedy user selection approach \cite{Dimic,Yoo} to find a subset
of $K' < K$ (with $K'=5$ in our simulation) active UTs. As expected, QCoF (resp., RQCoF) can achieve the performance of
QF (resp., CDPC) when the wired backhaul rate $R_0$ is not over-dimensioned with respect to the capacity
of the wireless channel. These observations point out that the proposed schemes are suitable for
low-complexity implementation of cooperative home networks where small home-based
access points are connected to the CP via digital subscriber line (DSL).

\begin{figure}
\centerline{\includegraphics[width=11cm]{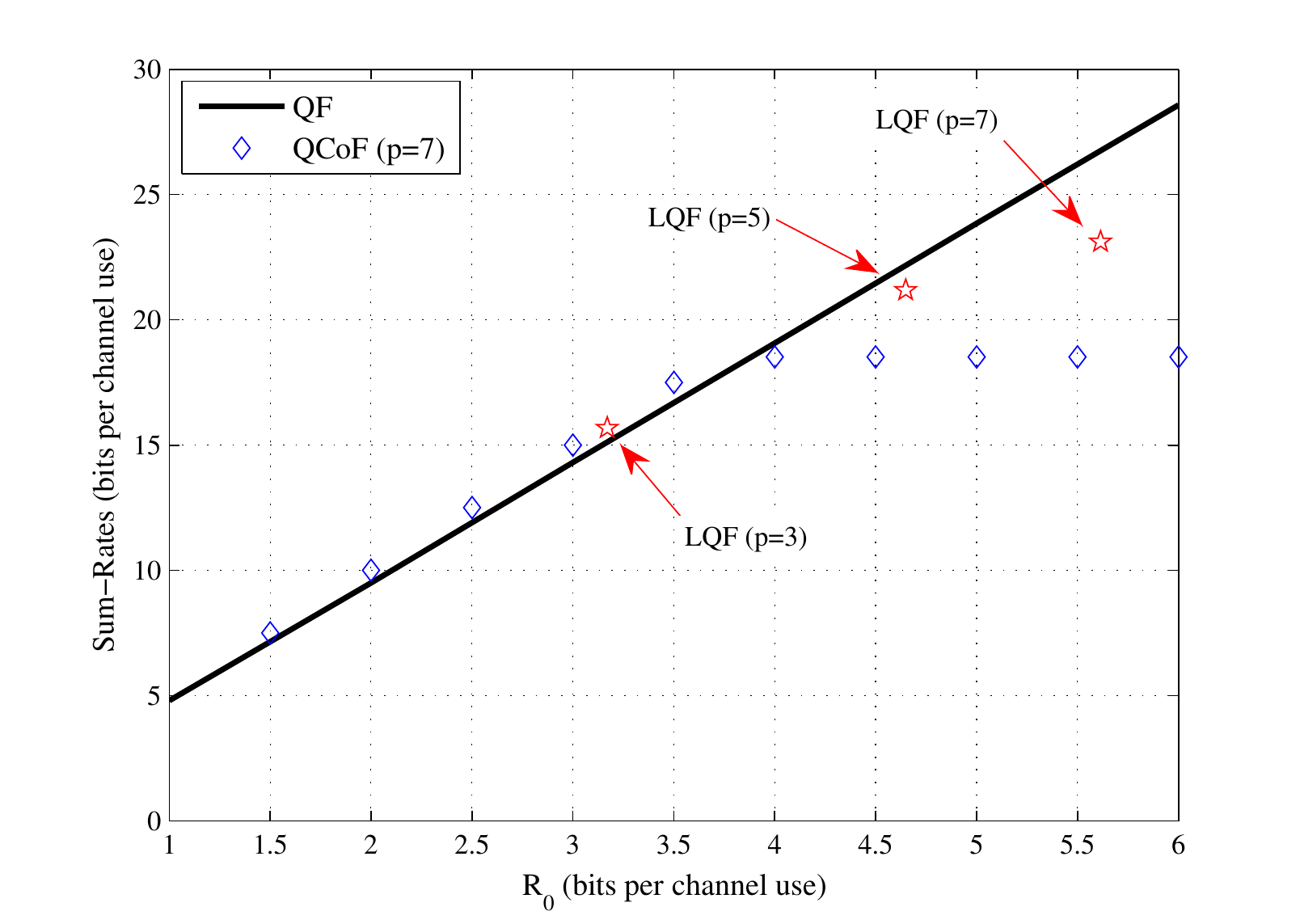}}
\caption{DAS uplink with $\SNR=25$ dB, $K = 5$ and $L = 50$: achievable sum rates as a function of $R_{0}$.}
\label{simulation_LQFR0}
\end{figure}

\begin{figure}
\centerline{\includegraphics[width=11cm]{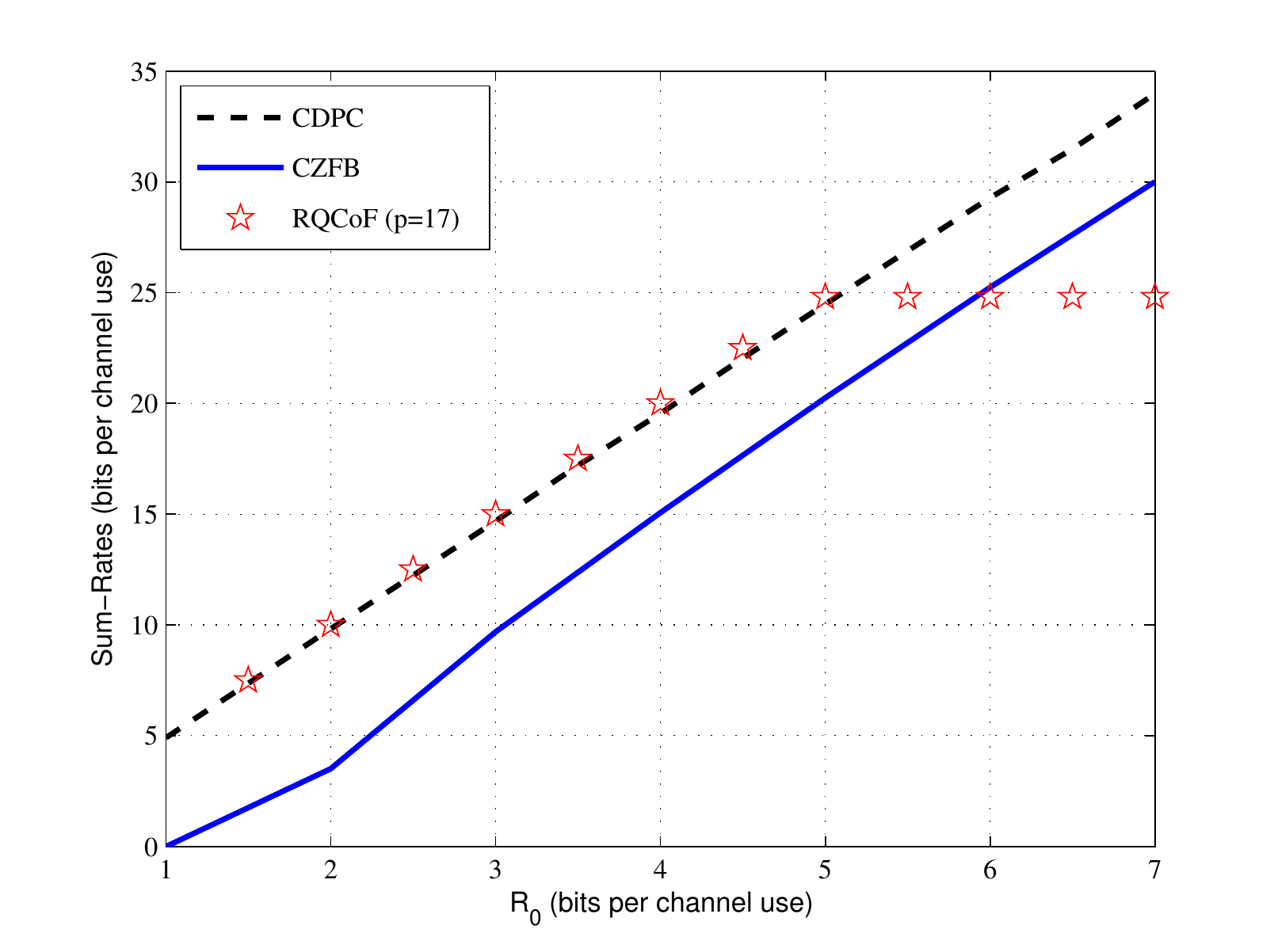}}
\caption{DAS downlink with $\SNR=25$ dB, $K = 50$ and $L = 5$: achievable sum rates as a function of $R_{0}$.}
\label{simulation_RQCoFR0}
\end{figure}

\section{Integer-Forcing Beamforming for the High-Capacity Backhaul Case} \label{sec:IFB}

CDPC (downlink) and QMF (uplink) are known to be optimal in the limit of $R_{0} \rightarrow \infty$. In fact, they converge to
the capacity achieving schemes of the corresponding MIMO G-BC and G-MAC channel models.  In this case, CoF and RCoF have no merit
because the impact of the non-integer penalty does not vanish as $R_0$ increases.
In this section we focus on the downlink in the regime of $R_0 \rightarrow \infty$. DPC is notoriously difficult to be implemented in practice,
since it requires nested lattice coding with shaping lattice $\Lambda$ of high dimension (see for example \cite{Erez2005,Bennatan2006}).
On the other hand, it is well-known that restricting the shaping lattice to have low dimension, in order to make the modulo-$\Lambda$ operation of manageable complexity,
does not provide significant performance benefits with respect to the simple Tomlinson-Harashima precoding approach,
which is equivalent to perform shaping with the cubic  lattice $\Lambda = \tau \ZZ[j]$ \cite{Windpassinger,Boccardi,CaireDIMACS}. Furthermore, it is also known that Tomlinson-Harashima
precoding for the MIMO G-BC does not provide significant gains with respect to simpler linear beamforming techniques, especially when
user selection and multiuser diversity can be exploited \cite{Yoo}.
Therefore, linear beamforming schemes are often proposed as a viable tradeoff between performance and complexity. When multiuser diversity cannot be exploited (e.g.,
the number of UTs $K$ is not large), linear beamforming may suffer from significant performance degradation when the channel matrix is
near singular.

For the uplink case, \cite{Jiening} show proposes an {\em Integer-Forcing Receiver} (IFR) that can approach the
performance of joint decoding with lower complexity and significantly outperforms the traditional linear multiuser detector schemes (e.g., the decorrelator or the linear MMSE detector) concatenated with single-user decoding.
The main idea is that the receiver antennas are used to create an effective channel matrix with integer-valued coefficients,
and CoF is used for the resulting integer-valued channel matrix, incurring no non-integer penalty.

In this section, we present a new beamforming strategy called Integer-Forcing Beamforming (IFB), that produces a similar effect
for the downlink. The precoding matrix $\Bm=[\bv_{1},\ldots,\bv_{L}]^{\transp}$ is chosen such that
the resulting effective channel matrix $\tilde{\Hm} \Bm$ is integer valued, i.e., $\tilde{\Hm}\Bm = \tilde{\Am}$,
with $\Bm=\tilde{\Hm}^{-1}\tilde{\Am}$ for some integer matrix $\tilde{\Am}$.
Then, RCoF can be applied as described in Section \ref{sec:DASRCoF}, to the resulting integer-valued effective channel matrix,
incurring no non-integer penalty. In short, IFB removes the non-integer penalty of RCoF but introduces a power
penalty (as in ZFB) due to the non-unitary precoding matrix $\Bm$. Notice that if we restrict $\tilde{\Am} = \Id$, then IFB coincides with ZFB.
Therefore, by allowing $\tilde{\Am}$ to be a general integer matrix, IFB performs at least as good as ZFB, and usually
significantly outperforms it, since its power penalty can be greatly reduced.  Although not investigated further in this work, we observe
here that a more general family of scheme might be devised by trading off the linear precoder  power penalty with the
RCoF non-integer penalty, by imposing an ``approximated'' integer forcing condition.

The detailed procedures of IFB  for a given $\tilde{\Am}$ (to be optimized later) is as follows:
\begin{itemize}
\item {\em Precoding over $\FF_{p^{2}}$ to eliminate integer-valued interferences}: Following the RCoF scheme
of Section \ref{sec:DASRCoF}), the CP precodes the zero-padded information messages $\{\underline{\tilde{\wv}}_\ell\}$ using
$\tilde{\Qm}^{-1}=g^{-1}([\tilde{\Am}] \mod p\ZZ[j])$ as in (\ref{eq:precod}), encodes the precoded messages $\{\underline{\tilde{\muv}}_\ell\}$
into the codewords $\{\underline{\tilde{\nuv}}_\ell\}$ and generates the  channel inputs
$\underline{\tilde{\xv}}_{\ell} = [\underline{\tilde{\nuv}}_\ell + \underline{\tilde{\dv}}_\ell] \mod \Lambda$, for $\ell=1,\ldots,L$, where
$\underline{\tilde{\dv}}_\ell$ are dithering sequences, as in (\ref{eq:channelinput}).
\item {\em Precoding over $\CC$ to create integer-valued channel matrix}: Using $\Bm = \tilde{\Hm}^{-1} \tilde{\Am}$, the CP produces
the precoded channel inputs
\begin{equation} \label{B-precoding}
\left [ \begin{array}{c}
\underline{\tilde{\vv}}_{1} \\ \vdots \\ \underline{\tilde{\vv}}_{L} \end{array} \right ] = \Bm
\left [ \begin{array}{c}  \underline{\tilde{\xv}}_{1} \\ \vdots \\ \underline{\tilde{\xv}}_{L} \end{array} \right ].
\end{equation}
\item Letting $k_\ell$ denote the index of the UT destination of the $\ell$-th message, its received signal is given by
\begin{equation}
\underline{\tilde{\yv}}_{k_\ell} = \sum_{\ell' =1}^L  \tilde{a}_{k_\ell,\ell'} \underline{\tilde{\xv}}_{\ell'} + \underline{\tilde{\zv}}_{k_\ell}.
\end{equation}
This is a G-MAC channel as in (\ref{GMAC}), with integer channel coefficients. Therefore,  IFB has eliminated
the non-integer penalty of RCoF. Finally, it is immediate to check (same steps as in (\ref{minchia})), that the
integer-valued interferences is eliminated by RCoF, i.e.,
each UT $k_\ell$ decodes its own lattice code $\Lc_\ell$ without multiuser interference.
\end{itemize}
The per-antenna power constraint imposes
\begin{equation}\label{eq:power}
\frac{1}{n} \EE \left [ \| \underline{\tilde{\vv}}_\ell\|^2 \right ] \leq \SNR, \;\; \mbox{for} \; \ell = 1,\ldots, L.
\end{equation}
From (\ref{B-precoding}), we have
\begin{eqnarray}
\frac{1}{n} \EE \left [ \| \underline{\tilde{\vv}}_\ell\|^2 \right ] & = &
\sum_{\ell' =1}^L  \frac{1}{n} \EE \left [ \| \underline{\tilde{\xv}}_{\ell'} \|^2 \right ] |b_{\ell',\ell}|^2.
\end{eqnarray}
Since $\underline{\tilde{\xv}}_\ell$ is uniformly distributed on $\Vc_\Lambda$, the constraint (\ref{eq:power}) yields
\begin{equation}
\frac{1}{n} \EE \left [ \| \underline{\tilde{\xv}}_{\ell} \|^2 \right ] = \sigma_\Lambda^2 = \frac{\SNR}{\max \{\|\bv_{\ell'}\|^2 : \ell' = 1,\ldots, L\}}, \;\; \mbox{for} \; \ell = 1,\ldots, L.
\end{equation}
Hence, we have:

\begin{theorem} IFB applied to a MIMO G-BC with channel matrix $\tilde{\Hm} \in \CC^{L \times L}$ achieves
the sum rate
\begin{equation}
R_{\mbox{\tiny{IFB}}}(\tilde{\Hm},\tilde{\Am}) = \sum_{\ell=1}^{L} R\left (\tilde{\av}_{\ell},\tilde{\av}_{\ell}, \SNR/\max\{\|\bv_{\ell'} : \ell'=1,\ldots, L\} \|^2 \right ),
\end{equation}
where we let $\tilde{\Am} = [\tilde{\av}_{1},\ldots,\tilde{\av}_{L}]^{\transp}$ and $\tilde{\Hm}^{-1}\tilde{\Am}
= [\bv_{1},\ldots,\bv_{L}]^{\transp}$.
\hfill \IEEEQED
\end{theorem}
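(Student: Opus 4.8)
The plan is to reduce the scheme to the RCoF setting of Section~\ref{sec:DASRCoF}, but applied to an \emph{integer-valued} effective channel, so that Theorem~\ref{thm:RCoF} carries over verbatim up to a single rescaling of the transmit power. Concretely, I would show that the complex precoding $\Bm=\tilde{\Hm}^{-1}\tilde{\Am}$ merges with the physical channel $\tilde{\Hm}$ to turn it into the integer matrix $\tilde{\Am}$, while the lattice structure of the transmitted signals $\underline{\tilde{\xv}}_\ell$---and hence the CoF receiver mapping at each UT---is left untouched. The non-integer penalty then disappears, and the only effect of $\Bm$ is the per-antenna power penalty already bookkept above the statement.

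First I would record the effective channel seen by each UT. Stacking $\underline{\tilde{\Vm}}=\Bm\underline{\tilde{\Xm}}$ and using $\tilde{\Hm}\Bm=\tilde{\Am}$ by definition of $\Bm$, the signal received at the destination of the $\ell$-th message (taking $k_\ell=\ell$ in the square case $K=L$) is
\begin{equation*}
\underline{\tilde{\yv}}_\ell = \tilde{\hv}_\ell^\transp \Bm \underline{\tilde{\Xm}} + \underline{\tilde{\zv}}_\ell = \tilde{\av}_\ell^\transp \underline{\tilde{\Xm}} + \underline{\tilde{\zv}}_\ell = \sum_{\ell'=1}^{L} \tilde{a}_{\ell,\ell'}\,\underline{\tilde{\xv}}_{\ell'} + \underline{\tilde{\zv}}_\ell,
\end{equation*}
i.e.\ precisely a G-MAC as in (\ref{GMAC}) whose channel vector equals the integer vector $\tilde{\av}_\ell$. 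This coincidence of the physical channel vector with the CoF integer coefficient vector is the crux of the argument.

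Next I would invoke the RCoF machinery unchanged. Since the effective channel vector is $\tilde{\av}_\ell$ and we select the CoF integer coefficients equal to $\tilde{\av}_\ell$, the residual self-interference term $\alpha h_k - a_k$ in (\ref{eq:effnoise}) vanishes (one may simply take $\alpha=1$), so there is no non-integer penalty. The finite-field precoding $\tilde{\Qm}^{-1}$ together with Lemma~\ref{lem:inverse} removes the integer-valued multiuser interference exactly as in (\ref{minchia}), so UT $\ell$ recovers its own lattice codeword $\underline{\tilde{\tv}}_\ell$ interference-free. Its achievable rate is therefore the computation rate (\ref{eq:CoFrate}) with channel vector $\tilde{\av}_\ell$, integer vector $\tilde{\av}_\ell$, and SNR equal to the reduced lattice power $\sigma_\Lambda^2=\SNR/\max_{\ell'}\|\bv_{\ell'}\|^2$---this last substitution being exactly the per-antenna accounting carried out above the statement, which forces each $\underline{\tilde{\xv}}_\ell$ to carry power $\sigma_\Lambda^2$ so that the transmitted $\underline{\tilde{\vv}}_\ell$ satisfy (\ref{eq:power}).

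Finally, summing the $L$ independent per-user rates $R(\tilde{\av}_\ell,\tilde{\av}_\ell,\sigma_\Lambda^2)$ yields the claimed expression; the backhaul term drops out because we operate in the $R_0\to\infty$ regime. I expect the main obstacle to be the transparency claim of the first paragraph: one must check that passing the already dithered, modulo-$\Lambda$ reduced lattice signals through the fixed complex map $\Bm$ does not disturb the decoding identity (\ref{minchia}). This holds because $\Bm$ acts purely on the channel side, combining with $\tilde{\Hm}$ into $\tilde{\Am}$, so every modulo and dither-removal operation at the receivers is identical to the plain RCoF scheme of Section~\ref{sec:DASRCoF}.
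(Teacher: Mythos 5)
Your proposal is correct and follows essentially the same route as the paper: fold $\Bm=\tilde{\Hm}^{-1}\tilde{\Am}$ into the channel so that each UT sees the integer-valued G-MAC with coefficient vector $\tilde{\av}_\ell$, apply the RCoF machinery (finite-field precoding plus Lemma~\ref{lem:inverse}) so that each UT is left with its own lattice codeword plus effective noise, and charge the per-antenna power penalty $\sigma_\Lambda^2=\SNR/\max_{\ell'}\|\bv_{\ell'}\|^2$ before summing the $L$ computation rates. One minor caveat: to actually achieve $R(\tilde{\av}_\ell,\tilde{\av}_\ell,\sigma_\Lambda^2)$ as stated you should retain the MMSE scaling $\alpha$ rather than set $\alpha=1$ (the latter yields only $\log\sigma_\Lambda^2$), but your parenthetical is harmless since you ultimately invoke (\ref{eq:CoFrate}) directly.
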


The optimization of $\tilde{\Am}$ as a function of $\tilde{\Hm}$ appears to be a hard integer-programming problem without any particular structure lending itself
to computationally efficient algorithms. Instead, we resort to the suboptimal approach of optimizing $\tilde{\Am}$ with respect to the {\em sum power}, which is
proportional to  $\trace \left ( \Bm \Bm^\herm \right )$. Hence, the sum-power minimization problem takes on the form
\begin{eqnarray}
\min_{\tilde{\Am} \in \ZZ^{L\times L}[j]} &&  \trace \left ( \tilde{\Hm}^{-1} \tilde{\Am} \tilde{\Am}^\herm \tilde{\Hm}^{-\herm} \right ) \nonumber \\
\mbox{subject to} & & \mbox{Rank}(\tilde{\Am}) = L.\label{IFB:opt}
\end{eqnarray}
Writing $\trace \left ( \tilde{\Hm}^{-1} \tilde{\Am} \tilde{\Am}^\herm \tilde{\Hm}^{-\herm} \right ) = \sum_{\ell=1}^L \| \tilde{\Hm}^{-1} \tilde{\Am}([1:L],\ell) \|^2$
where $\tilde{\Am}([1:L],\ell) $ is
the $\ell$-th column of $\tilde{\Am}$, we notice that problem (\ref{IFB:opt}) is equivalent to finding
a reduced basis for the lattice generated by $\tilde{\Hm}^{-1}$.
In particular, the reduced basis takes on the form $\tilde{\Hm}^{-1} \Um$
where $\Um$ is a unimodular matrix over $\ZZ[j]$.  Hence, choosing $\tilde{\Am} = \Um$ yields the minimum sum-power subject
to the full rank condition in (\ref{IFB:opt}).  In practice, we used the (complex) LLL algorithm \cite{Napias}, with refinement of the LLL reduced
basis approximation by Phost or Schnorr-Euchner lattice search.

\begin{figure}
\centerline{\includegraphics[width=11cm]{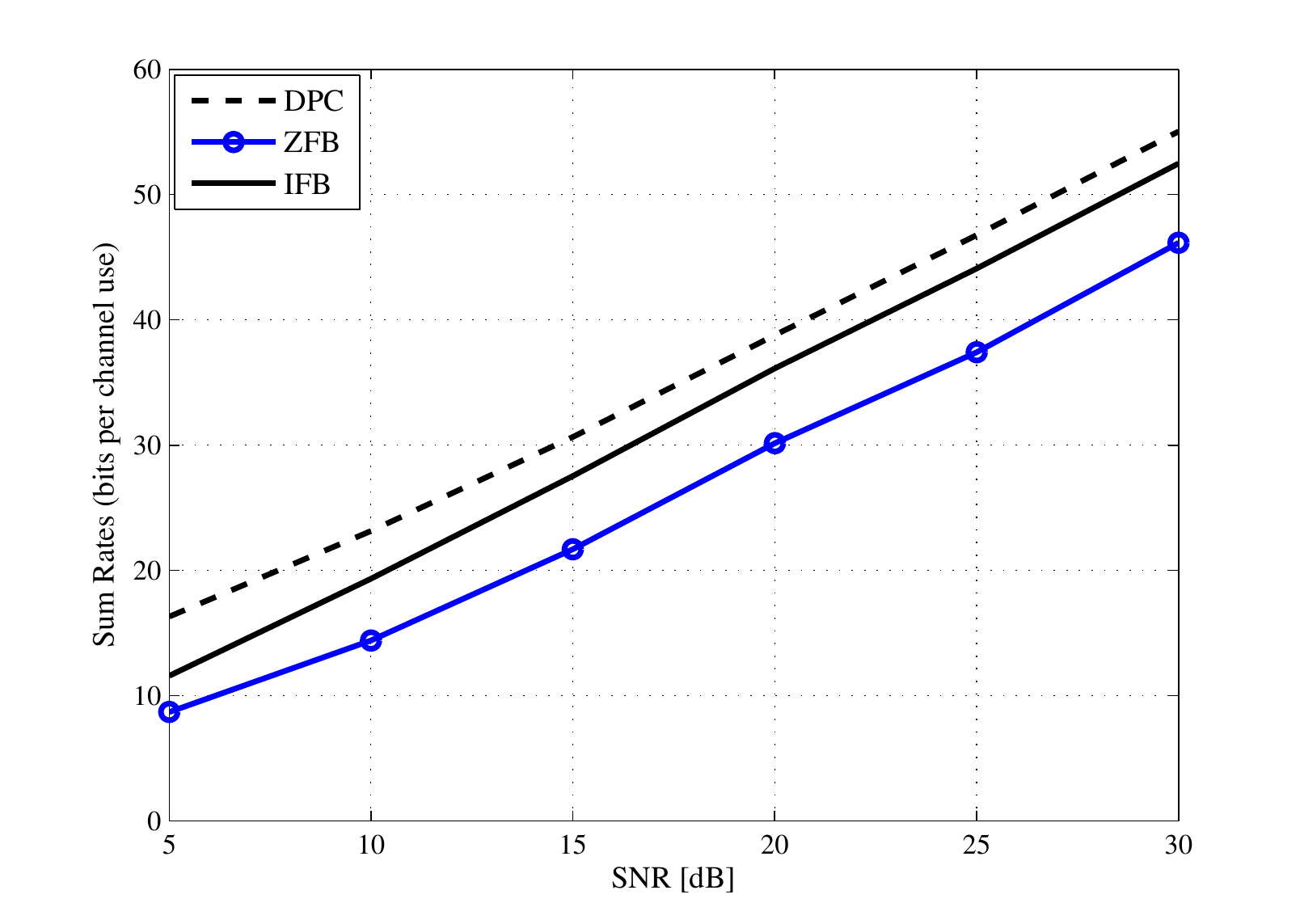}}
\caption{Achievable ergodic sum-rates as a function of SNRs for a MIMO-BC with same number $L = K = 5$ of ATs (transmit antennas) and UTs (users), over independent Rayleigh fading.}
\label{simulation_IFB}
\end{figure}

We consider the DAS downlink with infinite backhaul capacity with $5$ ATs and $5$ UTs.
The channel matrix $\tilde{\Hm}$ has i.i.d. elements $\tilde{h}(k,\ell) \sim \Cc\Nc(0,1)$ (independent Rayleigh fading).
Fig.~\ref{simulation_IFB} shows the ergodic achievable sum rate of IFB, compared with the ergodic channel sum capacity achieved by
DPC and by the sum rate achievable by ZFB. We notice that the proposed IFB downlink scheme
significantly improves over ZFB and approaches the sum capacity within $\approx 0.5$ bits per user.

\section{Conclusions}\label{sec:conclusion}

We considered a Distributed Antenna System (DAS) where several Antenna Terminals (ATs) are connected to a Central Processor (CP) via
digital backhaul links of rate $R_0$ bit/s/Hz. The ATs communicate with several User Terminals (UTs) simultaneously and on the same bandwidth,
such that the signals sent by the ATs interfere at each UT (downlink) and, Vice Versa, the signals sent by the UTs interfere at each AT (uplink).
The DAS uplink is a special case of a three-layers multi-source single-destination relay network, where the ATs play the role of the relays.
The DAS downlink is a special case of a relay broadcast network with one sender and individual messages.
For this setup, we considered the application of the Compute and Forward approach in various forms.
For the DAS uplink, CoF applies directly. In this case, we proposed system optimization based on network decomposition and
on greedy selection of the ATs for a given set of desired active UTs.
For the DAS downlink, we proposed a novel scheme referred to as Reverse CoF (RCoF). This scheme reverse the role of ATs and UTs with respect to
the uplink, and uses linear precoding over the finite field domain in order to eliminate multiuser interference.
In this case, we considered system optimization consisting of selecting a subset of UTs for a given set of active ATs.
It turns out that in this case the problem can be formulated as the maximization of a linear function subject to a matroid constraint, for which
a simple greedy procedure is known to be optimal. We also considered strategies that incorporate the presence of a ADC at the receiver as an unavoidable
part of the channel model. In this case, we can design lattice based strategies that explicitly take into account the presence of the finite resolution
scalar quantizer at the receivers. In particular, this leads to very simple single-user linear coding schemes
over $\FF_q$ with $q = p^2$, and $p$ a prime. Our own results in \cite{songnamITW} and others' results in \cite{Tunali,Feng} show that
it is possible to approach the theoretical performance of random coding using $q$-ary LDPC codes with
linear complexity in the code block length and polynomial complexity in the number of network nodes.
For the regime of large $R_0$, we have also introduced a novel linear precoding scheme referred to as Integer Forcing
Beamforming (IFB). This can be seen as a generalization of zero-forcing beamforming, where
the beam formed channel is forced to have integer coefficients, rather than to a diagonal matrix.
Then, RCoF can be applied to precode over the integer-valued multiuser downlink channel, without further non-integer penalty.

We provided extensive comparison of the proposed lattice-based strategies with
information-theoretic strategies for the DAS uplink and downlink, namely QMF and CDPC, known to be near-optimal.
We observed that the proposed strategies achieve similar and sometimes better performance in certain relevant regimes,
while providing a clear path for practical implementation, while the information-theoretic terms of comparisons are notoriously
difficult to be implemented in practice. As a matter of fact, today's technology relies on the widely suboptimal decode
and forward (DF) scheme for the uplink, or on the compressed linear beamforming approach for the downlink, which are
easily outperformed by the proposed schemes with similar, if not better, complexity.

As a conclusion, we wish to point out that the proposed schemes are competitive when the wired backhaul
rate $R_0$ is a limiting factor of the overall system sum rate.
For example, in a typical home  Wireless Local Area Network setting, the rates supported by the wireless segment are of the order of
10 to 50 Mbit/s, while typical DSL connection between the wireless router and the DSL central office (playing the role of the CP in our scenario)
has rates between 1 and 10 Mb/s. In this case, the schemes proposed in this paper can provide a viable and practical approach to uplink and
downlink centralized processing at manageable complexity.

\appendices

\section{Gaussian Approximation}\label{app:GA}

Let $\varepsilon = (p/\tau) \Re\{ \xi_{i}(\hv,\av,\alpha)\} \sim \Cc\Nc(0,\sigma_{\varepsilon}^2)$ with $\sigma^2_\varepsilon = \sigma_\xi^2/2$.
We consider the distribution of the discrete random variable $\nu = Q_{\ZZ}(\varepsilon)$. The pmf of
$\zeta_{i}(\hv,\av,\alpha)$ is obtained by considering i.i.d. real and imaginary parts, both distributed as $\nu$.  Define the function
\begin{eqnarray}
\Phi(x) &\triangleq& \PP\Big(\varepsilon > \frac{(2x-1)}{2}\Big) - \PP\Big(\varepsilon > \frac{(2x+1)}{2}\Big)\label{eq:Qfun}\\
&=& Q\Big(\frac{(2x-1)}{2\sigma_{\varepsilon}}\Big)-Q\Big(\frac{(2x+1)}{2\sigma_{\varepsilon}}\Big)\nonumber
\end{eqnarray}
where $Q(z) = \frac{1}{\sqrt{2\pi}}\int_{z}^{\infty}\exp\Big(-\frac{t^2}{2}\Big)dt$ is the Gaussian tail function. Recall that $g$ maps the $\ZZ_{p} =\{0,1,...,p-1\}$ into the set of integers $\{0,1,...,p-1\} \subset \RR$. We define an interval $\Ic(x)$ by
\begin{equation}
\Ic(x) \triangleq [x-0.5,x+0.5].
\end{equation} The pmf of $\nu$ can be computed as
\begin{equation}
\PP(\nu = \beta) \triangleq \PP\left(\varepsilon \in \bigcup_{m \in \ZZ} \Ic(g(\beta)+pm)\right).
\end{equation} For any $\beta_{1}, \beta_{2} \neq 0$ satisfying $g(\beta_{1}) + g(\beta_{2}) = p$, we have $\PP(\nu=\beta_{1}) = \PP(\nu=\beta_{2})$, which can be immediately proved using the symmetry of Gaussian distribution (about origin):
\begin{eqnarray}
&&\PP\left(\varepsilon \in \bigcup_{m \in \ZZ} \Ic(g(\beta_{1})+pm)\right)\\
&=&\PP\left(\varepsilon \in \bigcup_{m \in \ZZ_{+}\cup\{0\}} \Ic(g(\beta_{1})+pm)\right) + \PP\left(\varepsilon \in \bigcup_{m \in \ZZ_{-}\cup\{0\}} \Ic(g(\beta_{1})-p+pm)\right)\\
&=&\PP\left(\varepsilon \in \bigcup_{m \in \ZZ_{+}\cup\{0\}} \Ic(g(\beta_{1})+pm)\right) + \PP\left(\varepsilon \in \bigcup_{m \in \ZZ_{+}\cup\{0\}} \Ic(p-g(\beta_{1})+pm)\right)\label{eq:prob}\\
&=&\PP\left(\varepsilon \in \bigcup_{m \in \ZZ_{+}\cup\{0\}} \Ic(g(\beta_{1})+pm)\right) + \PP\left(\varepsilon \in \bigcup_{m \in \ZZ_{+}\cup\{0\}} \Ic(g(\beta_{2})+pm)\right)
\end{eqnarray} where $\ZZ_{+}$ and $\ZZ_{-}$ denote the positive and negative integers, respectively. Thus, we only need to find the pmf of $\nu$ with $\nu \leq \frac{p-1}{2}$ and other probabilities are directly obtained by symmetry. Using the (\ref{eq:prob}) for $\beta \neq 0$, we can quickly compute the pmf of $\nu$ using the $\Phi(x)$ defined in (\ref{eq:Qfun}):
\begin{eqnarray}
\PP(\nu = 0) &=& \Phi(0) + 2\sum_{m \in \ZZ_{+}} \Phi(g(\beta)+pm)\label{eq:prob1}\\
\PP(\nu = \beta) &=& \sum_{m \in \ZZ_{+} \cup \{0\}} \Phi(g(\beta)+pm)+  \Phi(p-g(\beta)+pm).\label{eq:prob2}
\end{eqnarray} In fact, $\Phi(x)$ is monotonically decreasing function on $x$ and in general, quickly converges to $0$ as $x$ increases. Therefore, we only need a finite number of summations in (\ref{eq:prob1}) and (\ref{eq:prob2}) and we observed that it is enough to sum
over $m=0,1,2$ in all numerical results presented in this paper.

\section{Proof of Theorem \ref{lem:LQF}}\label{proof:FF-MAC}

Consider the FF-MAC defined by $\yv = \Qm\xv \oplus \zetav$ where $\xv=(x_{1},...,x_{K})^{\transp} \in \FF_{p^2}^{K}$ and $\yv=(y_{1},...,y_{K})^{\transp} \in \FF_{p^2}^{K}$. The capacity region is the union of the rate regions defined by \cite{Kim}
\begin{equation} \label{region}
\sum_{k \in \Sc} R_k \leq I\left (\{x_k : k \in \Sc); \yv | \{x_k : \in \Sc^c\}, q \right ), \;\;\; \forall \;\; \Sc \subseteq [1:K],
\end{equation}
over all pmfs $P_{\xv, q} = P_q \prod_{k=1}^K P_{x_k |q}$. Since for any fixed such pmf the region (\ref{region}) is a polymatroid, the maximum
sum rate achieved on the dominant face $\sum_{k=1}^K R_k = I(\xv; \yv | q)$.  Since the expectation of the maxima is larger or equal to the maximum
of the expectation,  we have that $\sum_{k = 1}^L  R_{k} \leq \max_{P_{\xv,q}} I(\xv;\yv|q)$.
Finally, since $q \rightarrow \xv \rightarrow \yv$, we have:
\begin{equation}
I(\xv;\yv|q) \leq I(\xv, q; \yv) = I(\xv; \yv) + I(q;\yv|\xv) = I(\xv;\yv),
\end{equation}
showing that time-sharing is not needed for the maximum sum rate. Since $\Qm$ is full rank, uniform i.i.d. inputs $\FF_{p^2}$ achieve
\begin{eqnarray}
I(\xv;\yv) &=& H(\yv) - H(\yv|\xv) = H(\yv) - H(\zetav)\\
&\leq& 2K\log{p} - H(\zetav).
\end{eqnarray}
Finally, since $\sum_{k=1}^{K} H(\zeta_{k}) \geq H(\zeta_{1},...,\zeta_{K})$,
we conclude that the sum rate in (\ref{sucacazzi}) is achievable.

\section{Proof of Theorem \ref{lem:sumRQCoF}}\label{proof:FF-BC}

We consider the FF-BC defined by $\yv = \tilde{\Qm}\xv \oplus \zetav$, where $\xv=(x_{1},...,x_{L})^{\transp}$ and $\yv=(y_{1},...,y_{L})^{\transp}$.
Since $\Qm$ is invertible, letting $\xv = \tilde{\Qm}^{-1}\vv$ for $\vv \in \FF_{p^2}^L$ yields the orthogonal BC
$\yv = \vv \oplus \zetav$. The achievable sum rate for this decoupled channel is obviously given by the sum of of the capacities
of each individual additive-noise  finite-field channel, irrespectively of the statistical dependence across the noise components. Each $\ell$-th channel
capacity is achieved by letting $\vv$ i.i.d. with uniformly distributed components over $\FF_{p^2}$. It follows that the sum rate
(\ref{straminchia}) is achievable. In order to show that this is in fact the sum-capacity of the FF-BC, we notice that
a trivial upper-bound on the broadcast capacity region is given by \cite{Kim}:
\begin{equation}
R_{\ell} \leq \max_{P_{\xv}} \; I(\xv;y_{\ell}) \;\; \mbox{for} \;\; \ell=1,\ldots,L.
\end{equation}
This is the capacity of the single-user channel with transition probability $P_{y_{\ell}|\xv}$.
Due to the additive noise nature of the channel, we have $I(\xv;y_{\ell}) = H(y_{\ell}) - H(\zeta_{\ell})$.
Furthermore, $H(y_{\ell}) \leq 2 \log{p}$ and this upper bound is achieved by letting $\xv \sim $Uniform over $\FF_{p^2}^L$.
Summing over $\ell$ we find that the upper bound on the sum capacity coincides with (\ref{straminchia}).

\section*{Acknowledgment}

The authors would like to thank Bobak Nazer and Or Ordentlich for sharing their encouraging feedback and
insightful observations.


\clearpage

\end{document}